\newtheorem{lemma}{Lemma}
\newtheorem{definition}{Definition}
\newtheorem{proposition}{Proposition}
\newtheorem{remark}{Remark}
\newtheorem{corollary}{Corollary}
\newtheorem{theorem}{Theorem}
\newtheorem{assump}{Assumption}
\numberwithin{equation}{section}
\numberwithin{figure}{section}
\numberwithin{theorem}{section}
\numberwithin{lemma}{section}
\numberwithin{proposition}{section}
\numberwithin{corollary}{section}
\numberwithin{remark}{section}
\numberwithin{definition}{section}
\numberwithin{assump}{section}
\def\eu{\ensuremath{\mathrm{e}}}
\def\iu{\ensuremath{\mathrm{i}}}
\def\du{\ensuremath{\mathrm{d}}}
\def\wD{\ensuremath{\widetilde{D}}}
\def\wY{\ensuremath{\widetilde{Y}}}
\def\wV{\ensuremath{\widetilde{V}}}
\def\fu{\ensuremath{\mathrm{full}}}
\def\wI{\ensuremath{\widetilde{I}}}
\def\wcV{\ensuremath{\widetilde{\mathcal{V}}}}
\def\wcW{\ensuremath{\widetilde{\mathcal{W}}}}
\def\cV{\ensuremath{\mathcal{V}}}
\def\cW{\ensuremath{\mathcal{W}}}
\begin{document}

\title[TB Approach for Computing Guided Modes]{A Tight-binding Approach for Computing Subwavelength Guided Modes in Crystals with Line Defects}

\author{Habib Ammari}
\address{Department of Mathematics, ETH Z\"{u}rich, Rämistrasse 101, CH-8092 Z\"{u}rich, Switzerland}
\email{habib.ammari@math.ethz.ch}

\author{Erik Orvehed Hiltunen}
\address{Department of Mathematics, University of Oslo, 
	Moltke Moes vei 35, 0851 Oslo, Norway}
\email{erikhilt@math.uio.no}

\author{Ping Liu}
\address{School of Mathematical Sciences, Zhejiang University, 310027 Hangzhou, China and Institute of Fundamental and Transdisciplinary Research, Zhejiang University, 310027 Hangzhou, China}
\email{pingliu@zju.edu.cn}

\author{Borui Miao}
\address{Yau Mathematical Sciences Center, Tsinghua University, 100084 Beijing, China}
\email{mbr@mail.tsinghua.edu.cn}

\author{Yi Zhu}
\address{Yau Mathematical Sciences Center, Tsinghua University, 100084 Beijing, China and Beijing Institute of Mathematical Sciences and Applications, 101408 Beijing, China}
\email{yizhu@tsinghua.edu.cn}

\subjclass[2020]{Primary 65N25, 35B34, 35C20; Secondary 35J05, 65N38.}

\date{}

\keywords{Metamaterials, subwavelength waveguides, subwavelength frequencies, capacitance matrix, tight-binding approximation.}

\begin{abstract}
In this paper, we develop an accurate and efficient framework for computing subwavelength guided modes in high-contrast periodic media with line defects, based on a tight-binding approximation. 
The physical problem is formulated as an eigenvalue problem for the Helmholtz equation with high-contrast parameters. By employing layer potential theory on unbounded domains, we characterize the subwavelength frequencies via the quasi-periodic capacitance matrix.
Our main contribution is the proof of exponential decay of the off-diagonal elements of the associated full and quasi-periodic capacitance matrices. These decay properties provide error bounds for the banded approximation of the capacitance matrices, thereby enabling a tight-binding approach for computing the spectral properties of subwavelength resonators with non-compact defects. Various numerical experiments are presented to validate the theoretical results, including applications to topological interface modes. 

\end{abstract}

\maketitle

\section{Introduction} 
\subsection{Motivation}
The study of line defects in subwavelength bandgap materials has had a major impact on various technological applications. The capability of subwavelength devices to guide waves significantly below the diffraction limit is pivotal to developing the next generation of communication, biomedical, quantum, and sensing technologies \cite{integrated-photonics,defect1,defect2,defect3,phononic1}. These subwavelength structures are often referred to as \emph{metamaterials}. Similar attention has been devoted to photonic and phononic crystals, where wave guiding phenomena induced by line defects have been investigated both numerically and analytically \cite{santosa,ong1,ong2,brown1,brown2,brown3,brown4,fliss1,fliss2,fliss3,figotin}. However, in these classical settings, the guiding phenomenon typically occurs at scales comparable to the unit cell size of the microstructure.

In the subwavelength regime, as demonstrated in \cite{Ammari_2017}, bandgaps corresponding to wavelengths much larger than the unit cell of the microstructure can be found in a broad class of high-contrast periodic resonator systems. In this context, we consider the high-contrast parameter $\delta$ to be small. For example, in the case of air bubbles in water, $\delta$ is of order $10^{-3}$. By introducing defects into the periodic structure, resonant frequencies may emerge within the subwavelength bandgap of the unperturbed structure. These resonances, referred to as subwavelength bandgap frequencies, correspond to modes that are spatially localized near the defect site. In \cite{jems2021}, it is shown that a line defect creates a \emph{band} of subwavelength bandgap frequencies corresponding to the modes propagating along the defect line. More precisely, using the fictitious source superposition method originally introduced in \cite{effective-source}, the authors in \cite{jems2021} model the defect sites as unperturbed resonators with additional fictitious monopole and dipole sources. The size of the perturbation needed to ensure that the entire defect band lies inside the bandgap is explicitly quantified. Moreover, it is shown in \cite{jems2021} that the defect band is nowhere flat and therefore supports propagating modes along the line defect. \par 

From a computational standpoint, defect bands can be evaluated by several numerical approaches, including finite element methods (FEM) \cite{Guo2021,Guo2021a}, boundary element methods (BEM) \cite{Barnett2010} and schemes based on computing characteristic values of families of quasi-periodic operators. In the latter approach, the operators are discretized using truncated Fourier bases, and the corresponding characteristic values are calculated by applying root-finding algorithms, such as Muller's method, to the determinants of the truncated operators \cite{Ammari2018}. However, standard FEM or BEM discretizations in the high-contrast subwavelength regime can be computationally demanding. The resulting linear systems tend to become severely ill-conditioned due to the high-contrast parameters, requiring sophisticated design and very fine meshes to achieve accurate results. \par 

Given the computational burden associated with direct numerical discretizations in the subwavelength regime, it is desirable to employ an effective description based on the capacitance matrix formulation \cite{paa,cbms,Ammari2018}. This approach provides a discrete approximation of the subwavelength frequency of the continuous partial differential equation model in the high contrast limit $\delta \ll 1$. However, the resulting capacitance matrix is typically dense, which poses obstacles to both deriving explicit analytical solutions and developing efficient numerical schemes. Consequently, to turn this matrix formulation into a practical computational tool, it is essential to analyze the decay properties of the matrix elements to justify efficient truncation schemes.

In this paper, we establish a computationally efficient and accurate approach for determining subwavelength guided modes in crystals with line defects. Our strategy relies on a quasi-periodic capacitance matrix formulation derived from layer potential theory on unbounded domains. By establishing the exponential decay of off-diagonal elements, we provide a quantitative justification for the finite-banded approximation of the capacitance matrices to the continuous Helmholtz eigenvalue problem on an unbounded domain. This banded matrix model, which is referred to as a tight-binding model, is widely used to design metamaterials and analyze wave transport and localization properties; see, for example, \cite{TB0,TB1,TB2,TB3,TB4}.
This tight-binding approach for computing guided modes generalizes the analysis first obtained in \cite{ammari2025analysisnonlinearresonancesresonator} for the capacitance matrix associated with the unperturbed crystal, giving quantitative analysis of local decay rates. Moreover, it provides a numerical scheme with errors uniformly controlled with respect to the contrast parameter, leading to an asymptotic-preserving scheme \cite{Jin2022}, thereby offering a robust tool for analyzing wave localization and topological interface modes in the subwavelength regime.

\subsection{Main contributions}
The main results of this paper can be summarized as follows.
\begin{enumerate}
	\item \textbf{Exponential decay of off-diagonal elements of the full and quasi-periodic capacitance matrices.}
	In \Cref{thm:ExpConv} and \Cref{thm:ExpConv_quasi}, we prove that the off-diagonal elements of the full and quasi-periodic capacitance matrices decay exponentially away from the diagonal. These results are proved by the truncation method; see \Cref{sec:Decay_periodic} and \Cref{sec:estimate_full}. In addition, the exponential decay provides a quantitative justification for truncating the capacitance matrix with explicit error control and in particular ensures a nearest-neighbor approximation. This establishes the validity of the tight-binding approximation to the subwavelength systems. 
	
	\item \textbf{Asymptotic expansion of subwavelength frequencies on unbounded domains.}
	For $\alpha\neq 0$, we show in \Cref{sec:Subwavelength_capa} that the subwavelength frequencies corresponding to the point spectrum satisfy the following asymptotic formula:
	\[
	\omega(\alpha)=\mu_{_1}\sqrt{\delta\,\lambda^\alpha}+\mathcal{O}(\delta),
	\]
	where $\lambda^\alpha$ is a generalized eigenvalue of the quasi-periodic capacitance matrix; see \eqref{eqn:genEigProb}. This asymptotic formula is proved using layer potential theory on unbounded domains; see \Cref{apsec:invert} and \Cref{apsec:Invert_NP}. It gives a rigorous connection between the continuous spectral problem and the effective discrete model. 
\end{enumerate}
From a computational viewpoint, these results provide a \emph{rigorous and quantitative relation} between the continuous high-contrast Helmholtz eigenvalue problem and a finite-banded generalized eigenvalue problem of matrix, thereby enabling the efficient computation of defect bands and guided modes.

\subsection{Organization of the paper}
The paper is organized as follows. In \Cref{sec:setting} we introduce the geometric setting of the resonators, the subwavelength problem, and the definition of the full capacitance matrix and the corresponding quasi-periodic capacitance matrix. In \Cref{sec:Decay_periodic} and \Cref{sec:estimate_full}, we prove the exponential decay of off-diagonal elements of the full capacitance matrix. In \Cref{sec:Subwavelength_capa} we further derive the capacitance matrix formulation for the subwavelength problem. Numerical experiments are provided in \Cref{sec:numerics} to illustrate the theoretical results and validate the nearest-neighbor approximation. We also demonstrate the applicability of our approach by studying topological interface modes. Concluding remarks and potential extensions of the method are discussed in \Cref{sec:conclude}. Finally, the appendices contain necessary preliminaries and some supporting theoretical results.

\section{Problem setting and preliminaries}\label{sec:setting}
We begin by introducing the notation used to describe the waveguide system mathematically. Let $ v_{_1}, v_{_2}\in \mathbb{R}^2 $ be linearly independent vectors that define the lattice points $ \Lambda\triangleq\{ mv_{_1}+nv_{_2}:m,n\in \mathbb{Z} \} $. 
The $(m,n)$\textsuperscript{th}  cell $ Y_{m,n}$ is defined by $Y_{m,n}\triangleq \{ (m+s)v_{_1} + (n+t)v_{_2}:s,t \in [0,1) \} $. The left panel of \Cref{fig:CellExample} illustrates the geometry of the waveguide.
\begin{figure}[htbp] 
	\centering
	\includegraphics[width=0.45\textwidth]{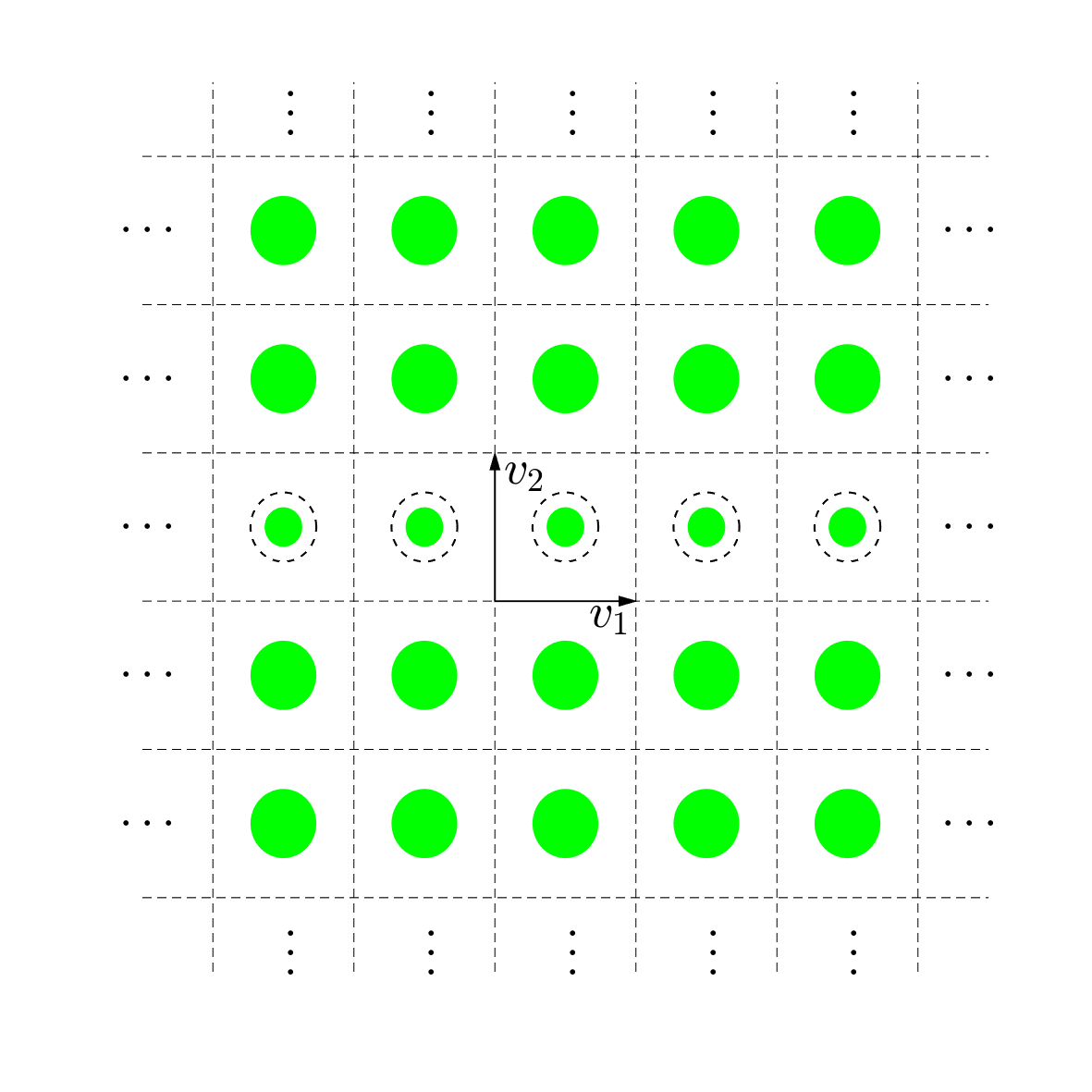}
	\hspace{-0.3cm}
	\includegraphics[width=0.45\textwidth]{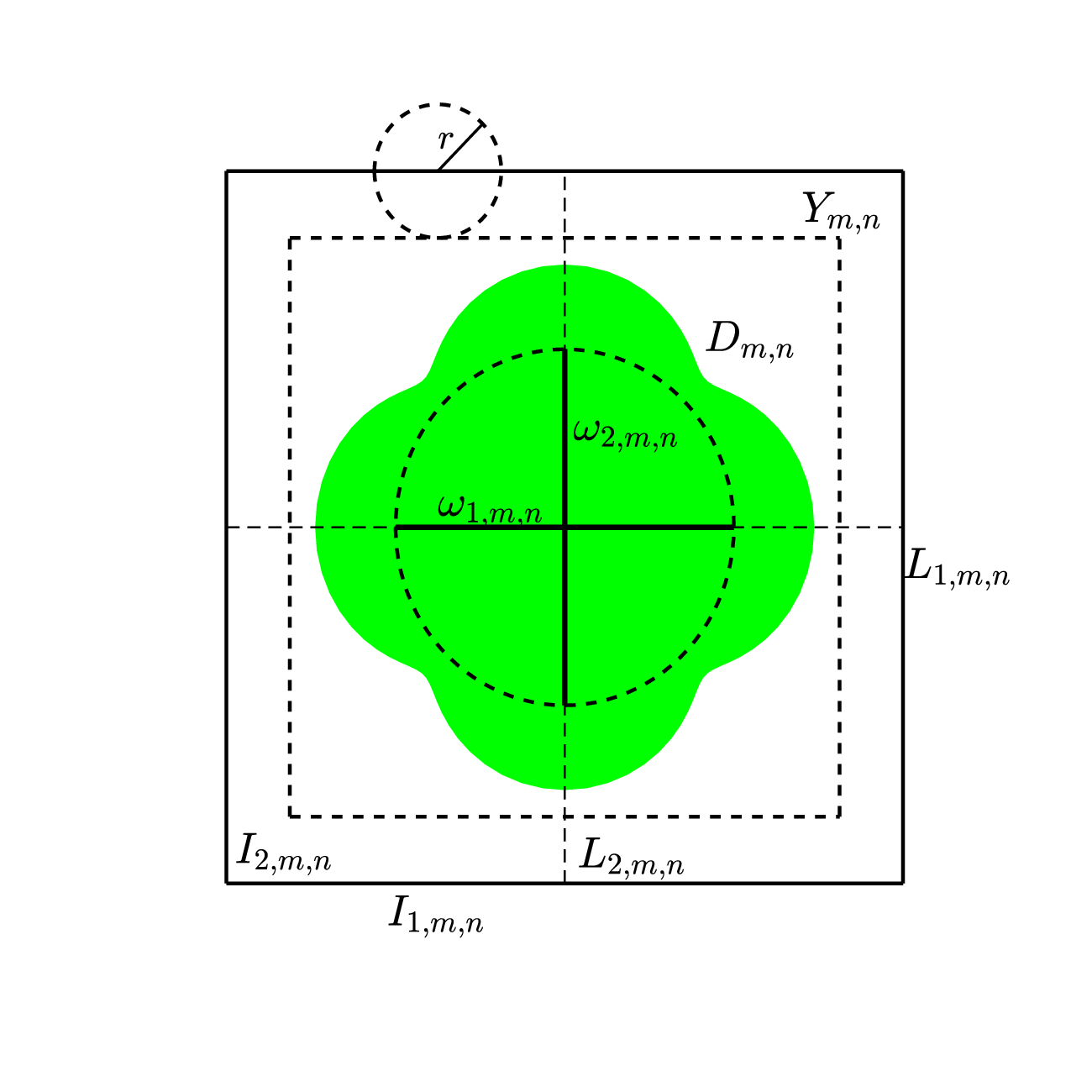}
	\caption{Left panel: an example of a waveguide system (adapted from \cite{jems2021}) in the case of a square lattice; right panel: illustration of a single cell $Y_{m,n}$.}
	\label{fig:CellExample}
\end{figure}

The resonant inclusions occupy a region denoted by $D_{\fu}\subset \mathbb{R}^2$, subject to the following assumptions.
\begin{assump}
	The inclusions $ D_{\fu}\subset  \mathbb{R}^2 $ satisfy the following conditions:
	\begin{itemize}
		\item[(1)] For any $(m,n)$\textsuperscript{th}  cell $ Y_{m,n} $, the inclusion within the cell, $ D_{m,n} \triangleq D_{\fu} \cap Y_{m,n} $, is a nonempty open set with smooth boundary. The set $ D_{m,n} $ is connected and satisfies $ D_{m,n}\Subset  Y_{m,n} $ for all $ m,n\in \mathbb{Z} $;
		\item[(2)] The overall inclusion $ D_{\fu} $ is $ v_{_1} $-periodic, that is,
		$ D_{\fu} + mv_{_1} = D_{\fu} $ for all $ m\in \mathbb{Z} $.
	\end{itemize}
\end{assump}

The bottom and left-hand side line segments of $ \partial Y_{m,n} $ are defined, respectively, by
\[ I_{1,m,n}\triangleq \{ (m+s)v_{_1}+nv_{_2}:s\in[0,1) \},\quad I_{2,m,n}\triangleq \{ mv_{_1}+(n+t)v_{_2}:t\in[0,1) \}. \]
Without loss of generality, we assume that the length of $ I_{1,m,n} $ is $ 1 $, while the length of $I_{2,m,n}$ is $a$; see the right panel of \Cref{fig:CellExample}.\par 

Owing to the periodicity of the waveguide in the $v_{_1}$-direction, we define the periodic cell $ \wY_m \triangleq \cup_{n\in\mathbb{Z}} Y_{m,n} $. Clearly, the periodic cells $ \{\wY_m\}_{m\in \mathbb{Z}} $ are isometrically equivalent, with their boundaries given by $\wI_{2,m}\triangleq\cup_{n\in\mathbb{Z}} I_{2,m,n} $ and $\wI_{2,m+1}\triangleq\cup_{n\in\mathbb{Z}} I_{2,m+1,n} $.
When studying the periodic problem, we take $ m=0 $ as a representative example. Furthermore, we denote the inclusions within the periodic cell $ \wY_m $ by $ \wD_m \triangleq D_{\fu} \cap \wY_m $. In subsequent sections, we simplify the notation by omitting the subscripts in $\wY_0$ and $\wD_0$, denoting them simply as $\wY$ and $\wD$, respectively. \par 

The following additional assumptions are required to establish the decay result for the coefficients of the capacitance matrix.
\begin{assump}\label{def:decaycond2}
	The inclusion in each cell $ D_{m,n} $ satisfies the following conditions for all $ m,n\in \mathbb{Z} $:
	\begin{itemize}
		\item[(1)]  There exists $ r>0 $ such that $ B(y,r)\cap D_{m,n} = \emptyset $ for every $ y\in \partial Y_{m,n} $, where $ B(y,r) $ denotes the ball of radius $ r $ centered at $ y $; 
		\item[(2)] There exist balls $B(x_{m,n},r_{m,n})$ such that $ B(x_{m,n},r_{m,n})\Subset  D_{m,n} $ and 
		\[ \min_{m,n\in \mathbb{Z}} r_{m,n}\ge c>0 . \]
	\end{itemize}
\end{assump}

In what follows, let $L_{1,m,n},L_{2,m,n}$ denote the lines that pass through the center $x_{m,n}$ of $B(x_{m,n},r_{m,n})$ and that are parallel to $I_{1,m,n} $ and $ I_{2,m,n}$, respectively. We define 
\[ \omega_{i,m,n} \triangleq L_{i,m,n}\cap B(x_{m,n},r_{m,n}),\quad i = 1,2, \] 
as nonempty open line segments contained in the ball $B(x_{m,n},r_{m,n})$, as illustrated in the right panel of \Cref{fig:CellExample}. One can directly verify that
\[ |\omega_{i,m,n}|\ge c>0,\quad m,n\in \mathbb{Z},i=1,2. \]
Since each inclusion $D_{m,n}$ lies within its corresponding cell $ Y_{m,n}$, there exists a positive constant $C$ such that
\[ 0<r<d(x_{m,n},I_{i,m,n})<C, \quad \forall\; m,n\in \mathbb{Z},i=1,2. \]

We let $\mathbb{H}^{s}(\partial D_{m,n})$ denote the standard Sobolev space of order $s$. Throughout this paper, we define the Hilbert space $\mathbb{H}^s(\partial \wD)$ on $ \partial \wD$ as 
\begin{gather}
	\begin{aligned}
		\mathbb{H}^s(\partial \wD) \triangleq \Big\{ \phi &: \phi|_{\partial D_{0,n}}\in \mathbb{H}^s(\partial D_{0,n}) , \\ &\Vert \phi  \Vert_{\mathbb{H}^{s}(\partial\wD)} \triangleq \Big( \sum_{n\in \mathbb{Z}} \Vert \phi|_{\partial D_{0,n}} \Vert^2_{\mathbb{H}^s(\partial D_{0,n})} \Big)^{\frac{1}{2}} <+\infty\Big\}.
	\end{aligned}
\end{gather} 
The duality pairing of $\mathbb{H}^{-\frac12}(\partial \wD)$ and $\mathbb{H}^{\frac12}(\partial \wD)$  is denoted by $\langle\cdot,\cdot\rangle$. 

\subsection{Wave guiding in the subwavelength regime}
We consider the following eigenvalue problem in the waveguide system:
\begin{equation}\label{eqn:OriginalProb}
	\kappa(x)\nabla \cdot \left( \frac{1}{\rho(x)} \nabla u \right) =-\omega^2 u,\quad  u\in \mathbb{L}^2(\mathbb{R}^2 \setminus  \partial D_\fu).
\end{equation}
Here, the density and bulk modulus functions $ \rho(x),\kappa(x) $ are given by 
\begin{equation}
	\rho(x) \triangleq \rho_{_0}\chi_{_{\mathbb{R}^2\setminus D_{\fu}}}(x) + \rho_{_1}\chi_{_{D_{\fu}}}(x),\quad \kappa(x)\triangleq \kappa_{_0}\chi_{_{{\mathbb{R}^2\setminus D_{\fu}}}}(x) + \kappa_{_1}\chi_{_{D_{\fu}}}(x),
\end{equation}
where $ \chi_{_{D_{\fu}}} $ and $ \chi_{_{\mathbb{R}^2\setminus D_{\fu}}} $ denote the characteristic functions of $ D_{\fu} $ and $ \mathbb{R}^2\setminus D_{\fu} $, respectively. Here, $ \rho_{_0},\kappa_{_0} $ and $ \rho_{_1},\kappa_{_1} $ represent the densities and bulk moduli outside and inside the inclusions $D_{\fu}$, respectively. We also introduce the wave speeds $ \mu_{_0},\mu_{_1} $  inside and outside the inclusions as
\[ \mu_{_0}\triangleq  \sqrt{\frac{\kappa_{_0}}{\rho_{_0}}},\quad \mu_{_1}\triangleq\sqrt{\frac{\kappa_{_1}}{\rho_{_1}}},\]
and assume that the contrast parameter $\delta$ satisfies
\[\delta \triangleq \frac{\rho_{_1}}{\rho_{_0}} \ll 1. \]

We now consider Bloch waves in the $ v_{_1} $ direction, which satisfy:
\begin{align*}
	\kappa(x)\nabla \cdot \left( \frac{1}{\rho(x)} \nabla u^\alpha \right) &=-\omega^2 u^\alpha,\quad  u^\alpha\in \mathbb{L}^2(\wY \setminus  \partial \wD),\\
	u^{\alpha}(x+mv_{_1}) &= \eu^{\iu \alpha m}u^{\alpha}(x),\quad x\in \wY,m\in \mathbb{Z},
\end{align*}
where the parameter $ \alpha\in [-\pi,\pi) $ denotes the quasi-periodicity. This is equivalent to the following system of partial differential equations:
\begin{gather}\label{eqn:ProbFormq}
	\left\{\begin{aligned}
		&\Delta u^{\alpha} + \frac{\omega^2}{\mu_{_0}^2} u^{\alpha} = 0,\quad x\in \wY\setminus \overline{\wD},\\
		&\Delta u^{\alpha} +\frac{\omega^2}{\mu_{_1}^2} u^{\alpha} = 0,\quad x\in \wD, \\
		&u^{\alpha}|_{+}-u^{\alpha}|_{-}=0,\quad y\in \partial\wD,\\
		&\delta\left.\nu_y\cdot \nabla u^{\alpha} \right|_{+} - \left.\nu_y\cdot \nabla u^{\alpha}\right|_{-}=0,\quad y\in \partial \wD,\\
		&u^{\alpha}(x+mv_{_1}) = \eu^{\iu \alpha m}u^{\alpha}(x),\quad x\in \wY,m\in \mathbb{Z}.
	\end{aligned}\right.
\end{gather}   

In this paper, we focus on subwavelength frequencies (for $\alpha \neq 0$) that satisfy $\omega\sim\mathcal{O}(\sqrt{\delta})$ when $\delta \ll 1$. Furthermore, we restrict our attention to subwavelength frequencies within the point spectrum, corresponding to $\mathbb{L}^2$-localized eigenmodes:
\begin{equation*}
	\quad u^\alpha \in \mathbb{L}^2(\wY\setminus \partial \wD).
\end{equation*} 
The solution can be represented using single-layer potentials:
\begin{gather}\label{eqn:RepresentationLayer}
	u^\alpha(x) = \left\{\begin{aligned}
		&\mathcal{S}^{\alpha,k_1}_{\wD}[\phi](x),\quad x\in \wD,\\
		&\mathcal{S}^{\alpha,k_0}_{\wD}[\psi](x),\quad x\in \wY\setminus \overline{\wD},
	\end{aligned}\right.
\end{gather}
for some densities $\phi,\ \psi \in \mathbb{L}^2(\partial \wD)$, where the wave numbers are defined as
\[ k_0 \triangleq \frac{\omega}{\mu_{_0}},\quad k_1 \triangleq \frac{\omega}{\mu_{_1}}. \]
The single-layer potential $\mathcal{S}^{\alpha,k}_{\wD}$ is defined in \eqref{eqn:SingleLayer}. The densities $\phi$ and $\psi $ satisfy the following system of boundary integral equations:
\begin{gather}\label{eqn:BoundaryMatch}
	\left\{ \begin{aligned}
		&\mathcal{S}^{\alpha,k_1}_{\wD}[\phi](y) = \mathcal{S}_{\wD}^{\alpha,k_0}[\psi](y),\\
		&-\frac{\phi(y)}{2} + (\mathcal{K}^{-\alpha,k_1}_{\wD})^{\ast}[\phi](y) = \delta\left( \frac{\psi(y)}{2} + (\mathcal{K}^{-\alpha,k_0}_{\wD})^{\ast}[\psi](y) \right),
	\end{aligned}\right.\quad y\in \partial \wD. 
\end{gather}
Here, $(\mathcal{K}^{-\alpha,\omega}_{\wD})^{\ast} $ denotes the Neumann--Poincar\'e operator. The asymptotic behavior of the subwavelength frequency associated with the point spectrum of \eqref{eqn:ProbFormq} for $ \alpha \neq 0 $ will be derived in \Cref{sec:Subwavelength_capa}. As a preliminary step, we define the quasi-periodic harmonic functions $ \{U_{q}^{\alpha}\}_{q\in\mathbb{Z}} $ as the solutions to the boundary value problem:
\begin{gather}\label{eqn:Harfuncp}
	\left\{\begin{aligned}
		&\Delta U^{\alpha}_{q}(x) = 0,\quad x \in \wY\setminus\overline{\wD},\\
		&U^{\alpha}_{q}(y)  = \delta_{nq},\quad y \in \partial D_{0,n},\\
		&U^{\alpha}_{q}(x+mv_{_1}) = \eu^{\iu \alpha m}U_{q}^{\alpha}(x),\quad x\in \wY\setminus\overline{\wD},m\in \mathbb{Z}. 
	\end{aligned}\right.
\end{gather}

As shown in \cite{paa,cbms}, the quasi-periodic capacitance matrix is a powerful tool for characterizing the subwavelength resonant modes of a periodic system of high-contrast resonators. It provides a discrete approximation to the low-frequency part of the spectrum of the continuous partial differential equation model, valid in the high-contrast asymptotic limit $\delta \rightarrow 0$.

The quasi-periodic capacitance matrix $ \widehat{\mathbf{C}}^{\alpha} = \{ \widehat{C}^{\alpha}_{n,q} \}_{n,q\in \mathbb{Z}}$ is defined by
\begin{equation}\label{eqn:PCapaMat}
	\widehat{C}_{n,q}^{\alpha} \triangleq -\int_{\partial D_{0,n}} \nu_{y}\cdot \left.\nabla U_{q}^{\alpha}(y)\right|_{+}\du\sigma(y).
\end{equation}
As will be detailed in \Cref{sec:Subwavelength_capa}, the asymptotic behavior of the subwavelength frequency associated with the point spectrum is determined by the point spectrum of the quasi-periodic capacitance matrix. In what follows, we refer to $\widehat{\mathbf{C}}^{0}$ as the periodic capacitance matrix. \par 
To investigate the infinite quasi-periodic capacitance matrix $ \widehat{\mathbf{C}}^{\alpha}$, in particular its decay property, we first introduce the full capacitance matrix $\mathfrak{C}$ corresponding to \eqref{eqn:OriginalProb}. Let the positive harmonic functions $ \{U_{p,q}\}_{p,q\in\mathbb{Z}} $ be the solutions to the following boundary value problem:
\begin{gather}\label{eqn:Harfunc}
	\left\{\begin{aligned}
		&\Delta U_{p,q}(x) = 0,\quad x \in \mathbb{R}^2\setminus D_{\fu},\\
		&U_{p,q}(y)  = \delta_{pm}\delta_{qn},\quad y \in \partial D_{m,n}.\\
	\end{aligned}\right.
\end{gather}
The full capacitance matrix $\mathfrak{C} \triangleq \{ C_{m,n}^{p,q} \}_{m,n\in \mathbb{Z}}^{p,q\in \mathbb{Z}}$ is then given by 
\begin{equation}\label{eqn:FullCapaMat}
	C_{m,n}^{p,q} \triangleq -\int_{\partial D_{m,n}} \nu_y\cdot \left.\nabla U_{p,q}(y)\right|_{+}\du\sigma(y).
\end{equation}

Our goal is to prove the following \Cref{thm:ExpConv} and \Cref{thm:ExpConv_quasi}, which establish the exponential decay of the full and quasi-periodic capacitance matrices, respectively. In particular, \Cref{thm:ExpConv_quasi} can be derived from \Cref{thm:ExpConv}, which will be proved in the subsequent sections.
\begin{theorem}\label{thm:ExpConv}
	For all $ m,n,p,q\in \mathbb{Z} $, the elements of the full capacitance matrix satisfy the following estimate:
	\begin{equation}\label{eqn:ExpConvergence}
		|C_{m,n}^{p,q}|\le C\min\{ \rho^{|m-p|},\rho^{|n-q|} \},\quad (m,n)\neq (p,q),
	\end{equation}  
	for some positive constants $ C,\rho $ with $ \rho\in(0,1) $. 
\end{theorem}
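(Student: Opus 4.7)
The plan is to establish exponential $L^2$ decay of the energy of the harmonic functions $U_{p,q}$ on cells far from $(p,q)$ via a Widman-style hole-filling iteration, and then upgrade this to a pointwise estimate on the capacitance entries.

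The starting point is the symmetric representation
\begin{equation*}
C_{m,n}^{p,q}=\int_{\mathbb{R}^{2}\setminus D_{\fu}}\nabla U_{m,n}\cdot\nabla U_{p,q}\,\du x,
\end{equation*}
which follows from integration by parts together with $U_{m,n}|_{\partial D_{m',n'}}=\delta_{mm'}\delta_{nn'}$ and $\Delta U_{p,q}=0$; using $U_{p,q}$ itself as a competitor against any compactly supported Lipschitz function with the same boundary data also yields a universal upper bound on $\|\nabla U_{p,q}\|_{L^2}$. Two uniform geometric ingredients from \Cref{def:decaycond2} drive the argument: the balls $B(x_{m,n},r_{m,n})\Subset D_{m,n}$, on which $U_{p,q}$ vanishes for $(m,n)\neq(p,q)$, yield a cell-wise Poincar\'e inequality with a constant independent of $(m,n)$, while the corridor of width $r>0$ separating each $D_{m,n}$ from $\partial Y_{m,n}$ accommodates cutoff functions with uniformly bounded gradients.

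For the decay in the $v_{_1}$-direction, let $F(k)$ denote the squared $L^2$-norm of $\nabla U_{p,q}$ on the union of cells with column index at least $p+k$. A Caccioppoli estimate with a cutoff supported on the single column between $p+k-1$ and $p+k$ gives
\begin{equation*}
F(k)\le C\int_{\text{transition column}}|U_{p,q}|^2,
\end{equation*}
and the uniform Poincar\'e inequality, applied cell by cell to the right-hand side (using $U_{p,q}|_{\partial D_{m,n}}=0$ throughout that column), bounds the right-hand side by $C_{P}\bigl(F(k-1)-F(k)\bigr)$. Rearranging produces $F(k)\le \rho\,F(k-1)$ for some $\rho\in(0,1)$, and iteration yields $F(k)\le C\rho^{k}$. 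The same argument with half-rows of cells delivers the analogous decay in the $v_{_2}$-direction.

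Finally, interior elliptic regularity on a neighborhood of $\partial D_{m,n}$ of uniform size (which remains at distance $\geq r$ from the rest of $D_{\fu}$ by Assumption 2(1)) upgrades the $L^2$ decay of $\nabla U_{p,q}$ to an $L^\infty$ decay with the same exponential rate, so that
\begin{equation*}
|C_{m,n}^{p,q}|\le |\partial D_{m,n}|\,\|\nabla U_{p,q}\|_{L^\infty(\partial D_{m,n})}\le C\rho^{|m-p|},
\end{equation*}
and symmetrically $|C_{m,n}^{p,q}|\le C\rho^{|n-q|}$. The main obstacle I anticipate is enforcing \emph{uniform} constants across all cells — Poincar\'e, Caccioppoli, and the interior-regularity transfer from $L^2$ to $L^\infty$ must all be independent of $(m,n)$, and the $L^\infty$ upgrade must not degrade the geometric ratio $\rho$; this is precisely what \Cref{def:decaycond2} is designed to guarantee, and spelling it out carefully is the most delicate part of the proof.
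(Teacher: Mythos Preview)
Your approach is correct and takes a genuinely different route from the paper. The paper never touches energy methods: instead it reduces the claim (via \Cref{lem:SimpleForm_estimate}) to showing that the $L^1$-integrals of $U^0_q$ and $U_{p,q}$ over successive cell edges $I_{1,0,n}$ and $\wI_{2,m}$ decay geometrically, and proves this by an iterative comparison argument (the ``truncation method'' of \Cref{sec:Decay_periodic} and \Cref{sec:estimate_full}). At each step the true solution is dominated, via the maximum principle and the positivity of $U_{p,q}$, by an explicit Poisson extension on a half-space \emph{without} inclusions; cutting that extension to zero on the next inclusion and re-solving yields a new dominating function, and an auxiliary harmonic problem (\Cref{thm:DecayPer}, \Cref{thm:DecayFull}) extracts the contraction factor for the boundary integral. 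Harnack's inequality---not Caccioppoli---is what converts $L^1$ control of the solution into control of the normal derivative. Your hole-filling argument is more standard, does not rely on positivity, and is arguably more robust; the paper's route, in exchange, makes the decay rate somewhat more explicit (essentially $\rho=\sup_n\|1-\widetilde V_n\|_{L^\infty(I_n)}$) and avoids any appeal to elliptic regularity theory. One minor imprecision in your sketch: the final ``interior regularity'' step near $\partial D_{m,n}$ is really a boundary estimate (zero Dirichlet data there), and uniform constants would require uniform smoothness of the $\partial D_{m,n}$; you can sidestep this entirely by shifting the flux integral defining $C_{m,n}^{p,q}$ from $\partial D_{m,n}$ to a fixed-shape contour sitting in the corridor of width~$r$ guaranteed by \Cref{def:decaycond2}, where genuine interior estimates apply with constants depending only on $r$.
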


The following estimate for the quasi-periodic capacitance matrix is a direct consequence. 
\begin{theorem}\label{thm:ExpConv_quasi}
	For $\alpha\in [-\pi,\pi)$ and $ n,q\in\mathbb{Z} $, the elements of the quasi-periodic capacitance matrix $\mathbf{C}^\alpha$ satisfy the following estimate: 
	\begin{equation}\label{eqn:ExpConvergence_quasi}
		|\widehat{C}^{\alpha}_{n,q}| \le C\widetilde{\rho}^{|n-q|},\quad n\neq q,
	\end{equation}
	for some constants $ \widetilde{\rho}\in(0,1)  $ and $C>0$. 
	
\end{theorem}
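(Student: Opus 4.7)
The plan is to express each entry $\widehat{C}^{\alpha}_{n,q}$ of the quasi-periodic capacitance matrix as a Bloch sum of entries of the full capacitance matrix $\mathfrak{C}$, and then use the decay of \Cref{thm:ExpConv} to control the resulting series. The starting point is the identity
\begin{equation*}
U^{\alpha}_{q}(x) \;=\; \sum_{m\in\mathbb{Z}} \eu^{\iu\alpha m}\, U_{m,q}(x), \qquad x\in\wY\setminus\overline{\wD},
\end{equation*}
which I would establish from the boundary value problems \eqref{eqn:Harfuncp} and \eqref{eqn:Harfunc}. The right-hand side is formally $\alpha$-quasi-periodic in the $v_{_1}$-direction because $U_{m,q}(x+v_{_1}) = U_{m-1,q}(x)$ by the $v_{_1}$-periodicity of $D_{\fu}$, and on $\partial D_{0,n}$ its boundary value is $\sum_{m}\eu^{\iu\alpha m}\delta_{m,0}\delta_{n,q}=\delta_{n,q}$. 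Uniqueness for the quasi-periodic harmonic problem \eqref{eqn:Harfuncp} then forces the representation above, once the convergence of the Bloch series is justified.

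Given this, differentiating termwise and integrating on $\partial D_{0,n}$ yields
\begin{equation*}
\widehat{C}^{\alpha}_{n,q} \;=\; -\int_{\partial D_{0,n}} \nu_y\cdot \nabla U^{\alpha}_{q}\big|_{+}\,\du\sigma(y) \;=\; \sum_{m\in\mathbb{Z}} \eu^{\iu\alpha m}\, C^{m,q}_{0,n},
\end{equation*}
so that $|\widehat{C}^{\alpha}_{n,q}|\le \sum_{m\in\mathbb{Z}} |C^{m,q}_{0,n}|$. Applying \Cref{thm:ExpConv}, $|C^{m,q}_{0,n}|\le C\min\{\rho^{|m|},\rho^{|n-q|}\}$ for $(m,q)\neq(0,n)$, and splitting the sum according to whether $|m|\le |n-q|$ or $|m|>|n-q|$ gives
\begin{equation*}
\sum_{m\in\mathbb{Z}} \min\{\rho^{|m|},\rho^{|n-q|}\} \;\le\; (2|n-q|+1)\rho^{|n-q|} + \frac{2\rho^{|n-q|+1}}{1-\rho}\;\le\; C'(|n-q|+1)\rho^{|n-q|}.
\end{equation*}
Choosing any $\widetilde{\rho}\in(\rho,1)$ absorbs the polynomial prefactor, yielding $|\widehat{C}^{\alpha}_{n,q}|\le C\widetilde{\rho}^{|n-q|}$ for $n\neq q$.

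The main obstacle is rigorously justifying the Bloch-sum representation, which requires a pointwise (or $\mathbb{H}^{1/2}$) decay estimate on $U_{m,q}(x)$ as $|m|\to\infty$ for $x$ ranging over bounded subsets of $\wY\setminus\overline{\wD}$, together with enough regularity to differentiate the series term-by-term up to the boundary. I would address this using the same capacitance-type decay mechanism underpinning \Cref{thm:ExpConv}: the normal flux of $U_{m,q}$ across a boundary $\partial D_{0,n}$ is already exponentially small in $|m|$, and standard interior elliptic estimates plus a representation via quasi-periodic Green's functions transfer this into a pointwise/trace bound that makes the Bloch series converge absolutely in $\mathbb{H}^{1/2}(\partial D_{0,n})$. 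Once this functional-analytic point is settled, the remaining computation is the elementary geometric-series estimate above.
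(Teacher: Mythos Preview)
Your proposal is correct and follows essentially the same approach as the paper: both establish the Bloch-sum identity $U^{\alpha}_{q}=\sum_{m}\eu^{\iu\alpha m}U_{m,q}$ via uniqueness for \eqref{eqn:Harfuncp}, deduce the Fourier representation $\widehat{C}^{\alpha}_{n,q}=\sum_{m}\eu^{\iu\alpha m}C^{m,q}_{0,n}$, and then bound the series by \Cref{thm:ExpConv} with the same splitting $\sum_{m}\min\{\rho^{|m|},\rho^{|n-q|}\}\le C(1+|n-q|)\rho^{|n-q|}$, absorbing the polynomial prefactor into a slightly larger base (the paper takes $\widetilde{\rho}=\sqrt{\rho}$ explicitly). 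The convergence issue you flag is handled in the paper just as you suggest, by invoking the exponential decay of \Cref{thm:ExpConv} to justify the definition of the Bloch sum.
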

\begin{proof}
	Given the estimate \eqref{eqn:ExpConvergence}, we define the harmonic function $ \widetilde{U}^{\alpha}_{n} $ for all $ n\in \mathbb{Z} $:
	\begin{equation}\label{eqn:Harmonic_quasi}
		\widetilde{U}_{n}^{\alpha}(x) \triangleq \sum_{j\in \mathbb{Z}}U_{j,n}(x)\eu^{\iu \alpha j}. 
	\end{equation}
	By definition and the uniqueness of the solution to \eqref{eqn:Harfuncp}, it follows that
	\[  \widetilde{U}_{n}^{\alpha}(x)\Big|_{\wY\setminus\wD }\equiv U_{n}^{\alpha}. \] 
	Clearly, we have
	\[ \widehat{C}_{n,q}^{\alpha} = -\int_{\partial D_{0,n}} \nu_y\cdot \left.\nabla \widetilde{U}_{q}^{\alpha}(y)\right|_{+}\du\sigma(y). \]
	From \eqref{eqn:Harmonic_quasi}, it follows that
	\begin{gather*}
		\begin{aligned}
			\frac{1}{2\pi}\int_{0}^{2\pi}\widehat{C}^{\alpha}_{n,q}\eu^{-\iu \alpha (p-m)}\du\alpha &=  -\frac{1}{2\pi}\int_{0}^{2\pi}\int_{\partial D_{0,n}}\nu_y\cdot \nabla \Big[\sum_{j\in \mathbb{Z}} U_{j,q}\eu^{\iu \alpha j}\Big]\Big|_{+}\du\sigma(y)\eu^{-\iu \alpha (p-m)}\du\alpha\\
			&= -\int_{\partial D_{0,n}}\nu_y\cdot \left.\nabla  U_{p-m,q}(y)\right|_{+}\du\sigma(y).
		\end{aligned}
	\end{gather*}
	By the translation invariance in the $v_{_1}$-direction, it follows that
	\begin{gather}
		\begin{aligned}
			\frac{1}{2\pi}\int_{0}^{2\pi}\widehat{C}^{\alpha}_{n,q}\eu^{-\iu \alpha (p-m)}\du\alpha = -\int_{\partial D_{m,n}}\nu_y\cdot \left.\nabla  U_{p,q}(y)\right|_{+}\du\sigma(y) = C_{m,n}^{p,q}.
		\end{aligned}
	\end{gather}
	This formula is consistent with \cite{Ammari2023}. Consequently, we obtain the Fourier representation
	\begin{equation}\label{eqn:transform}
		\widehat{C}^{\alpha}_{n,q} = \sum_{p\in \mathbb{Z}} \eu^{\iu \alpha p}C^{p,q}_{0,n} = \sum_{m\in \mathbb{Z}} \eu^{-\iu \alpha m}C^{0,q}_{m,n}.
	\end{equation}
	
	From \Cref{thm:ExpConv}, we directly obtain the estimate
	\begin{gather*}
		\begin{aligned}
			|\widehat{C}^{\alpha}_{n,q}| \le C\sum_{p\in \mathbb{Z}}\min\{ \rho^{|p|},\rho^{|n-q|} \}\le \widetilde{C}\rho^{|n-q|}(1+|n-q|).
		\end{aligned}
	\end{gather*}
	For all $n,q\in\mathbb{Z}$, the elementary inequality
	\[  \rho^{\frac{|n-q|}{2}}(1+|n-q|)\le M<+\infty, \]
	holds for some constant $M$. We let $\widetilde{\rho} = \sqrt{\rho}$ and it follows that 
	\[ |\widehat{C}^{\alpha}_{n,q}| \le \widetilde{C}M\widetilde{\rho}^{|n-q|}. \]
	This estimate proves the theorem and establishes the exponential decay of the off-diagonal elements of the quasi-periodic capacitance matrix $\widehat{\mathbf{C}}^\alpha$ for all $ \alpha\in [-\pi,\pi) $.
\end{proof}
It can be directly deduced that 
\begin{corollary}
	For $\alpha\in [-\pi,\pi)$, the quasi-periodic capacitance matrix $\widehat{\mathbf{C}}^{\alpha}$ is a bounded operator on $l^2(\mathbb{Z})$. Here the action on vector $\mathbf{v}\in l^2(\mathbb{Z})$ is given by: 
	\begin{equation}
		\widehat{\mathbf{C}}^{\alpha}\mathbf{v}\triangleq \Big\{\sum_{q\in\mathbb{Z}}\widehat{C}^{\alpha}_{n,q}v_{q}\Big\}_{n\in\mathbb{Z}}.
	\end{equation}
\end{corollary}
Therefore, we can prove the approximation to the point spectrum of quasi-periodic capacitance matrix $\widehat{\mathbf{C}}^{\alpha}$ by banded capacitance matrix. This is from the perturbation theory of bounded operators.
\begin{corollary}
	Considering the N-banded capacitance matrix $\widehat{\mathbf{C}}^{\alpha}(N)$ given by:
	\begin{equation}
		\widehat{C}^{\alpha}_{n,q}(N)\triangleq\left\{ \begin{aligned}
			\widehat{C}^{\alpha}_{n,q},\quad |q-n|\le N,\\
			0,\quad |q-n|>N.
		\end{aligned}\right.
	\end{equation}
	For each $\lambda $ in the point spectrum of $\widehat{\mathbf{C}}^{\alpha}$, there exist $\lambda_{M}$ in the point spectrum of $\widehat{\mathbf{C}}^{\alpha}(N)$ such that 
	\begin{equation}
		|\lambda-\lambda_{M}| \le C\widetilde{\rho}^{N},
	\end{equation}
	for some constants $\widetilde\rho\in(0,1)$ and $C>0$.
\end{corollary}

\begin{remark}
	If the region $D_{\fu}$ satisfies the $v_{_2}$-periodicity condition, we can define the full periodic capacitance matrix as
	\[C^{\alpha,\beta} \triangleq \sum_{q\in\mathbb{Z}} \eu^{\iu \beta q} \widehat{C}^{\alpha}_{0,q}, \]
	where $(\alpha, \beta) \in [-\pi, \pi)^2$. Although $C^{\alpha,\beta}$ is constant within the context of this paper, it plays a significant role in the analysis of the band structure of periodic systems with multiple inclusions; see, for example, \cite{Ammari2018}.
\end{remark}

\begin{remark}
	Note that from \cite{ammari2025analysisnonlinearresonancesresonator} the exponential decay of the off-diagonal elements of the full capacitance matrix can be directly deduced from a Combes-Thomas argument on the Green's function of the (unperturbed) crystal at zero frequency. However, this estimate only gives global decay estimates in terms of the first Dirichlet eigenvalue of the exterior region to the resonators. For locally perturbed crystals, we can calculate the corresponding full capacitance matrix and see that its off-diagonal elements have different decay rates as indicated by our truncation method. 
	
\end{remark}

\subsection{Sufficient conditions for decay}	
The proof of \Cref{thm:ExpConv} relies on the key estimate provided by \Cref{lem:SimpleForm_estimate}.
We first establish some fundamental properties of the elements of the full capacitance matrix $\mathfrak{C}$, which follow from the maximum principle and Hopf's lemma.
\begin{proposition}\label{prop:PropertyCapa}
	The elements of the full capacitance matrix $\mathfrak{C}$ satisfy the following properties:
	\begin{itemize}
		\item[(1)] For $ (p,q)\neq(m,n)  $, 
		\[ C_{m,n}^{p,q}=-\int_{\partial D_{m,n}}\nu_y\cdot \nabla U_{p,q}(y)\du\sigma(y)<0. \]
		\item[(2)] For any $ (p,q)\in \mathbb{Z}^2  $, 
		\[ C_{p,q}^{p,q}=-\int_{\partial D_{p,q}}\nu_y\cdot \nabla U_{p,q}(y)\du\sigma(y)>0. \]
		Moreover, the following holds:
		\[ C_{p,q}^{p,q}=-\sum_{(p,q)\neq (m,n)}C_{m,n}^{p,q} ,\]
		if the series on the right-hand side is summable.
	\end{itemize}
\end{proposition}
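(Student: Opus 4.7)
The proof has two independent pieces: the sign inequalities will follow from the strong maximum principle combined with Hopf's lemma on the exterior domain $\Omega \triangleq \mathbb{R}^2 \setminus \overline{D_{\fu}}$, while the summation identity in (2) will come from the divergence theorem applied on an exhaustion $\Omega_R \triangleq B_R \cap \Omega$, together with a decay-at-infinity argument for the resulting far-field term.

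For the signs, I would first note that $U_{p,q}$ is bounded harmonic on $\Omega$ with boundary values in $\{0,1\}$, so the extended maximum principle (applied on a large truncation and then exhausting) gives $0\le U_{p,q}\le 1$ throughout $\Omega$, with both inequalities strict in the interior by the strong maximum principle. Fix $(m,n)\neq(p,q)$: the value $0$ of $U_{p,q}$ on $\partial D_{m,n}$ is therefore a strict boundary minimum, so Hopf's lemma (valid by the smoothness of $\partial D_{m,n}$) yields that the derivative of $U_{p,q}$ in the outward-to-$\Omega$ direction $-\nu_y$ is strictly negative, i.e., $\nu_y\cdot\nabla U_{p,q}|_+>0$ pointwise on $\partial D_{m,n}$; integrating gives $C_{m,n}^{p,q}<0$ and settles (1). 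The inequality $C_{p,q}^{p,q}>0$ in (2) is completely analogous: on $\partial D_{p,q}$ the value $1$ is a strict boundary maximum, Hopf now gives $\nu_y\cdot\nabla U_{p,q}|_+<0$, and integration concludes.

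For the identity, I would apply the divergence theorem to $\Delta U_{p,q}=0$ on $\Omega_R$ for $R$ so large that $D_{p,q}\Subset B_R$. Tracking orientations (the outward normal to $\Omega_R$ is $-\nu_y$ on each $\partial D_{m,n}\subset B_R$ and the radial $\hat{r}$ on $\partial B_R$), the boundary integral of $\nabla U_{p,q}$ splits as
\begin{equation*}
\sum_{D_{m,n}\Subset B_R} C_{m,n}^{p,q} \;=\; -\int_{\partial B_R}\partial_r U_{p,q}\,\du\sigma.
\end{equation*}
Under the summability hypothesis the left-hand side converges to $\sum_{(m,n)\in\mathbb{Z}^2} C_{m,n}^{p,q}$ as $R\to\infty$, so the desired identity $C_{p,q}^{p,q}=-\sum_{(m,n)\neq(p,q)}C_{m,n}^{p,q}$ is equivalent to showing the boundary integral on the right vanishes in the limit.

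The main obstacle is exactly this last step: in two dimensions, bounded harmonic functions on exterior-type domains need not decay at infinity (a nonzero asymptotic constant or a logarithmic contribution can persist), so one cannot dismiss the far-field term a priori. The genuine geometric input is that the ``grounded'' inclusions $\partial D_{m,n}$ extend to infinity on all sides of the single ``active'' inclusion $\partial D_{p,q}$, and I would exploit this via an iterated maximum-principle argument on successive annuli of cells (or by invoking, once established, the exponential decay asserted in \Cref{thm:ExpConv}) to obtain a quantitative rate $U_{p,q}(x)\to 0$ as $|x|\to\infty$. Any decay strictly better than $1/|x|$, combined with interior elliptic estimates to transfer the decay to $\nabla U_{p,q}$, then kills the $O(R)$-long boundary integral on $\partial B_R$ and closes the argument.
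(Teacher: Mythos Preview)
Your argument is correct and, for the sign statements, is exactly what the paper intends: the paper does not prove this proposition but simply asserts that it ``follow[s] from the maximum principle and Hopf's lemma,'' which is precisely your route for part~(1) and the inequality in part~(2).

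For the summation identity the paper says nothing further, so you have supplied more than the paper does. Your divergence-theorem argument on an exhaustion, together with a far-field decay step, is the natural proof. Two minor remarks. First, a lattice-aligned exhaustion $R_N=\bigcup_{|m|,|n|\le N}Y_{m,n}$ is cleaner than balls $B_R$: the circle $\partial B_R$ necessarily cuts through infinitely many cells and will intersect some inclusions, producing partial-boundary contributions you would otherwise have to control separately. Second, your fallback of invoking \Cref{thm:ExpConv} for the decay is not actually circular---the proof of \Cref{thm:ExpConv} via \Cref{lem:SimpleForm_estimate} uses only the sign information from part~(1) of this proposition, not the summation identity---but it is logically awkward to cite forward. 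Your preferred alternative, an iterated maximum-principle comparison across successive annuli of cells, is essentially the truncation method developed in Sections~\ref{sec:Decay_periodic}--\ref{sec:estimate_full} and yields exponential decay of $U_{p,q}$ at infinity, which is far more than the ``better than $1/|x|$'' rate you need to kill the flux term.
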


We now prove the following key lemma, which provides sufficient conditions for the decay of off-diagonal elements of the full capacitance matrix. 
\begin{lemma}\label{lem:SimpleForm_estimate}
	Suppose that for all $ p,q\in \mathbb{Z} $, the following estimates hold:
	\begin{align}
		\int_{I_{1,0,q-n}}U^{0}_{q}\du \sigma(y)\le C_1\rho_0^{n},\quad 
		\int_{I_{1,0,q+n+1}}U^{0}_{q}\du \sigma(y)\le C_1\rho_0^n,\quad n\ge 0, \label{eqn:Decay_Condition1}\\
		\int_{\wI_{2,p-m}}U_{p,q}\du \sigma(y) \le C_2\rho_0^{m}, \quad \int_{\wI_{2,p+ m+1}}U_{p,q}\du \sigma(y) \le C_2\rho_0^{m},\quad m \ge 0.\label{eqn:Decay_Condition2}
	\end{align}
	Here, $\rho_0\in (0,1) $ is a positive constant, and $ C_1,C_2 $ are positive real constants independent of $ m,n,p,q $. The harmonic functions $ U^{0}_{q} $ and $ U_{p,q} $ are determined from \eqref{eqn:Harfuncp} and \eqref{eqn:Harfunc}, respectively. Then the estimate \eqref{eqn:ExpConvergence} holds.
\end{lemma}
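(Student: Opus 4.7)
The plan is to prove the two halves of \eqref{eqn:ExpConvergence}, namely $|C_{m,n}^{p,q}|\le C\rho^{|m-p|}$ and $|C_{m,n}^{p,q}|\le C\rho^{|n-q|}$, as independent arguments that respectively exploit \eqref{eqn:Decay_Condition2} and \eqref{eqn:Decay_Condition1}. Three ingredients are common to both. First, the sign structure in \Cref{prop:PropertyCapa}(1) lets me pass between a single entry and column-type sums without losing information in modulus. Second, Green's identity applied to $U_{p,q}$, or to its $v_1$-periodic counterpart $U^0_q$, on a domain where the interior Dirichlet data vanish recasts the capacitance sum as a flux through cell-boundary lines. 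Third, interior regularity of harmonic functions, valid in the $r$-tubes around cell-boundary segments thanks to \Cref{def:decaycond2}(1), controls $\int_{\ell}|\nabla U|\,\du\sigma$ by $Cr^{-1}\int_{N_r(\ell)}U\,\du x$ with a constant independent of the cell.

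For the $|m-p|$-bound, fix $m\neq p$. Since $U_{p,q}$ is harmonic on $\wY_m\setminus\wD_m$ and vanishes on $\partial\wD_m$, the divergence theorem, after truncating in $v_2$, using $0\le U_{p,q}\le 1$ to dominate the boundary contributions at $v_2$-infinity, and passing to the limit, gives
\[
-\sum_{n\in\mathbb{Z}} C_{m,n}^{p,q}\;=\;\int_{\wI_{2,m}\cup\wI_{2,m+1}}\nu\cdot\nabla U_{p,q}\,\du\sigma.
\]
Each $C_{m,n}^{p,q}$ is negative by \Cref{prop:PropertyCapa}(1), so every $|C_{m,n}^{p,q}|$ is dominated by the modulus of the right-hand side. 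I would then use ingredient (iii) on $\wI_{2,m}$ and $\wI_{2,m+1}$ to reduce the flux to area integrals of $U_{p,q}$ on two thin tubes; Fubini combined with sub-mean-value estimates on the vertical lines neighbouring $\wI_{2,m}$ and $\wI_{2,m+1}$ then lets me invoke \eqref{eqn:Decay_Condition2} and conclude $\sum_n|C_{m,n}^{p,q}|\le C\rho_0^{|m-p|}$.

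For the $|n-q|$-bound, I would work with the $v_1$-periodic function $U^0_q$. Comparing boundary data with \eqref{eqn:Harfuncp} and invoking uniqueness yields $U^0_q=\sum_{j\in\mathbb{Z}}U_{j,q}$ on $\wY\setminus\wD$, with absolute convergence following from the pointwise analog of the $|m-p|$-decay just obtained. Translation invariance in $v_1$ then gives $\widehat{C}^0_{n,q}=\sum_m C_{m,n}^{p,q}$, and for $n\neq q$ every summand is negative by \Cref{prop:PropertyCapa}(1), so $|C_{m,n}^{p,q}|\le|\widehat{C}^0_{n,q}|$. Applying Green's identity to $U^0_q$ on the single cell $Y_{0,n}\setminus D_{0,n}$ and using $v_1$-periodicity to cancel the contributions on $I_{2,0,n}$ and $I_{2,1,n}$ leaves
\[
|\widehat{C}^0_{n,q}|\;\le\;\int_{I_{1,0,n}}|\nabla U^0_q|\,\du\sigma+\int_{I_{1,0,n+1}}|\nabla U^0_q|\,\du\sigma.
\]
Ingredient (iii), together with the hypothesis \eqref{eqn:Decay_Condition1} on these and the nearby horizontal segments, then yields $|\widehat{C}^0_{n,q}|\le C\rho_0^{|n-q|}$, completing the second half.

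The main technical obstacle is ingredient (iii) together with what might be called its \emph{Fubini closure}: converting the line-integral hypothesis on $U$ on a single cell-boundary line into a small area integral over the adjacent $r$-tube, and from there into a small $L^1$-norm of $|\nabla U|$ on the line. This requires uniform-in-cell mean-value/Harnack constants that depend only on $r$ and the geometric data of \Cref{def:decaycond2}, which in turn rest on the tube lying entirely in the harmonic region of $U$. Secondary subtleties, namely the identity $U^0_q=\sum_j U_{j,q}$ and the divergence theorem on the unbounded strip $\wY_m$, are handled by truncating in the unbounded direction and using the $|m-p|$-decay just proved to control the tails. The final exponent $\rho\in(0,1)$ in \eqref{eqn:ExpConvergence} is then obtained from $\rho_0$ after absorbing harmless polynomial factors, e.g.\ by setting $\rho=\sqrt{\rho_0}$.
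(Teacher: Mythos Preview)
Your proposal is correct and follows essentially the same strategy as the paper: the sign structure of \Cref{prop:PropertyCapa}, a divergence/Green identity to rewrite capacitance sums as boundary fluxes, and a Harnack-type gradient control to convert those fluxes into the line integrals appearing in \eqref{eqn:Decay_Condition1}--\eqref{eqn:Decay_Condition2}. Two implementation differences are worth noting. First, where you apply the divergence theorem on a \emph{single} strip $\wY_m\setminus\wD_m$ (resp.\ a single cell $Y_{0,n}\setminus D_{0,n}$), the paper applies it on the \emph{semi-infinite} region $\cup_{l>m}\wY_l\setminus\wD_l$ (resp.\ $\cup_{l>n}Y_{0,l}\setminus D_{0,l}$), so that only one boundary line appears instead of two; both versions work. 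Second, and more usefully for you: your ingredient (iii) and its ``Fubini closure'' are an unnecessary detour. The paper invokes the infinitesimal Harnack inequality (\Cref{lem:Harnackineq}) to get the \emph{pointwise} bound $|\nabla U(y)|<\tfrac{4}{r}U(y)$ for every $y$ on the cell-boundary line, since $U$ is positive and harmonic on $B(y,r)$ by \Cref{def:decaycond2}(1). Integrating this directly over the line gives $\int_\ell|\nabla U|\,\du\sigma\le\tfrac{4}{r}\int_\ell U\,\du\sigma$, which plugs straight into the hypotheses with no tube, no Fubini, and no need to control $U$ on neighbouring lines.
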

\begin{proof}
	Without loss of generality, we set $p=q=0$. \par 
	We first investigate \eqref{eqn:Decay_Condition1}. The function $U^0_0$, determined from \eqref{eqn:Harfuncp}, is a continuous harmonic function in the regions $\cup_{l> n}Y_{0,l}\setminus D_{0,l} $ and $\cup_{l< -n}Y_{0,l}\setminus D_{0,l}$ for all $n\ge 0$. Consequently,
	\[ \int_{\partial (\cup_{l> n}Y_{0,l}\setminus D_{0,l}) }\nu_y\cdot \nabla U_0^0(y)\du\sigma(y) = \int_{\partial (\cup_{l< -n}Y_{0,l}\setminus D_{0,l}) }\nu_y\cdot \nabla U_0^0(y)\du\sigma(y) =0 , \]
	where $\nu_y$ denotes the outward normal vectors. By direct calculation and using the periodic boundary condition, we find
	\begin{align*}
		\int_{I_{1,0,n+1}} \nu_{y}\cdot \nabla U^0_0(y)\du\sigma(y) &= -\sum_{l> n} \int_{\partial D_{0,l}}\nu_{y}\cdot \nabla U^0_0(y)\du\sigma(y)= -\sum_{l> n} \widehat{C}^0_{l,0},\\
		\int_{I_{1,0,-n}} \nu_{y}\cdot \nabla U^0_0(y)\du\sigma(y) &= -\sum_{l< -n} \int_{\partial D_{0,l}}\nu_{y}\cdot \nabla U^0_0(y)\du\sigma(y)= -\sum_{l< -n} \widehat{C}^0_{l,0}.
	\end{align*}
	The smoothness of the harmonic function $U_0^0$ implies the bounds
	\begin{align*}
		\left|\int_{I_{1,0,n+1}} \nu_{y}\cdot \nabla U^0_0(y)\du\sigma(y)\right| \le \int_{I_{1,0,n+1}}|\nabla U^0_0(y)|\du\sigma(y),\\
		\left|\int_{I_{1,0,-n}} \nu_{y}\cdot \nabla U^0_0(y)\du\sigma(y)\right| \le \int_{I_{1,0,-n}}|\nabla U^0_0(y)|\du\sigma(y).
	\end{align*}
	Applying Harnack's inequality (\Cref{lem:Harnackineq}) and noting that $U^0_0$ is positive, we obtain
	\begin{equation}\label{eqn:value_control}
		\bigg|\sum_{l> n} \widehat{C}^0_{l,0}\bigg| \le \frac{4}{r}\int_{I_{1,0,n+1}}U^0_0(y)\du\sigma(y),\quad \bigg|\sum_{l< -n} \widehat{C}^0_{l,0}\bigg| \le \frac{4}{r}\int_{I_{1,0,-n}}U^0_0(y)\du\sigma(y).
	\end{equation}  
	By \eqref{eqn:transform} and \Cref{prop:PropertyCapa}, we 
	obtain the following estimates for $n\ge 0$: 
	\begin{align*}
		| C^{0,0}_{m,n+1} |&\le \bigg|\sum_{l> n} C^{0,0}_{m,l}\bigg|\le \bigg|\sum_{m\in \mathbb{Z}}\sum_{l> n} C^{0,0}_{m,l}\bigg| = \bigg|\sum_{l> n} \widehat{C}^0_{l,0}\bigg|,\\
		| C^{0,0}_{m,-(n+1)} |&\le\bigg|\sum_{l< -n} C^{0,0}_{m,l}\bigg|\le \bigg|\sum_{l< -n} \widehat{C}^0_{l,0}\bigg|.
	\end{align*}
	Combining these with the previous inequalities \eqref{eqn:value_control} and \eqref{eqn:Decay_Condition1}, we conclude that
	\[ | C^{0,0}_{m,n+1} |\le C_3 \rho_0^n,\quad | C^{0,0}_{m,-(n+1)} |\le C_3\rho_0^n,\quad n\ge 0. \]
	Defining $C_4 = C_3\rho_0^{-1}$, we obtain the final estimates in the $x_2$ direction:
	\begin{equation}\label{eqn:y_dir_decay}
		| C^{0,0}_{m,n} |\le C_4 \rho_0^{n},\quad | C^{0,0}_{m,-n} |\le C_3\rho_0^n,\quad n\ge 0.
	\end{equation}
	
	We now turn to estimates \eqref{eqn:Decay_Condition2}. Consider the harmonic function $U_{0,0}$ on the regions $ \cup_{l> m}\wY_l\setminus\wD_l $ and $\cup_{l< -m}\wY_l\setminus\wD_l$. For $m\ge 0$, similar reasoning yields 
	\begin{align*}
		\int_{\wI_{2,m+1}} \nu_{y}\cdot \nabla U_{0,0}(y)\du\sigma(y) &= -\sum_{l> m} \int_{\partial \wD_{l}}\nu_{y}\cdot \nabla U_{0,0}(y)\du\sigma(y)= -\sum_{l> m}\sum_{n\in\mathbb{Z}} C^{0,0}_{l,n},\\
		\int_{\wI_{2,-m}} \nu_{y}\cdot \nabla U^0_0(y)\du\sigma(y) &= -\sum_{l< -m} \int_{\partial \wD_{l}}\nu_{y}\cdot \nabla U^0_0(y)\du\sigma(y)= -\sum_{l< -m}\sum_{n\in\mathbb{Z}} C^{0,0}_{l,n}.
	\end{align*}
	Applying Harnack's inequality again, we find
	\begin{align*}
		\bigg|\sum_{l> m}\sum_{n\in\mathbb{Z}} C^{0,0}_{l,n}\bigg| = \left|\int_{\wI_{2,m+1}} \nu_{y}\cdot \nabla U_{0,0}(y)\du\sigma(y) \right|&\le\frac{4}{r} \int_{\wI_{2,m+1}}  U_{0,0}(y)\du\sigma(y) ,\\
		\bigg|\sum_{l< -m}\sum_{n\in\mathbb{Z}} C^{0,0}_{l,n}\bigg| =  \left|\int_{\wI_{2,-m}} \nu_{y}\cdot \nabla U_{0,0}(y)\du\sigma(y)\right| &\le \frac{4}{r} \int_{\wI_{2,-m}}  U_{0,0}(y)\du\sigma(y).
	\end{align*}
	Consequently, we have 
	\[ C^{0,0}_{m+1,n}\le C_5\rho^n_0,\quad C^{0,0}_{-(m+1),n} \le C_5\rho^n_0,\quad m\ge 0. \]
	Defining $C_6 = C_5\rho_0^{-1}$, we obtain the estimates in the $x_1$ direction: 
	\begin{equation}\label{eqn:x_dir_decay}
		C^{0,0}_{m,n}\le C_5\rho^n_0,\quad C^{0,0}_{-m,n} \le C_5\rho^n_0,\quad m\ge 0.
	\end{equation}
	Combining \eqref{eqn:y_dir_decay} and \eqref{eqn:x_dir_decay}, we arrive at the desired estimate \eqref{eqn:ExpConvergence}.
\end{proof}
The proof of \Cref{thm:ExpConv} is then reduced to proving the estimates \eqref{eqn:Decay_Condition1} and \eqref{eqn:Decay_Condition2}. This will be done in the next two sections.

\section{Proof of estimate \eqref{eqn:Decay_Condition1}}\label{sec:Decay_periodic}

This section establishes the exponential decay of the integral of the periodic harmonic function $U_q^0$ on $I_{1,0,n}$ as $n \to \pm \infty$, i.e., estimate \eqref{eqn:Decay_Condition1}. 

For notational simplicity, we define:
\[ Y_{n}\triangleq Y_{0,n},\quad D_{n}\triangleq D_{0,n},\quad I_{n}\triangleq I_{1,0,n},\quad L_{n}\triangleq L_{1,0,n},\quad \omega_{n}\triangleq \omega_{1,0,n},\quad x_{0,n} = x_n, \]
for all $n\in\mathbb{Z}$ to simplify the notation. The distance between $x_n$ and $I_n$, denoted by $d(x_n,I_n)$, satisfies
\[ d(x_n,I_n)\ge r+c>0. \] 
This fact follows from \Cref{def:decaycond2}. Throughout this section, we let $q = 0$. \par 
By the uniqueness theorem, the restriction of $ U_0^0 $ to $ \cup_{n \ge 1} (Y_n \setminus D_n) $ coincides with the unique solution to the following boundary value problem with periodic boundary conditions:
\begin{gather}\label{eqn:truncateProblem}
	\left\{\begin{aligned}
		&\Delta \widetilde{U}_{+}(x) = 0,\quad x \in \cup_{n\ge 1} (Y_{n}\setminus D_{n}),\\
		&\widetilde{U}_{+}(y)  = 0,\quad y \in \cup_{n\ge1}\partial D_{n},\\
		&\widetilde{U}_{+}(y)  = U_0^0\big|_{I_{1}}(y),\quad y \in I_{1}, \\
		&\widetilde{U}_{+}(x + mv_{_1}) = \widetilde{U}_{+}(x),\quad x \in \cup_{n\ge 1} Y_{n},\; m\in \mathbb{Z}.
	\end{aligned}\right.
\end{gather}
Since $U_0^0\big|_{I_{1}}(y)$ is a positive smooth function, it belongs to
$\mathbb{L}^1(I_{1})\cap \mathbb{L}^2(I_{1})$. In the following, we let $\widetilde{U}_{+}$ be $0$ inside the inclusions $\cup_{n\ge 1}D_{n}$.

\subsection{Truncation method} 
In this part, we introduce the truncation method to estimate $\widetilde{U}_{+}$ on $\cup_{n\ge 1} (Y_{n}\setminus D_{n})$. Before providing details, note that the line $L_n$ divides the cell $Y_{n}$ into two parts, denoted by $Y_{n,+}$ and $Y_{n,-}$, as shown in \Cref{fig:TruncaMethod}. 
\begin{figure}[htbp] 
	\centering
	\includegraphics[width = 0.5\textwidth]{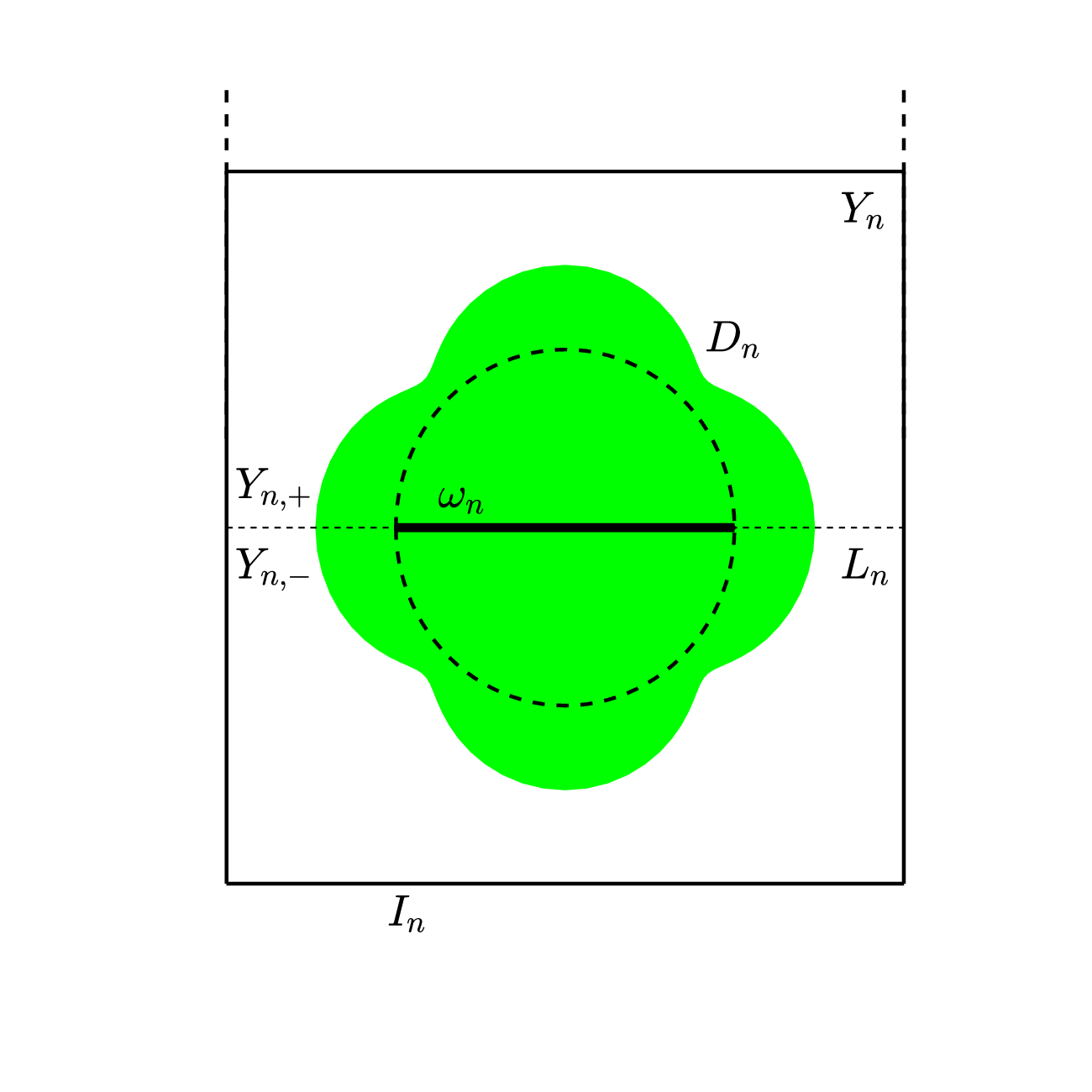}
	\caption{Cell $Y_n$. The line $L_n$ divides the cell into two parts, denoted as $Y_{n,+}$ and $Y_{n,-}$.}
	\label{fig:TruncaMethod}
\end{figure}
The procedure can be divided into three recursive steps. Given the boundary value $ \widetilde{g}_{_{1}}= U_0^0\big|_{I_{1}}>0 $ on $ I_{1} $, for $ n\ge 1 $,
\begin{itemize}
	\item[(1)] First, solve the following boundary value problem for the Laplace equation on the half space:
	\begin{gather}
		\left\{\begin{aligned}
			&\Delta \wcV_n(x) = 0,\quad x\in \cup_{l\ge n} Y_l,\\
			& \wcV_n(y) = \widetilde{g}_{_n}(y),\quad y\in I_{n},\\
			& \wcV_n(x+mv_{_1}) =  \wcV_n(x),\quad x\in \cup_{l\ge n} Y_{l},\; m\in\mathbb{Z}.
		\end{aligned}\right.
	\end{gather}
	Due to the smoothness of $\wcV_n$, one can take its trace on $L_{n}$
	in $ Y_{n} $, denoted by $\widetilde{h}_{n}$.  
	
	\item[(2)] Next, solve another boundary value problem on the half space:
	\begin{gather}
		\left\{\begin{aligned}
			&\Delta \wcW_n(x) = 0,\quad x\in Y_{n,+}\cup(\cup_{l\ge n+1} Y_{l}),\\
			& \wcW_n(y) = \widetilde{h}_n(1-\chi_{_n}) ,\quad y\in L_{n},\\
			& \wcW_n(x+mv_{_1}) =  \wcW_n(x),\quad x\in Y_{n,+}\cup(\cup_{l\ge n+1} Y_{l}),\; m\in\mathbb{Z}.
		\end{aligned}\right.
	\end{gather}
	Here, $\chi_{_n}$ is a positive smooth function on $L_{n}$ satisfying
	\[ \chi_{_n}(y) =1,\quad y\in \omega_n,\quad \text{and }\operatorname{supp}\chi_{_n} \subset L_n\cap D_n. \] 
	From this, its trace on $I_{n+1}$ can also be taken, denoted by $\widetilde{g}_{_{n+1}}$.
	\item[(3)] Repeat Steps (1) and (2).
\end{itemize}
The left and right panels of \Cref{fig:TruncateMethodVW} illustrate the truncation method and the functions $\mathcal{V}_n$ and $\mathcal{W}_n$, respectively.
\begin{figure}[htbp] 
	\centering
	\includegraphics[width = 0.45\textwidth]{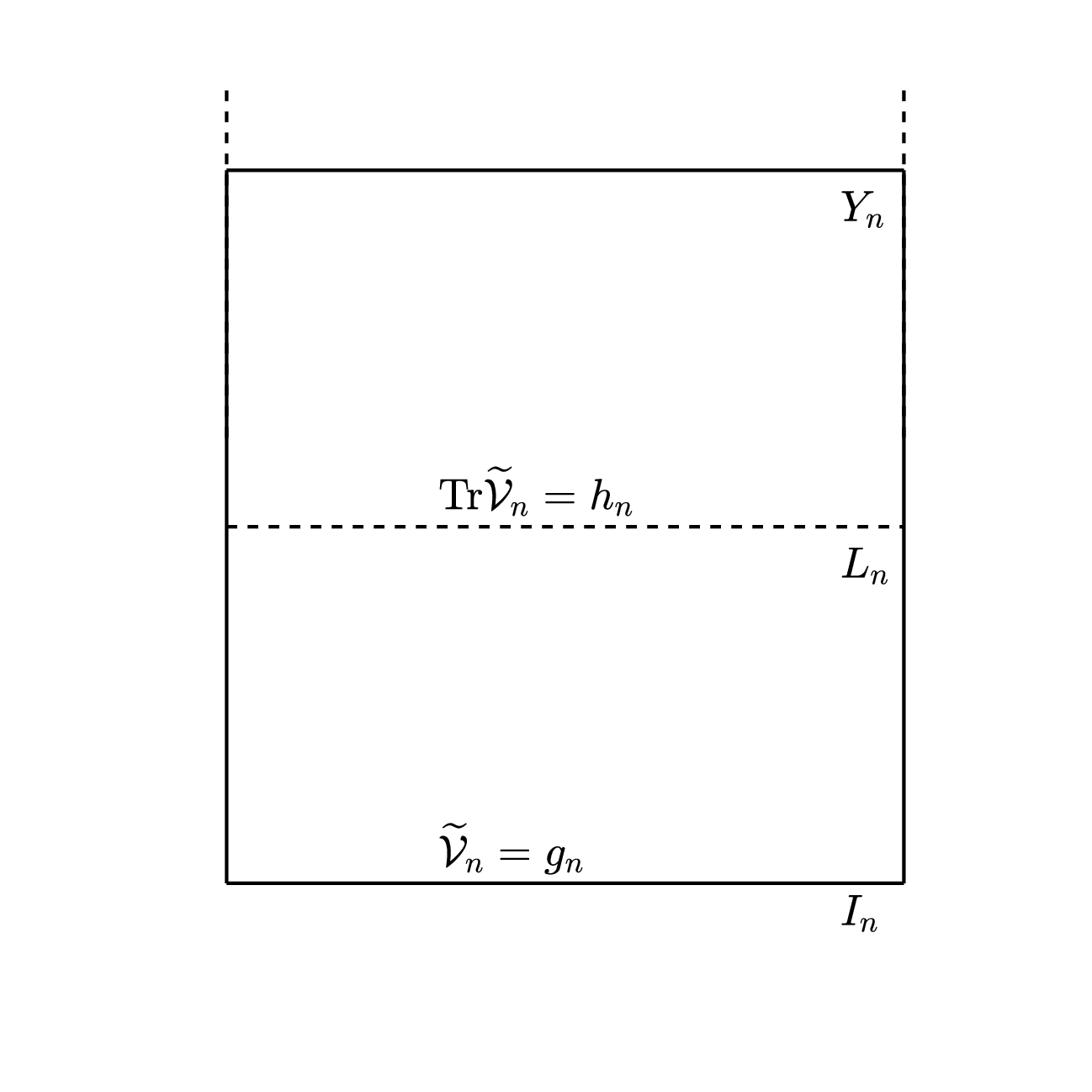}
	\hspace{-0.4cm}
	\includegraphics[width = 0.45\textwidth]{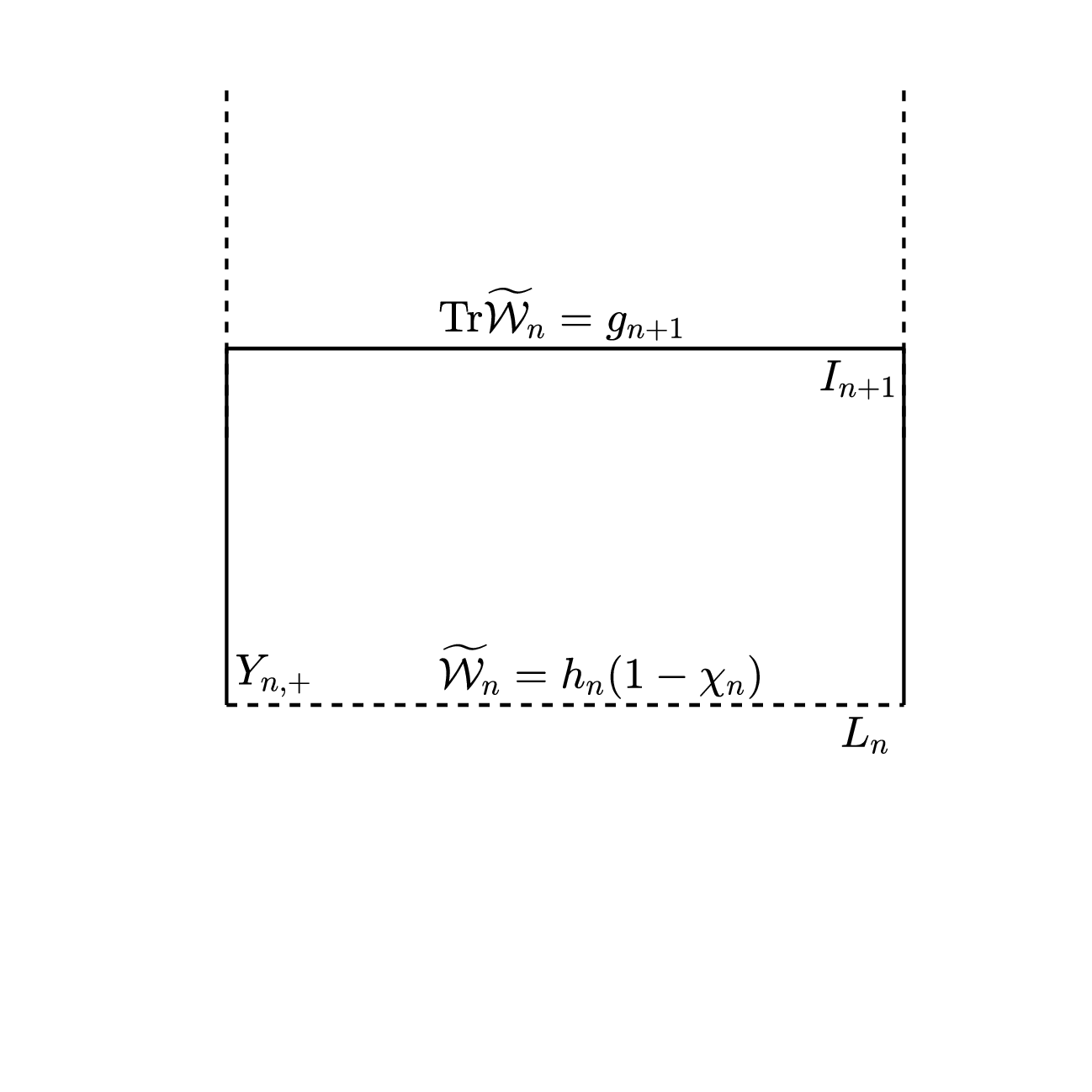}
	\caption{Left panel: Illustration of $\mathcal{V}_n$ and the derivation of $h_n$. Right panel: Illustration of $\mathcal{W}_n$ and the derivation of $g_{n+1}$.}
	\label{fig:TruncateMethodVW}
\end{figure}
\begin{remark}
	We emphasize that the functions $\{ \wcV_n,\wcW_n \}_{n\ge 1}$ are solved on the half space without inclusions $\{ D_n \}_{n\ge 1}$. Therefore, they can be explicitly represented by Fourier series; see Appendix~\ref{apsec:Poisson}. 
\end{remark}
Based on this, we can prove the following theorem.
\begin{theorem}\label{thm:Control_of_Trunc}
	Given the solutions $ \{\wcV_n,\wcW_n\}_{n\ge 1 } $ obtained by the truncation method, it holds that for all $ n\ge 1 $, 
	\[ \widetilde{U}_{+}(y)\le \widetilde{g}_{_n}(y),\quad y\in I_{n}. \]
\end{theorem}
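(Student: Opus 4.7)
The plan is to prove the theorem by induction on $n$, using the maximum principle to compare $\widetilde{U}_{+}$ successively with $\wcV_n$ and then with $\wcW_n$. The base case $n=1$ is immediate, since by construction $\widetilde{g}_{_{1}} = U_0^0|_{I_1} = \widetilde{U}_{+}|_{I_1}$, so equality holds on $I_1$. The inductive hypothesis will be $\widetilde{U}_{+}|_{I_n} \le \widetilde{g}_{_n}$, and I want to conclude $\widetilde{U}_{+}|_{I_{n+1}} \le \widetilde{g}_{_{n+1}}$.

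The first step is to compare $\widetilde{U}_{+}$ and $\wcV_n$ on the perforated half-strip $\cup_{l \ge n}(Y_l \setminus \overline{D_l})$. Both functions are harmonic and $v_{_1}$-periodic on this domain. On $I_n$, the inductive hypothesis gives $\widetilde{U}_{+} \le \widetilde{g}_{_n} = \wcV_n$. On each $\partial D_l$ with $l \ge n$, we have $\widetilde{U}_{+} = 0$, while $\wcV_n \ge 0$ follows from the Poisson representation for the periodic half-plane problem (see Appendix~\ref{apsec:Poisson}) applied to the non-negative datum $\widetilde{g}_{_n}$. Both functions decay as $x_2 \to +\infty$ because their boundary data lie in $\mathbb{L}^1 \cap \mathbb{L}^2$, so the maximum principle on this unbounded periodic domain yields $\widetilde{U}_{+} \le \wcV_n$ throughout. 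Restricting to $L_n \setminus \overline{D_n} \subset \cup_{l \ge n}(Y_l \setminus \overline{D_l})$ gives $\widetilde{U}_{+} \le \widetilde{h}_{n}$ on this part of $L_n$.

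The second step is to compare $\widetilde{U}_{+}$ and $\wcW_n$ on the region $\Omega_n \triangleq (Y_{n,+}\setminus \overline{D_n}) \cup \bigl(\cup_{l \ge n+1}(Y_l \setminus \overline{D_l})\bigr)$. Both are harmonic and $v_{_1}$-periodic on $\Omega_n$. The key point is that $\operatorname{supp}\chi_{_n} \subset L_n \cap D_n$, so $\chi_{_n} \equiv 0$ on $L_n \setminus D_n$, hence $\wcW_n = \widetilde{h}_n$ there; combined with the conclusion of the previous step, $\widetilde{U}_{+} \le \wcW_n$ on $L_n \setminus \overline{D_n}$, which forms part of $\partial \Omega_n$. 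On the remaining boundary components, namely $\partial D_n \cap Y_{n,+}$ and $\partial D_l$ for $l \ge n+1$, we have $\widetilde{U}_{+} = 0$ while $\wcW_n \ge 0$ (again by the Poisson representation applied to the non-negative datum $\widetilde{h}_n(1-\chi_{_n})$). Periodicity and decay at infinity let us invoke the maximum principle once more to obtain $\widetilde{U}_{+} \le \wcW_n$ on $\Omega_n$, and restricting to $I_{n+1}$ gives $\widetilde{U}_{+}|_{I_{n+1}} \le \widetilde{g}_{_{n+1}}$, closing the induction.

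The main obstacle I anticipate is the rigorous application of the maximum principle on the unbounded periodic strips, which requires (i) propagating non-negativity of the boundary data $\widetilde{g}_{_n}$ and $\widetilde{h}_n$ along the iteration, (ii) verifying that the Poisson representation for the periodic half-plane indeed yields non-negative, decaying solutions from non-negative $\mathbb{L}^1 \cap \mathbb{L}^2$ data, and (iii) justifying that $\widetilde{U}_{+}$ itself has sufficient decay at $+\infty$ so the comparison principle applies (this uses the positivity of $U_0^0$, which makes $\widetilde{g}_{_{1}} = U_0^0|_{I_1}$ a bounded positive smooth function and allows the Fourier-series control from the Appendix to transport decay from one cell to the next). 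All the other ingredients — harmonicity, boundary matching, and the cutoff property of $\chi_{_n}$ on $L_n \setminus D_n$ — are essentially formal once these analytic details are secured.
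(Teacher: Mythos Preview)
Your proposal is correct and follows essentially the same inductive scheme as the paper: start from the base case $\widetilde{g}_{_1}=\widetilde{U}_+|_{I_1}$, use the maximum principle to pass from $\widetilde{g}_{_n}$ to $\widetilde{h}_n$ via comparison with $\wcV_n$, then from $\widetilde{h}_n(1-\chi_{_n})$ to $\widetilde{g}_{_{n+1}}$ via comparison with $\wcW_n$. If anything, you are slightly more explicit than the paper about the domains on which the comparison is made, the non-negativity of $\wcV_n,\wcW_n$ via the Poisson representation, and the decay at infinity needed to apply the maximum principle on the unbounded periodic strip.
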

\begin{proof}
	We prove this by induction. First, $ \widetilde{U}_+(y)= \widetilde{g}_{_1}(y) $ on $ I_{1} $. Assuming $ \widetilde{U}_{+}(y)\le \widetilde{g}_{_n}(y) $ on $I_{n}$,
	consider the harmonic function $ \wcV_n-\widetilde{U}_{+} $, which satisfies
	\begin{gather}
		\left\{\begin{aligned}
			&\Delta (\wcV_n-\widetilde{U}_{+}) = 0,\quad  x \in\cup_{l\ge n} \big(Y_{l} \setminus \partial   D_{l}\big),\\
			&\wcV_n-\widetilde{U}_{+} \ge 0,\quad y \in I_{n},\\
			&\wcV_n-\widetilde{U}_{+} > 0,\quad y\in \partial \big(\cup_{l\ge n}  D_{l}\big),\\
			&\wcV_n-\widetilde{U}_{+} \text{ satisfies the periodic boundary condition.} 
		\end{aligned}\right.
	\end{gather}
	Thus, $ \wcV_n-\widetilde{U}_{+}\ge 0 $ for all $ x\in Y_{n} $, which implies that 
	$$ \widetilde{h}_n\ge \widetilde{U}_{+}  
	\quad \text{ on } L_{n}.$$ Since $\widetilde{U}_{+}$ vanishes in $D_n$, it follows that 
	\[ \widetilde{h}_n(1-\chi_{_{n}})(y)\ge \widetilde{U}_{+},\quad y\in L_n. \]
	Next, consider the function $\wcW_n-\widetilde{U}_{+}$. By definition, it can be verified that 
	\begin{gather}
		\left\{\begin{aligned}
			&\Delta (\wcW_n-\widetilde{U}_{+}) = 0,\quad x \in (Y_{n,+}\setminus \partial D_n)\cup(\cup_{l\ge n+1} Y_{l}\setminus \partial D_l),\\
			&\wcW_n-\widetilde{U}_{+} \ge 0,\quad y \in L_n,\\
			&\wcW_n-\widetilde{U}_{+} \ge 0,\quad y\in \partial  D_{n}\cap Y_{n,+},\\
			& \wcW_n-\widetilde{U}_{+} \ge 0,\quad y\in \partial  \big(\cup_{l\ge n+1} D_{l}\big),\\
			&\wcW_n-\widetilde{U}_{+} \text{ satisfies the periodic boundary condition.}
		\end{aligned}\right.
	\end{gather}
	Hence, we conclude that $ \wcW_n\ge \widetilde{U}_{+} $ for all $ x\in Y_{n,+}\setminus D_n $. Denoting the trace of $ \wcW_n $ on $ I_{n+1} $ by $ \widetilde{g}_{_{n+1}} $, we obtain $ \widetilde{g}_{_{n+1}}\ge \widetilde{U}_{+} $. This completes the proof.
\end{proof}
A direct consequence of the above theorem is 
\[ \int_{I_{n}} U_0^0\,\du\sigma(y) = \int_{I_{n}} \widetilde{U}_{+}\,\du\sigma(y) \le \int_{I_{n}} \widetilde{g}_{_{n}}\,\du\sigma(y),\quad n\ge1.  \]
The theorem also implies that, for $n\ge1$,  
\begin{equation}
	U^0_0(x)\left\{\begin{aligned}
		&\le \wcV_n(x),\quad x\in Y_{n,-},\\
		& \le \wcW_n(x),\quad x\in Y_{n,+}.
	\end{aligned}\right. 
\end{equation}
In the following, we estimate the $\mathbb{L}^1(I_n)$ norm of $\{\widetilde{g}_{_n}\}_{n\ge 1}$ as $n\to +\infty$.
\subsection{Exponential decay of the integral on $I_{n}$}
In this subsection, we prove the decay of the integrals of $ U_0^0 $ on $I_{\pm n}$. This relies substantially on the following theorem characterizing the decay of the integrals of $\widetilde{g}_n$. 
\begin{theorem}\label{thm:DecayPer}
	Given the periodic functions $\{ \widetilde{g}_{_n} \}_{n\ge 1}$ determined by the truncation method, we have 
	\begin{equation}\label{eqn:Decay_estimate1}
		\int_{I_{n+1}}\widetilde{g}_{_{n+1}}\du\sigma(y) = \int_{I_{n}}\widetilde{g}_{_n}(1-\widetilde{V}_{n})\du\sigma(y).
	\end{equation}
	Here, $\widetilde{V}_{n}$ is the \emph{auxiliary function} determined by the following boundary value problem on the lower half space:
	\begin{gather}
		\left\{\begin{aligned}
			&\Delta \widetilde{V}_n = 0,\quad x\in Y_{n,-}\cup \Big(\cup_{l\le n}Y_l\Big),\\
			&\widetilde{V}_n = \chi_{_n},\quad y\in L_n,\\
			&\widetilde{V}_n(x+mv_{_1}) = \widetilde{V}_n(x),\quad x\in Y_{n,-}\cup \Big(\cup_{l\le n}Y_l\Big),\; m\in\mathbb{Z}. 
		\end{aligned}\right.
	\end{gather}
	Moreover, the function $1-\wV_n$ satisfies 
	\[ 1-\wV_n \le c<1,\quad \forall x\in I_n,\; n\ge 1. \]
\end{theorem}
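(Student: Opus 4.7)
The plan is to combine the mean-value property for bounded $v_{_1}$-periodic harmonic functions on a half-space with a Poisson-kernel reciprocity identity to obtain \eqref{eqn:Decay_estimate1}, and then to deduce the uniform strict upper bound on $1-\wV_n$ from positivity of the periodic Poisson kernel together with the uniform lower estimate $|\omega_n|\ge c>0$ from \Cref{def:decaycond2}.

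First I would exploit the elementary fact that every bounded $v_{_1}$-periodic harmonic function on a $v_{_1}$-periodic half-space admits a Fourier expansion in the $v_{_1}$ variable whose zeroth mode is constant in the transverse direction while all nonzero modes decay exponentially; in particular, its integral over any horizontal period equals the integral of its boundary trace. Applied to $\wcV_n$ on $\cup_{l\ge n}Y_l$ and to $\wcW_n$ on $Y_{n,+}\cup(\cup_{l\ge n+1}Y_l)$, this yields
\begin{equation*}
\int_{I_n}\widetilde{g}_{_n}\,\du\sigma = \int_{L_n}\widetilde{h}_{n}\,\du\sigma,\qquad
\int_{L_n}\widetilde{h}_{n}(1-\chi_{_n})\,\du\sigma = \int_{I_{n+1}}\widetilde{g}_{_{n+1}}\,\du\sigma,
\end{equation*}
so that \eqref{eqn:Decay_estimate1} reduces to the symmetry identity $\int_{L_n}\widetilde{h}_{n}\chi_{_n}\,\du\sigma = \int_{I_n}\widetilde{g}_{_n}\wV_n\,\du\sigma$.

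To establish this last identity I would invoke the explicit $v_{_1}$-periodic Poisson kernel $P(\cdot,d)$ at transverse distance $d=a/2$, the common distance between $L_n$ and each of $I_n$, $I_{n+1}$ (see Appendix~\ref{apsec:Poisson}). Both $\widetilde{h}_{n}$, the trace of $\wcV_n$ on $L_n$, and $\wV_n|_{I_n}$ are boundary traces of solutions to a Dirichlet problem on a $v_{_1}$-periodic half-space whose boundary lies at distance $a/2$, and hence admit the convolution representations
\begin{equation*}
\widetilde{h}_{n}(x)=\int_0^1 P(x-t,d)\widetilde{g}_{_n}(t)\,\du t,\qquad
\wV_n\big|_{I_n}(x)=\int_0^1 P(x-t,d)\chi_{_n}(t)\,\du t.
\end{equation*}
The required identity then follows immediately from Fubini's theorem together with the evenness $P(-x,d)=P(x,d)$.

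For the strict inequality, choose $\chi_{_n}$ with $0\le\chi_{_n}\le 1$. The maximum principle gives $0\le\wV_n\le 1$, while the Poisson representation combined with $\chi_{_n}\ge \chi_{\omega_n}$ yields, for every $x\in I_n$,
\begin{equation*}
\wV_n(x)\ge \int_{\omega_n}P(x-t,d)\,\du t\ge |\omega_n|\,\min_{s\in\mathbb{R}/\mathbb{Z}}P(s,d).
\end{equation*}
Since $P(\cdot,d)$ is a smooth strictly positive function on the compact torus $\mathbb{R}/\mathbb{Z}$, its minimum is a positive constant depending only on $d=a/2$; together with $|\omega_n|\ge c>0$ from \Cref{def:decaycond2}, this delivers a uniform lower bound $\wV_n|_{I_n}\ge c_0>0$, and hence $1-\wV_n\le 1-c_0<1$ on $I_n$ for every $n\ge 1$. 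The main technical obstacle I anticipate is verifying rigorously that the Poisson representations apply in the correct functional classes (so that the convolution formula and Fubini are legitimate), which I expect to reduce to the standard Fourier-series analysis of periodic harmonic functions on a half-plane collected in the Appendix.
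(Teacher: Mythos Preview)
Your argument is essentially the paper's own: both obtain \eqref{eqn:Decay_estimate1} from the periodic Poisson representation of Appendix~\ref{apsec:Poisson} (the paper matches Fourier coefficients directly, you phrase the same computation as mean--value preservation plus the reciprocity $\int_{L_n}\widetilde h_n\chi_n=\int_{I_n}\widetilde g_n\wV_n$ via evenness of the kernel). Your treatment of the uniform bound is in fact more explicit than the paper's appeal to ``continuous dependence'': the pointwise estimate $\wV_n|_{I_n}\ge |\omega_n|\min_s P(s,d)$ together with $|\omega_n|\ge c$ gives the constant quantitatively; the only cosmetic correction is that the transverse distance $d=d(x_n,I_n)$ equals $a/2$ only when $v_2\perp v_1$, but this does not affect the argument since $d$ is the same for every $n$.
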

\begin{proof}
	From the truncation method, we have from \eqref{apeqn:periodpoisson}
	\[ \widetilde{h}_n(x) = \sum_{m\in \mathbb{Z}} \widehat{g}_n(m)\eu^{-2\pi |m|d(x_n,I_n) }\eu^{\iu 2\pi m x_1 }, \]
	where $\widehat{g}_n(m)$ is the $m$\textsuperscript{th}  Fourier coefficient, given by 
	\[ \widehat{g}_n(m) \triangleq \int_0^1 \widetilde{g}_n(x)\eu^{-\iu 2\pi m x}\du x.  \] 
	Recall that the distance between $x_n$ and $I_n$ is denoted by $d(x_n,I_n)$. Thus, the $m$\textsuperscript{th}  Fourier coefficient of $\widetilde{h}_n$ is 
	\[ \widehat{h}_n(m) =  \widehat{g}_n(m)\eu^{-2\pi |m|d(x_n,I_n) }.\]
	Suppose that the Fourier coefficients of $1-\chi_{_{n}}$ are given by $\{\widehat{b}_{n}(m)\}_{m\in\mathbb{Z}}$, then we can directly compute 
	\begin{equation*}
		\int_{I_{n+1}} \widetilde{g}_{_{n+1}}\du\sigma(y) = \sum_{m\in\mathbb{Z}} \widehat{g}_n(m)\widehat{b}_n(-m)\eu^{-2\pi |m|d(x_n,I_n) }.
	\end{equation*}
	From the definition of the auxiliary problem, we have $\widetilde{V}_n>0$ on $I_{n}$, and 
	\[ 1-\widetilde{V}_n(x_1) = \sum_{m\in\mathbb{Z}} \widehat{b}_n(m)\eu^{-2\pi|m|d(x_n,I_n)}\eu^{\iu2\pi m x_1}. \]
	It follows that 
	\[ \widetilde{g}_n (1-\widetilde{V}_n) = \sum_{n\in \mathbb{Z}}\bigg( \sum_{m\in\mathbb{Z}} \widehat{g}_n(m)\widehat{b}_n(n-m)\eu^{-2\pi|n-m|d(x_n,I_n)}\bigg)\eu^{\iu 2\pi n x_1}. \] 
	Integrating over $I_n$ gives the desired result.
	
	We now turn to estimating the upper bound of $1-\widetilde{V}_n$ on $I_n$. By the maximum principle, $\widetilde{V}_n\in(0,1)$, it only remains to show that there exists a uniform upper bound for $1-\widetilde{V}_n$ with respect to $n\ge 1$. This follows from the $\mathbb{L}^\infty$-continuous dependence of $\widetilde{V}_n$ on the parameters of the boundary value problem. 
\end{proof} 

Therefore, we have 
\begin{equation}
	\int_{I_{n+1}}\widetilde{g}_{_{n+1}}\du\sigma(y) \le \rho_n   \int_{I_{n}}\widetilde{g}_{_n}\du\sigma(y). 
\end{equation}
Here, the constants $\{\rho_n\}_{n\ge 1}$ have a uniform upper bound $\rho<1$.
From this and from \Cref{thm:Control_of_Trunc}, we are led to 
\[\int_{I_{n+1}}U_0^0\du \sigma(y)\le \int_{I_{n+1}}\widetilde{g}_{_{n+1}}\du \sigma(y) \le \rho^{n}\int_{I_1}U_0^0\du \sigma(y),\quad n\in \mathbb{Z}^+. \]
Since $\widetilde{g}_1$ is not larger than 1 by the maximum principle, one can prove the first part of the estimate \eqref{eqn:Decay_Condition1} for all $q\in \mathbb{Z}$.

For $n\le -1$, similar estimates can be obtained by considering the restriction of $U^0_0$ in $ \cup_{n\le -1} Y_n\setminus D_n $, as illustrated in \Cref{fig:TruncateMethodVWm}. 
\begin{figure}[htbp]
	\centering
	\includegraphics[width = 0.45\textwidth]{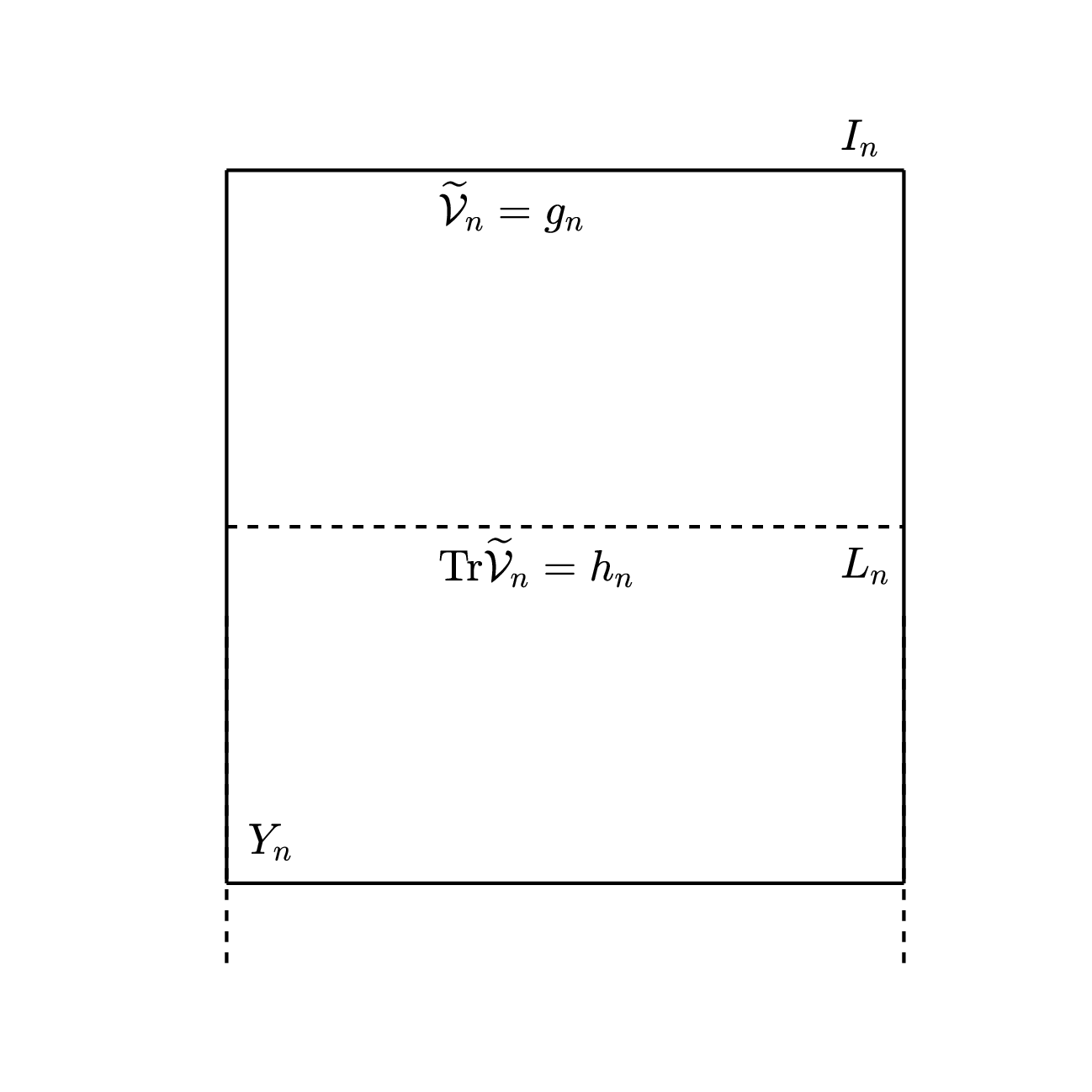}
	\hspace{-0.4cm}
	\includegraphics[width = 0.45\textwidth]{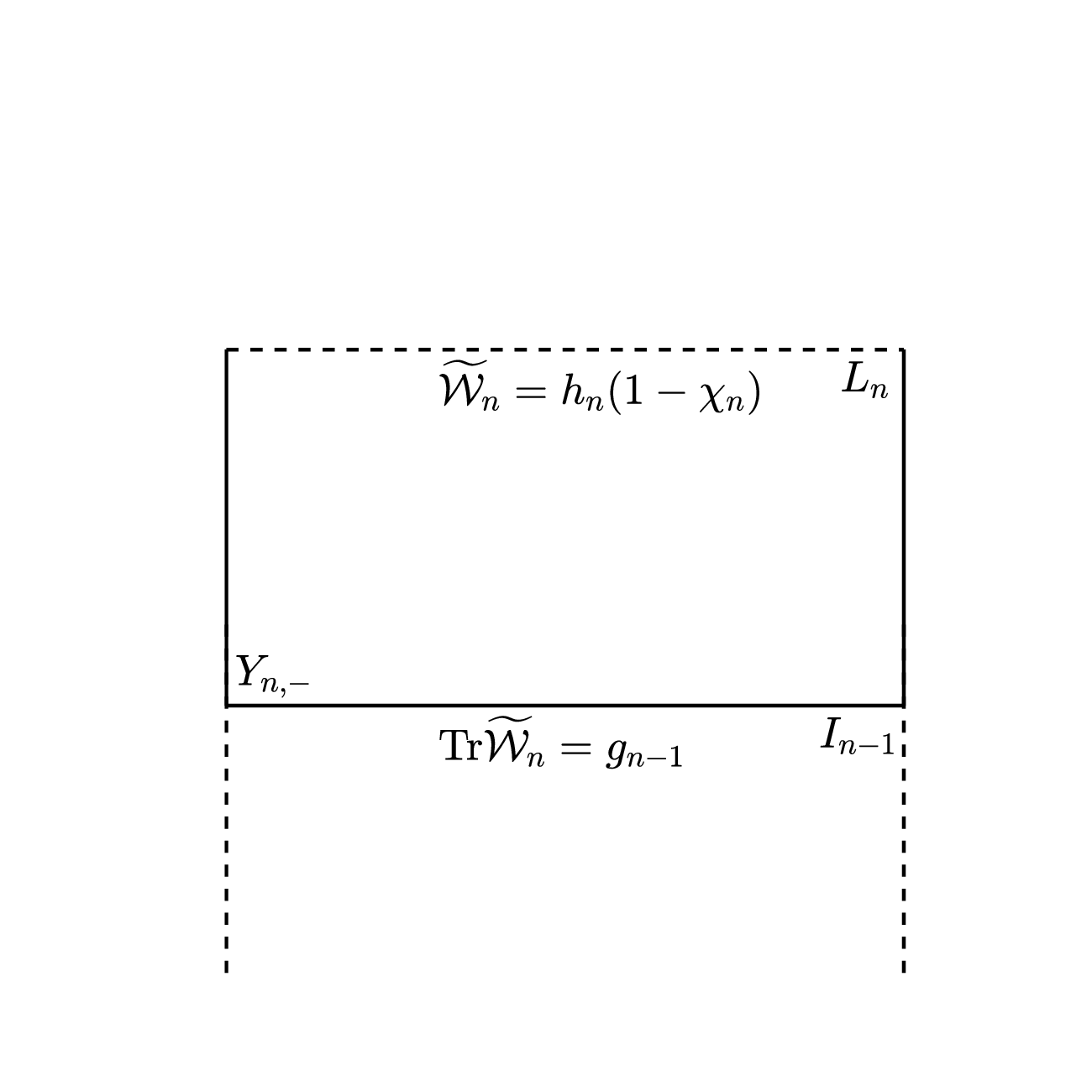}
	\caption{Left panel: Illustration of $\wcV_n$ and the derivation of $h_n$ for $n<0$. Right panel: Illustration of $\wcW_n$ and the derivation of $g_{n+1}$ for $n<0$.}
	\label{fig:TruncateMethodVWm}
\end{figure} 
Thus, we can also define $\{ \wcV_n,\wcW_n \}_{n\le -1}$ and the related trace $\{ g_{_{n}} \}_{n\le -1}$ on $ \{ I_{n} \}_{n\le -1}$ and prove that the following estimate holds
\begin{equation}
	\int_{I_{-n}}U^0_0\du \sigma(y)\le \int_{I_{-n}}\widetilde{g}_{_{-n}}\du \sigma(y) \le \rho^{n}\int_{I_0}U_0^0\du \sigma(y),\quad n\in \mathbb{Z}^+.
\end{equation}

Combining the above arguments, we obtain the full estimate \eqref{eqn:Decay_Condition1}.

\subsection{Estimate on the boundary $\partial \wY$}
In this subsection, we estimate the trace of $U^0_0$ on the boundary $\partial \wY$, which will be used in \Cref{sec:estimate_full}. From the previous definitions, it can be expressed as 
\[\partial \wY = \wI_{2,0}\cup \wI_{2,1}.\]
We first present a version of the trace inequality which can be easily derived from the fundamental theorem of calculus.
\begin{proposition}
	The following estimate holds for all $ 1\le  p <\infty $ and $u\in \mathbb{W}^{1,p}(\wY)$:
	\[ \int_{\partial \wY} |u|^p(y)\du\sigma(y) \le C_p\bigg[\int_{\wY} |u|^p(x)\du x  + \int_{\wY} |\partial_{x_1}u|^p(x)\du x \bigg], \]
	where $C_p$ is a positive constant that depends only on $p$.
\end{proposition}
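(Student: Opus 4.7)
The plan is to derive the trace inequality pointwise in the $x_2$-variable and then integrate out $x_2$. Since $\wY$ is the infinite strip spanned by one period in the $v_{_1}$-direction (which, in the square-lattice setting of the paper, we may take to be $[0,1]\times\mathbb{R}$), its boundary $\partial\wY=\wI_{2,0}\cup\wI_{2,1}$ consists of two parallel lines transverse to $\partial_{x_1}$. The key idea is that the value of $u$ on these lines can be recovered from any interior vertical slice by the fundamental theorem of calculus.

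Concretely, for $u\in C^{\infty}(\overline{\wY})\cap \mathbb{W}^{1,p}(\wY)$ and almost every $x_2\in\mathbb{R}$, for any $s\in[0,1]$ I would write
\[
u(0,x_2)=u(s,x_2)-\int_0^{s}\partial_{x_1}u(t,x_2)\,\du t,
\]
so that
\[
|u(0,x_2)|\le |u(s,x_2)|+\int_0^{1}|\partial_{x_1}u(t,x_2)|\,\du t.
\]
Applying the elementary bound $(a+b)^p\le 2^{p-1}(a^p+b^p)$ and H\"older's inequality on the unit interval $[0,1]$ to the second term yields
\[
|u(0,x_2)|^p\le 2^{p-1}|u(s,x_2)|^p+2^{p-1}\int_0^{1}|\partial_{x_1}u(t,x_2)|^p\,\du t.
\]
Since the left-hand side does not depend on $s$, I would integrate this inequality in $s$ over $[0,1]$, which converts the first term on the right into $2^{p-1}\int_0^1|u(s,x_2)|^p\,\du s$.

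Integrating the resulting pointwise inequality in $x_2\in\mathbb{R}$ and using Fubini gives
\[
\int_{\wI_{2,0}}|u|^p\,\du\sigma\le 2^{p-1}\int_{\wY}|u|^p\,\du x+2^{p-1}\int_{\wY}|\partial_{x_1}u|^p\,\du x.
\]
The identical argument applied at $x_1=1$ (writing $u(1,x_2)=u(s,x_2)+\int_s^{1}\partial_{x_1}u(t,x_2)\,\du t$) yields the same estimate on $\wI_{2,1}$. Adding the two and extending from smooth functions to all of $\mathbb{W}^{1,p}(\wY)$ by a standard density argument completes the proof, with $C_p=2^{p}$.

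The only subtle point is the interchange of the integrals and the integrability along the lines $\wI_{2,0},\wI_{2,1}$, which is noncompact; but Fubini applies under the $\mathbb{W}^{1,p}(\wY)$ hypothesis, and the density of $C^\infty(\overline{\wY})\cap\mathbb{W}^{1,p}(\wY)$ in $\mathbb{W}^{1,p}(\wY)$ (via mollification in $x_2$ and extension/reflection across the two lateral boundaries) takes care of the regularity step. No truly hard step is expected.
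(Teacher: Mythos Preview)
Your proposal is correct and follows precisely the route the paper indicates: the paper does not give a detailed proof but merely states that the estimate ``can be easily derived from the fundamental theorem of calculus,'' which is exactly what you do. The only minor remark is that the paper allows a general lattice (so $\wY$ is a parallelogram-shaped strip rather than $[0,1]\times\mathbb{R}$), but your argument carries over verbatim after an affine change of variables, affecting only the constant $C_p$.
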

Combining the smoothness of $U_0^0$ and the above results, we obtain the following result. 
\begin{lemma}\label{lem:estimatebdry}
	The trace of $U_q^0$ on $\wI_{2,0}$ and $\wI_{2,1} $ belongs to $\mathbb{L}^2(\wI_{2,0})\cap \mathbb{L}^1(\wI_{2,0})$ and $\mathbb{L}^2(\wI_{2,1})\cap \mathbb{L}^1(\wI_{2,1})$, respectively. Moreover, there exists a real constant $M$ such that the following estimates hold for all $q\in\mathbb{Z}$:
	\begin{gather}
		\begin{aligned}
			\Vert U_{q}^0 \Vert_{\mathbb{L}^2(\wI_{2,0})}  + \Vert U_{q}^0 \Vert_{\mathbb{L}^2(\wI_{2,1})} &\le M <+\infty,\\
			\Vert U_{q}^0 \Vert_{\mathbb{L}^1(\wI_{2,0})}  + \Vert U_{q}^0 \Vert_{\mathbb{L}^1(\wI_{2,1})} &\le M <+\infty.
		\end{aligned}
	\end{gather}
\end{lemma}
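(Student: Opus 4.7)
The strategy is to apply the preceding trace proposition with $p=1,2$ to $U_q^0$, extended to all of $\wY$ by its constant boundary value $\delta_{nq}$ inside each $D_{0,n}$, so that the extended function lies in $\mathbb{W}^{1,p}_{\mathrm{loc}}(\wY)$ (continuity across $\partial\wD$ plus a vanishing weak gradient inside each inclusion). The lemma then reduces to bounding $\|U_q^0\|_{\mathbb{L}^p(\wY)}$ and $\|\partial_{x_1}U_q^0\|_{\mathbb{L}^p(\wY)}$ uniformly in $q$ for $p=1,2$, which I would achieve through the cellwise decomposition $\wY=\bigcup_{n\in\mathbb{Z}} Y_{0,n}$ combined with an exponential cell-to-cell decay.

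The cellwise bound I aim for is $\|U_q^0\|_{\mathbb{L}^\infty(Y_{0,n})}\le C\rho^{|n-q|}$, from which the required integral estimates follow by summing a geometric series. I would obtain it by sharpening the truncation argument of \Cref{sec:Decay_periodic}: the pointwise dominations $U_q^0\le\wcV_n$ on $Y_{n,-}$ and $U_q^0\le\wcW_n$ on $Y_{n,+}$ supplied by \Cref{thm:Control_of_Trunc}, combined with the explicit Fourier/Poisson representation \eqref{apeqn:periodpoisson} evaluated across the uniform positive transverse distance $d(x_n,I_n)\ge r+c>0$, yield an $\mathbb{L}^1$-to-$\mathbb{L}^\infty$ smoothing estimate of the form $\|\wcV_n\|_{\mathbb{L}^\infty},\,\|\wcW_n\|_{\mathbb{L}^\infty}\le C\|\widetilde{g}_{_n}\|_{\mathbb{L}^1(I_n)}$. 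Inserting the geometric $\mathbb{L}^1$-decay $\|\widetilde{g}_{_n}\|_{\mathbb{L}^1(I_n)}\le C\rho^{|n-q|}$ derived after \Cref{thm:DecayPer} produces the desired cellwise $\mathbb{L}^\infty$-decay once $|n-q|$ is large, while the finitely many remaining cells are absorbed into the constant $M$ via the maximum-principle bound $0\le U_q^0\le 1$.

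For the derivative term, \Cref{def:decaycond2}(1) guarantees a uniform positive distance between each inclusion $D_{0,n}$ and the boundary $\partial Y_{0,n}$, and the Dirichlet data of $U_q^0$ on $\partial D_{0,n}$ are constant. Interior Schauder estimates applied to the harmonic function $U_q^0$ on $Y_{0,n}\setminus\overline{D_{0,n}}$ then give
\[ \|\nabla U_q^0\|_{\mathbb{L}^\infty(Y_{0,n}\setminus D_{0,n})}\le C\|U_q^0\|_{\mathbb{L}^\infty(Y_{0,n})}\le C\rho^{|n-q|}, \]
with $C$ independent of $n,q$ by the uniform geometric assumptions on the inclusions; since the extension is constant inside $D_{0,n}$, the same bound carries over to all of $Y_{0,n}$. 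Summation in $n$ then closes the bound on $\|\partial_{x_1}U_q^0\|_{\mathbb{L}^p(\wY)}$, and applying the trace proposition to the $p=1$ and $p=2$ cases yields simultaneously the $\mathbb{L}^1$ and $\mathbb{L}^2$ estimates on $\wI_{2,0}\cup\wI_{2,1}$ claimed in the lemma.

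The main obstacle I anticipate is the upgrade from the $\mathbb{L}^1$-decay of $\widetilde{g}_{_n}$ delivered by \Cref{sec:Decay_periodic} to the pointwise cellwise $\mathbb{L}^\infty$-decay of $U_q^0$ required here. The argument relies crucially on the exponential smoothing property of the periodic Poisson kernel across the uniformly positive transverse distance $d(x_n,I_n)$, and on careful bookkeeping of the constants in both the Schauder estimate and the Poisson-kernel bound to ensure that they remain uniform in $n$ and $q$.
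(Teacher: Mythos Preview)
Your approach is correct and shares the paper's overall scaffolding---truncation-method dominations, Poisson-kernel estimates, and the trace proposition---but the execution differs in one key respect. The paper never estimates $\nabla U_q^0$: it applies the trace inequality to the dominating functions $\wcV_n,\wcW_n$ themselves (whose $\partial_{x_1}$-derivatives are explicit via the periodic Poisson representation), bounds $\|\wcV_n\|_{\mathbb{L}^p}$ and $\|\partial_{x_1}\wcV_n\|_{\mathbb{L}^p}$ on each half-cell, and only then transfers to $U_q^0$ through the pointwise inequality $0\le U_q^0\le \wcV_n$ (resp.\ $\wcW_n$) on the boundary $\partial\wY$. You instead pass through a cellwise $\mathbb{L}^\infty$-decay of $U_q^0$ and invoke Schauder estimates for $\nabla U_q^0$ before tracing. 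Your route yields the stronger intermediate conclusion $\|U_q^0\|_{\mathbb{L}^\infty(Y_n)}\le C\rho^{|n-q|}$, at the price of requiring uniform-in-$n$ boundary Schauder constants near $\partial D_{0,n}$---an implicit uniform-regularity hypothesis on the inclusion boundaries that the paper's argument sidesteps entirely.

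One detail to tighten: the smoothing bound $\|\wcV_n\|_{\mathbb{L}^\infty}\le C\|\widetilde g_n\|_{\mathbb{L}^1}$ fails on $Y_{n,-}$, since that half-cell touches the data line $I_n$. The clean fix is to go one step back in the iteration: dominate $U_q^0$ on all of $Y_n$ by $\wcW_{n-1}$, whose sup-norm is controlled by the maximum principle together with the genuine $\mathbb{L}^1\!\to\!\mathbb{L}^\infty$ smoothing $\|\widetilde h_{n-1}\|_{\mathbb{L}^\infty(L_{n-1})}\le C\|\widetilde g_{n-1}\|_{\mathbb{L}^1(I_{n-1})}$ across the positive distance $d(x_{n-1},I_{n-1})$. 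You already flag this bookkeeping as the main obstacle, so this is only a matter of writing the indices correctly.
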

\begin{proof}
	From the above analysis, we can deduce from the properties of the Poisson kernel that 
	\begin{equation}\label{eqn:L2estimatebdry}
		\sum_{|n|\ge 1} \Big[\Vert  \wcV_n \Vert_{\mathbb{L}^2(Y_{n,-})} + \Vert  \wcW_n \Vert_{\mathbb{L}^2(Y_{n,+})}\Big] \le M<+\infty.
	\end{equation}		
	Considering the derivative $\{ \partial_{x_1}\wcV_n,\partial_{x_1} \wcW_n \}_{|n|\ge 1}$ and slightly modifying the above procedures, we can further show that 
	\begin{equation}\label{eqn:L2estimatebrdy_deri}
		\sum_{|n|\ge 1}\Big[ \Vert  \partial_{x_1}\wcV_n \Vert_{\mathbb{L}^2(Y_{n,-})} + \Vert  \partial_{x_1}\wcW_n \Vert_{\mathbb{L}^2(Y_{n,+})}\Big] \le M <+\infty.
	\end{equation}
	Then the trace theorem together with \Cref{thm:Control_of_Trunc} implies 
	\[ \Vert U_{q}^0 \Vert_{\mathbb{L}^2(\wI_{2,0})}  + \Vert U_{q}^0 \Vert_{\mathbb{L}^2(\wI_{2,1})} \le M <+\infty. \]
	
	Combining \eqref{eqn:L2estimatebdry} and \eqref{eqn:L2estimatebrdy_deri}, one has
	\begin{equation}\label{eqn:estimateW1}
		\sum_{|n|\ge 1}\Big[ \Vert  \wcV_n \Vert_{\mathbb{W}^{1,1}(Y_{n,-})} + \Vert  \wcW_n \Vert_{\mathbb{W}^{1,1}(Y_{n,+})}\Big] <+\infty.
	\end{equation}
	Then the trace theorem gives 
	\[ \Vert U_{q}^0 \Vert_{\mathbb{L}^1(\wI_{2,0})}  + \Vert U_{q}^0 \Vert_{\mathbb{L}^1(\wI_{2,1})} \le M <+\infty. \]
	Note that the above estimates are uniform with respect to $q$. 
\end{proof}

\section{Proof of estimates \eqref{eqn:Decay_Condition2}}\label{sec:estimate_full}
In this section, we establish estimate \eqref{eqn:Decay_Condition2}. In the absence of periodicity in the $v_2$ direction, we can analogously define the truncation method on $\mathbb{R}$ together with its associated auxiliary problem. \par 
From Assumption \ref{def:decaycond2}, the distance $d(x_{m,n},I_{2,m,n})$ between $x_{m,n}$ and $I_{2,m,n}$ satisfies 
\[ d(x_{m,n},I_{2,m,n})\ge r+c>0. \]

Again, by the uniqueness theorem, we can show that the unique solution to the following problem is $U_{0,0}$. 
\begin{gather}\label{eqn:truncateProblemfull}
	\left\{\begin{aligned}
		&\Delta U_{+}(x) = 0,\quad x \in \cup_{m\ge 1}\wY_m\setminus \wD_m,\\
		&U_{+}(y)  = 0,\quad y \in \cup_{m\ge 1}\partial \wD_m,\\
		&U_{+}(y)  = U_{0,0}(y),\quad y \in \wI_{2,1}.
	\end{aligned}\right.
\end{gather}
From \Cref{lem:estimatebdry}, we have 
\[ U_{0,0}\big|_{\wI_{2,1}} \in \mathbb{L}^1(\wI_{2,1})\cap \mathbb{L}^2(\wI_{2,1}).\]
We also let $U_+$ be $0$ inside the inclusions $\wD_m$ for all $m\ge 1$.

\subsection{Truncation method on $\mathbb{R}$}
In this subsection, we introduce the truncation method on $\mathbb{R}$ to estimate $U_+$ on the boundary $\wI_{2,m}$. We first note that each line $L_{2,m,n}$ divides the unit cell $\wY_{m}$ into two parts, $\wY_{m,n,+}$ and $\wY_{m,n,-}$, as shown in \Cref{fig:TruncateCell}. 
\begin{figure}[htbp]
	\centering
	\includegraphics[width = 0.5\textwidth]{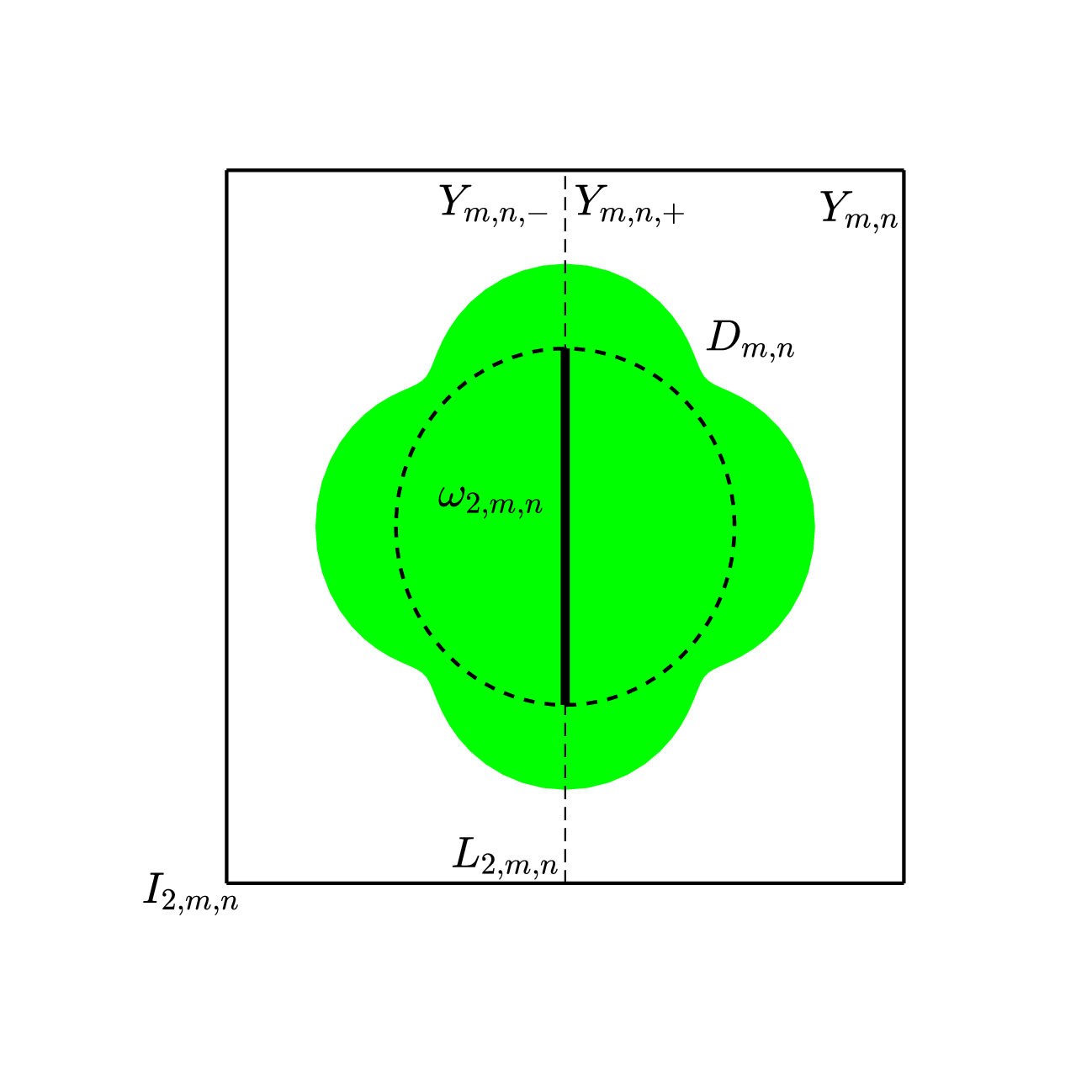}
	\caption{Cell $Y_{m,n}$. The line $L_{2,m,n}$ divides the cell into two parts, denoted as $Y_{m,n,+}$ and $Y_{m,n,-}$.}
	\label{fig:TruncateCell}
\end{figure} 
The procedure can also be split into three recursive steps. Given the boundary value $g_{_1} = U_{0,0}>0$ on $\wI_{2,1}$, for $m>0$,
\begin{itemize}
	\item[(1)] First, solve the following boundary value problem for the Laplace equation on the left half space for all $n\in\mathbb{Z}$:
	\begin{gather}
		\left\{\begin{aligned}
			&\Delta \cV_{m,n}(x) = 0,\quad x\in \cup_{l\ge m} \wY_l,\\
			& \cV_{m,n}(y) = g_{_m}(y),\quad y\in I_{2,m,n},\\
			& \cV_{m,n}(y) = 0,\quad y\in \wI_{2,m}\setminus I_{2,m,n}.
		\end{aligned}\right.
	\end{gather}
	We can take its trace on the lines $\{L_{2,m,n}\}_{n\in\mathbb{Z}}$, denoted by $\{h_{m,n}\}_{n\in\mathbb{Z}}$.
	
	\item[(2)] Next, solve the boundary value problems for all $n\in \mathbb{Z}$:
	\begin{gather}
		\left\{\begin{aligned}
			&\Delta \cW_{m,n}(x) = 0,\quad x\in \wY_{m,n,+}\cup\Big(\cup_{l\ge m+1} \wY_l\Big),\\
			& \cW_m(y) = h_{m,n}(1-\chi_{_{m,n}})(y),\quad y\in \wI_{2,m}.				
		\end{aligned}\right.
	\end{gather}
	Here, $\{\chi_{_{m,n}}\}_{n\in\mathbb{Z}}$ are positive smooth functions on $L_{2,m,n}$ satisfying 
	\[ \chi_{_{m,n}}(y) = 1,\quad y\in \omega_{2,m,n},\quad \text{and }\operatorname{supp}\chi_{_{m,n}}\subset L_{2,m,n}\cap D_{m,n}. \]
	Its trace on $\wI_{2,m+1}$ is denoted by $\{g_{_{m+1,n}}\}_{n\in\mathbb{Z}}$. Taking the summation with respect to $n$ gives $g_{_{m+1}}$
	\begin{equation}\label{eqn:reconssum}
		g_{_{m+1}} \triangleq \sum_{n\in\mathbb{Z}} g_{_{m+1,n}} \in \mathbb{L}^1(\wI_{2,m+1})\cap \mathbb{L}^2(\wI_{2,m+1}). 
	\end{equation}
	\item[(3)] Repeat Step (1) and Step (2).
\end{itemize} 
\begin{remark}
	It is straightforward to check that the trace of $\{ \cV_{m,n} \}_{n\in\mathbb{Z}}$ on $L_{2,m,n}$ satisfies $\{h_{m,n}\}_{n\in\mathbb{Z}} \subset \mathbb{L}^1(L_{2,m,n})\cap \mathbb{L}^2(L_{2,m,n})$, given $g_{_m}\in \mathbb{L}^1(\wI_{2,m})\cap \mathbb{L}^2(\wI_{2,m})$. Moreover, we have 
	\begin{gather*}
		\begin{aligned}
			\sum_{n\in\mathbb{Z}} \Vert  h_{m,n} \Vert_{\mathbb{L}^1(L_{2,m,n})} \le \Vert g_{_m} \Vert_{\mathbb{L}^1(\wI_{2,m})},  \quad
			\sum_{n\in\mathbb{Z}} \Vert  h_{m,n} \Vert^2_{\mathbb{L}^2(L_{2,m,n})} \le \Vert g_{_m} \Vert^2_{\mathbb{L}^2(\wI_{2,m})}  .
		\end{aligned}
	\end{gather*}
	From the properties of the Poisson kernel, we have 
	\begin{gather*}
		\begin{aligned}
			\sum_{n\in\mathbb{Z}} \Vert  g_{m,n} \Vert_{\mathbb{L}^1(L_{2,m,n})} \le \Vert g_{_m} \Vert_{\mathbb{L}^1(\wI_{2,m})}, \quad 		\sum_{n\in\mathbb{Z}} \Vert  g_{m,n} \Vert^2_{\mathbb{L}^2(L_{2,m,n})} \le \Vert g_{_m} \Vert^2_{\mathbb{L}^2(\wI_{2,m})},
		\end{aligned}
	\end{gather*}
	which justifies the summation \eqref{eqn:reconssum} in Step (2).
\end{remark}

We can also prove the following theorem for the truncation method on $\mathbb{R}$:
\begin{theorem}
	Given the solutions $\{\{ \cV_{m,n}\}_{n\in\mathbb{Z}},\{ \cW_{m,n} \}_{n\in\mathbb{Z}} \}_{m\ge 1} $ obtained by the truncation method on $\mathbb{R}$, it holds that for all $m\ge1$,
	\[ U_{0,0}\le g_{_{m}},\quad y\in \wI_{2,m}. \]
\end{theorem}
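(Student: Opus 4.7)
The plan is to prove the claim by induction on $m \ge 1$, mirroring the structure of \Cref{thm:Control_of_Trunc} but handling the non-periodic $v_{_2}$-direction by first decomposing an auxiliary harmonic function along the pieces of $g_m$ supported on the segments $I_{2,m,n}$.

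The base case $m=1$ is immediate since, by construction, $g_1 = U_{0,0}\big|_{\wI_{2,1}}$. For the inductive step, assume $U_{+} \le g_m$ on $\wI_{2,m}$, where $U_{+} = U_{0,0}$ extended by $0$ inside the inclusions, as in \eqref{eqn:truncateProblemfull}. I would first introduce the (unique decaying) harmonic function $\bar U$ on $\cup_{l\ge m}\wY_l \setminus \cup_{l\ge m}\wD_l$ with boundary data $g_m$ on $\wI_{2,m}$ and $0$ on every $\partial \wD_l$ for $l\ge m$. The inductive hypothesis together with the maximum principle gives $\bar U \ge U_{+}$ on this domain. By linearity, I would further decompose $\bar U = \sum_{n\in\mathbb{Z}} U^{(n)}$, where $U^{(n)}$ is the analogous harmonic function whose boundary trace on $\wI_{2,m}$ equals $g_m$ on $I_{2,m,n}$ and vanishes elsewhere; summability follows from the $\mathbb{L}^1\cap\mathbb{L}^2$ bounds of \Cref{lem:estimatebdry} combined with the Poisson-kernel decay recorded in Appendix~\ref{apsec:Poisson}.

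With the decomposition in hand, each piece $U^{(n)}$ can be compared to the truncation objects essentially as in the periodic case. First, the difference $\cV_{m,n} - U^{(n)}$ is harmonic on $\cup_{l\ge m}\wY_l \setminus \cup_{l\ge m}\wD_l$, vanishes on $\wI_{2,m}$ (both functions carry identical data there), and is strictly positive on every $\partial \wD_l$ (since $\cV_{m,n}>0$ by the strong maximum principle while $U^{(n)}\equiv 0$); a maximum-principle argument on this unbounded domain yields $\cV_{m,n} \ge U^{(n)}$, so that $h_{m,n} \ge U^{(n)}\big|_{L_{2,m,n}}$. Because $U^{(n)}$ vanishes inside $D_{m,n}$, $\chi_{m,n}\equiv 1$ on $\omega_{m,n}\subset D_{m,n}$, and $1-\chi_{m,n}\equiv 1$ outside $D_{m,n}$, the truncated boundary data still dominates pointwise: $h_{m,n}(1-\chi_{m,n}) \ge U^{(n)}\big|_{L_{2,m,n}}$ on $L_{2,m,n}$. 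A second maximum-principle comparison applied to $\cW_{m,n} - U^{(n)}$ on $\wY_{m,n,+}\cup(\cup_{l\ge m+1}\wY_l)$ minus the inclusions then gives $\cW_{m,n} \ge U^{(n)}$ there. Taking traces on $\wI_{2,m+1}$ and summing over $n$ finally produces
\[
g_{m+1} \;=\; \sum_{n} g_{m+1,n} \;\ge\; \sum_{n} U^{(n)}\big|_{\wI_{2,m+1}} \;=\; \bar U\big|_{\wI_{2,m+1}} \;\ge\; U_{+}\big|_{\wI_{2,m+1}},
\]
closing the induction.

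The main obstacle I anticipate is making the linear decomposition $\bar U = \sum_{n} U^{(n)}$ rigorous and justifying the maximum principle on the two unbounded domains, which extend to infinity both in the $\pm v_{_2}$-direction and in $+v_{_1}$. Concretely, I would need existence, uniqueness, and sufficient decay of each $U^{(n)}$ to rule out a positive maximum escaping to spatial infinity; I expect this to follow from the capacitance/layer-potential framework used elsewhere in the paper, together with the explicit Poisson-kernel estimates on the half-strip from Appendix~\ref{apsec:Poisson} and the $\mathbb{L}^1\cap\mathbb{L}^2$ trace bounds of \Cref{lem:estimatebdry}. Once these analytic underpinnings are in place, each of the two comparison steps is essentially a direct transcription of the corresponding step in \Cref{thm:Control_of_Trunc}.
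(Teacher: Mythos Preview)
Your proposal is correct and follows essentially the same induction-plus-maximum-principle strategy as the paper, including the key idea of decomposing along the segments $I_{2,m,n}$ and comparing piecewise with $\cV_{m,n}$ and then $\cW_{m,n}$. The only organizational difference is that the paper decomposes $U_{+}$ directly into pieces $U_{+,m,n}$ carrying the boundary data $U_{0,0}\chi_{I_{2,m,n}}$ (so the inductive hypothesis $U_{0,0}\le g_{m}$ is invoked at the moment of comparing with $\cV_{m,n}$), whereas you first pass to an intermediate $\bar U$ with data $g_{m}$, use the inductive hypothesis once to get $\bar U\ge U_{+}$, and then decompose $\bar U$; the two arrangements are equivalent and neither buys anything the other does not.
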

\begin{proof}
	The proof proceeds by induction. First, $U_{+}(y) = g_{_1}$ on $\wI_{2,1}$. Assuming $U_{+}\le g_{_m}$ on $\wI_{2,m}$, consider the harmonic function $U_{+}$ on the half space $\cup_{l\ge m}\wY_{l}$. We decompose it as $U_{+} = \sum_{n\in\mathbb{Z}}U_{+,m,n}$, where the harmonic function $U_{+,m,n}$ is defined by
	\begin{gather*}
		\begin{aligned}
			\left\{\begin{aligned}
				&\Delta U_{+,m,n}(x) = 0,\quad x \in \cup_{l\ge m}\wY_l\setminus \wD_l,\\
				&U_{+,m,y}(y)  = 0,\quad y \in \cup_{l\ge m}\partial \wD_l,\\
				&U_{+,m,n}(y)  = U_{0,0}(y),\quad y \in I_{2,m,n},\\
				&U_{+,m,n}(y)  = 0,\quad y \in \wI_{2,m}\setminus I_{2,m,n}.
			\end{aligned}\right.
		\end{aligned}
	\end{gather*} 
	From this, consider the harmonic function $\cV_{m,n}-U_{+,m,n}$. It can be directly verified that for all $n\in \mathbb{Z}$, $\cV_{m,n}-U_{+,m,n}$ satisfies
	\begin{gather*}
		\left\{\begin{aligned}
			&\Delta \big(\cV_{m,n}-U_{+,m,n}(x)\big) = 0,\quad x \in \cup_{l\ge m}\wY_l\setminus \partial \wD_l,\\
			&\cV_{m,n}(y)-U_{+,m,n}(y)  > 0,\quad y \in \cup_{l\ge m}\partial \wD_l,\\
			&\cV_{m,n}(y)-U_{+,m,n}(y)  \ge 0 ,\quad y \in I_{2,m,n},\\
			&\cV_{m,n}(y)-U_{+,m,n}(y)  = 0,\quad y \in \wI_{2,m}\setminus I_{2,m,n}.
		\end{aligned}\right.
	\end{gather*} 
	Thus, it follows that $\cV_{m,n}(y)-U_{+,m,n}$ is nonnegative for all $x\in \wY_{m}$, which implies $h_{m,n}\ge U_{+,m,n}$ on $L_{2,m,n}$. Since $ \{U_{+,m,n}\}_{n\in\mathbb{Z}}$ is equal to zero in $\wD_{m}$, it follows that 
	\[ h_{m,n}(1-\chi_{_{m,n}})(y)\ge U_{+,m,n},\quad y\in L_{2,m,n}. \]
	Next, consider the harmonic function $\cW_{m,n}-U_{+,m,n}$. Again, by definition, one has 
	\begin{gather*}
		\left\{\begin{aligned}
			&\Delta \big(\cW_{m,n}-U_{+,m,n}(x)\big) = 0,\quad x \in (\wY_{m,n,+}\setminus \partial \wD_m)\bigcup(\bigcup_{l\ge m+1}\wY_l\setminus \partial \wD_l),\\
			&\cW_{m,n}(y)-U_{+,m,n}(y)  > 0,\quad y \in \cup_{l\ge m+1}\partial \wD_l,\\
			&\cW_{m,n}(y)-U_{+,m,n}(y)  \ge 0 ,\quad y \in L_{2,m,n}.
		\end{aligned}\right.
	\end{gather*} 
	Therefore, we conclude that $\cW_{m,n}\ge U_{+,m,n}$ in the relevant domain. Taking the trace of $\cW_{m,n}$ on $\wI_{2,m+1}$, denoted by $g_{_{m,n}}$, it can be verified that $g_{_{m,n}} \ge U_{+,m,n}$. Taking the summation with respect to $n$ gives the desired result.
\end{proof}

\subsection{Exponential decay of the integral on $\wI_{2,m}$}
In this subsection, we estimate the decay of $\{g_{_{m}}\}_{m\ge 1}$ with respect to $m$, where the functions $g_{_m}$ are determined by the truncation method on $\mathbb{R}$. We prove the following theorem.	
\begin{theorem}\label{thm:DecayFull}
	Given the functions $\{ g_{_m} \}_{m\ge 1}$ determined by the truncation method on $\mathbb{R}$, we have 
	\begin{equation}
		\int_{\wI_{2,m+1}}g_{_{m+1}}\du\sigma(y) \le \rho\int_{\wI_{2,m}}g_{_m}\du\sigma(y).
	\end{equation}
	Here, the parameter $\rho< 1$ for all $m\ge 1$.  
\end{theorem}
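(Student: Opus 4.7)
The proof plan mirrors the structure of \Cref{thm:DecayPer} for the periodic case, replacing the Fourier series representation by the Poisson integral on the half-plane and inserting a sum over $n \in \mathbb{Z}$. First I would derive, for each $n$, the cell-wise identity
\[
\int_{\wI_{2,m+1}} g_{_{m+1,n}}\,\du\sigma(y) = \int_{I_{2,m,n}} g_{_m}(y)\bigl(1 - V_{m,n}(y)\bigr)\,\du\sigma(y),
\]
where $V_{m,n}$ is an auxiliary harmonic function defined on the left half-plane $\wY_{m,n,-}\cup\bigl(\cup_{l\le m-1}\wY_l\bigr)$ by $\Delta V_{m,n}=0$ with boundary value $V_{m,n}=\chi_{_{m,n}}$ on $L_{2,m,n}$. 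Summing this identity over $n$ (recall $g_{_{m+1}} = \sum_n g_{_{m+1,n}}$) and showing that $1 - V_{m,n}$ is uniformly bounded above by some $\rho<1$ on $I_{2,m,n}$ then yields the stated estimate.

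Since $g_{_m}|_{I_{2,m,n}}$ and $h_{m,n}(1-\chi_{_{m,n}})$ both lie in $\mathbb{L}^1\cap\mathbb{L}^2$ of their respective boundary sets (as established in \Cref{lem:estimatebdry} and preserved by the truncation), the solutions $\cV_{m,n}$ and $\cW_{m,n}$ admit explicit representations through the half-plane Poisson kernel $P$. Integrating $g_{_{m+1,n}}$ over $\wI_{2,m+1}$, applying Fubini, and using the mass conservation $\int P(x,z)\,\du\sigma(x) = 1$ for integration along any line parallel to the boundary, I obtain
\[
\int_{\wI_{2,m+1}} g_{_{m+1,n}}\,\du\sigma = \int_{L_{2,m,n}} h_{m,n}(1-\chi_{_{m,n}})\,\du\sigma.
\]
The same computation one level lower gives $\int_{L_{2,m,n}} h_{m,n}\,\du\sigma = \int_{I_{2,m,n}} g_{_m}\,\du\sigma$. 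For the remaining term, the Poisson kernel from $\wI_{2,m}$ evaluated at a point of $L_{2,m,n}$ and the Poisson kernel from $L_{2,m,n}$ evaluated at a point of $\wI_{2,m}$ coincide, since both equal $\tfrac{1}{\pi}\tfrac{d}{d^2+(\Delta x_2)^2}$ with the common horizontal distance $d=d(x_{m,n},I_{2,m,n})$. A second Fubini swap then produces
\[
\int_{L_{2,m,n}} h_{m,n}\,\chi_{_{m,n}}\,\du\sigma = \int_{I_{2,m,n}} g_{_m}\,V_{m,n}\,\du\sigma,
\]
which closes the identity.

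The main obstacle, and the step where \Cref{def:decaycond2} is essential, is the uniform lower bound $V_{m,n}(y)\ge c > 0$ for $y\in I_{2,m,n}$ with $c$ independent of $m,n$. From the explicit Poisson representation and $\chi_{_{m,n}}\ge \mathbf{1}_{\omega_{2,m,n}}$ one has
\[
V_{m,n}(y) \ge \int_{\omega_{2,m,n}} P(y,z)\,\du\sigma(z),
\]
and the hypotheses $|\omega_{2,m,n}|\ge c>0$ and $r < d(x_{m,n}, I_{2,m,n}) < C$, together with the fact that $|y_2-z_2|$ stays bounded inside a single cell, force $P(y,z)$ to be pointwise bounded below by an explicit positive constant depending only on $r$, $C$, $a$, and $c$. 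Uniformity in $m$ comes for free from translation invariance along $v_{_1}$; uniformity in the non-periodic index $n$ is precisely what the quantitative geometric bounds of \Cref{def:decaycond2} are designed to deliver. Setting $\rho = 1 - c < 1$ then completes the argument.
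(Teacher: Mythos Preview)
Your proof is correct and follows essentially the same strategy as the paper's own argument: derive the cell-wise identity $\int_{\wI_{2,m+1}} g_{_{m+1,n}} = \int_{I_{2,m,n}} g_{_m}(1-V_{m,n})$ via the auxiliary harmonic function $V_{m,n}$, then sum over $n$ and bound $1-V_{m,n}$ uniformly. The differences are purely presentational. The paper carries out the identity in Fourier variables (writing $h_{m,n}=\mathcal{F}^{-1}(\widehat{g}_{_{m,n}}\eu^{-2\pi|\xi|d})$ and invoking Plancherel), whereas you work directly with the Poisson kernel in the spatial domain, exploiting mass conservation $\int P = 1$ along parallel lines and the reflection symmetry $P(z,y)=P(y,z)$ between $L_{2,m,n}$ and $\wI_{2,m}$; these are dual formulations of the same computation. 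For the uniform bound $1-V_{m,n}\le\rho<1$, the paper simply appeals to ``continuous dependence on the parameters,'' while you give an explicit lower bound on $V_{m,n}|_{I_{2,m,n}}$ by minorizing the Poisson integral over $\omega_{2,m,n}$ using the quantitative constants of \Cref{def:decaycond2}; your version is more self-contained. One cosmetic discrepancy: the paper takes the boundary datum of $V_{m,n}$ to be the indicator $\chi_{_{\omega_{2,m,n}}}$ rather than the smooth cutoff $\chi_{_{m,n}}$ you use, but since $\chi_{_{m,n}}\ge \mathbf{1}_{\omega_{2,m,n}}$ this only strengthens your inequality.
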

\begin{proof}
	We first define the corresponding auxiliary problem. For $m\ge 1$ and $n\in\mathbb{Z}$, it is defined by the solution to the following boundary value problems on the left half space:
	\begin{gather}
		\left\{\begin{aligned}
			&\Delta V_{m,n} = 0,\quad x\in Y_{m,n,-}\cup \Big(\cup_{l\le n}\wY_l\Big),\\
			&V_{m,n} = \chi_{_{\omega_{2,m,n}}},\quad y\in L_{2,m,n}.
		\end{aligned}\right.
	\end{gather}
	Here, $ \chi_{_{\omega_{2,m,n}}}$ is the characteristic function on the line segment $ \omega_{_{2,m,n}} $. Then, from the truncation method on $\mathbb{R}$, we have by a direct calculation that
	\[ h_{m,n} = \mathcal{F}^{-1}\Big(\widehat{g}_{_{m,n}}\eu^{-2\pi |\xi|d(x_{m,n},\wI_{2,m})}\Big). \]
	The function $\widehat{g}_{_{m,n}}$ is defined by the Fourier transform of the function $g_{_m}\chi_{_{I_{2,m,n}}}$, where $ \chi_{_{I_{2,m,n}}}  $ denotes the characteristic function of the line segment $I_{2,m,n}$. Again, by the properties of Poisson's kernel and Step (2) of the truncation method, we have 
	\begin{gather*}
		\begin{aligned}
			\int_{\wI_{2,m+1}}g_{_{m+1,n}}\du\sigma(y) &= \int_{L_{2,m,n}}h_{m,n}\du\sigma(y)\\&\quad - \int_{L_{2,m,n}}\widehat{g}_{_{m,n}}(\eta)\eu^{-2\pi |\eta|d(x_{m,n},\wI_{2,m})}\widehat{\chi}_{_{\omega_{2,m,n}}}(-\eta)\du\sigma(\eta).\\
		\end{aligned}
	\end{gather*}
	Here, $\widehat{\chi}_{_{\omega_{2,m,n}}}$ denotes the Fourier transform of the characteristic function of $\omega_{_{2,m,n}}$. Then, it follows that
	\begin{gather*}
		\begin{aligned}
			\int_{\wI_{2,m}} g_{_m}V_{m,n}\du\sigma(y) &= \int_{L_{2,m,n}}\widehat{g}_{_{m,n}}(\eta)\eu^{-2\pi |\eta|d(x_{m,n},\wI_{2,m})}\widehat{\chi}_{_{\omega_{2,m,n}}}(-\eta)\du\sigma(\eta) .
		\end{aligned}
	\end{gather*} 
	Thus, we obtain 
	\begin{gather*}
		\begin{aligned}
			\int_{\wI_{2,m+1}}g_{_{m+1,n}}\du\sigma(y) = \int_{I_{2,m,n}}g_{_{m,n}}\du\sigma(y)- \int_{I_{2,m,n}}g_{_{m,n}}V_{m,n}\du\sigma(y).
		\end{aligned}
	\end{gather*}
	Next, we investigate the upper bound of $1-V_{m,n}$ on the compact set $I_{2,m,n}$. From the continuous dependence on the parameters, we have 
	\[ 1-V_{m,n}\le \rho<1,\quad \forall m\ge1 ,n\in\mathbb{Z}.  \]
	Taking the summation with respect to $n$, we have 
	\begin{gather*}
		\begin{aligned}
			\int_{\wI_{2,m+1}}g_{_{m+1}}\du\sigma(y) \le \rho \int_{\wI_{2,m}}g_{_{m}}\du\sigma(y).
		\end{aligned}
	\end{gather*}
	This proves the theorem.
\end{proof}

For $n\le -1$, similar estimates can also be obtained by considering the restriction of $U_{0,0}$ in $\cup_{n\le -1}\wY_n \setminus \wD_{n}$. Thus, the functions $$\{\{ \cV_{m,n}\}_{n\in\mathbb{Z}},\{ \cW_{m,n} \}_{n\in\mathbb{Z}} \}_{m\le -1} $$ and the related trace on $ \wI_{2,m}$ can also be proved to satisfy
\[
\int_{\wI_{2,-m}}U_{0,0}\du \sigma(y) \le \int_{\wI_{2,-m}}g_{_{-m}}\du\sigma(y) \le \rho^{m} \int_{\wI_{2,0}}U_{0,0}\du\sigma(y). \]
Combining the above arguments, we obtain the full estimate \eqref{eqn:Decay_Condition2}.

\begin{remark}
	Note that the overall methodology can be generalized to higher dimensional systems. More specifically, consider resonators $D_{\fu}$ in $\mathbb{R}^3$ that extend to infinity, with basis vectors $v_{_1},v_{_2},v_{_3}$ spanning the lattice points \[\Lambda = \{ mv_{_1}+nv_{_2}+pv_{_3}:m,n,p\in\mathbb{Z} \}. \] If the resonators are periodic with respect to $v_{_1} $ and $v_{_2}$, the above approach can be generalized to establish the exponential decay of the full capacitance matrix and the corresponding quasi-periodic capacitance matrix.  
\end{remark}

\begin{remark}
	\Cref{sec:Decay_periodic} and \Cref{sec:estimate_full} have established the exponential decay of the quasi-periodic capacitance matrix $\widehat{\mathbf{C}}^{\alpha}$, which implies that the capacitance coefficients decay exponentially in terms of the distance between the corresponding resonators. This means that we can obtain an accurate approximation by only considering the dominant contributions from nearest-neighbor interactions. Nevertheless, we emphasize that exponential decay is due to the periodic nature of the structure and does not apply to finite or dimensionally deficit structures \cite{ammari2023convergence}.
\end{remark}

\section{Subwavelength problem formulation}\label{sec:Subwavelength_capa}
While it is well-known that finite capacitance matrices can asymptotically characterize the subwavelength frequencies of acoustic systems, this property is not yet well-established for the infinite case. In this section, we prove that subwavelength frequencies $\omega\sim\mathcal{O}(\sqrt{\delta})$ corresponding to $\mathbb{L}^2$ localized modes are asymptotically determined from the quasi-periodic capacitance matrix when $\alpha \neq 0$. 

Our analysis relies on layer potential theory. The solution to \eqref{eqn:ProbFormq} can be represented by the single layer potential: 
\begin{gather}
	w(x) = \left\{\begin{aligned}
		&\mathcal{S}^{\alpha,k_1}_{\wD}[\phi](x),\quad x\in \wD,\\
		&\mathcal{S}^{\alpha,k_0}_{\wD}[\psi](x),\quad x\in \wY \setminus \overline{\wD},
	\end{aligned}\right.\quad \phi,\psi \in \mathbb{L}^2(\partial \wD).
\end{gather}

To present the main theorem, we first introduce some lemmas. 
\begin{lemma}\label{lem:kernel}
	The kernel of the operator $A\triangleq-\frac{1}{2}\operatorname{Id} +(\mathcal{K}^{-\alpha,0}_{\wD})^{\ast}:\mathbb{H}^{-\frac{1}{2}}(\partial \wD)\to \mathbb{H}^{-\frac{1}{2}}(\partial \wD)$ is given by 
	\[ \ker\big\{-\tfrac{1}{2}\operatorname{Id} +(\mathcal{K}^{-\alpha,0}_{\wD})^{\ast}\big\} = \sum_{n\in\mathbb{Z}}c_{n}\psi_{n}^{\alpha},\]
	where $ \psi_{n}^{\alpha} = \nu \cdot \nabla U_{n}^{\alpha} \big|_{+} \in \mathbb{L}^2(\partial \wD) $ and $\{ c_{n} \}_{n\in\mathbb{Z}}\in l^2(\mathbb{Z})$.
	Here, we recall that the quasi-periodic harmonic functions $U^{\alpha}_n$ are defined in \eqref{eqn:Harfuncp}.
\end{lemma}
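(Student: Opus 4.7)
The plan is to use the jump relations for the quasi-periodic single-layer potential at zero frequency to translate the kernel equation $A[\phi]=0$ into a statement about harmonic functions that are locally constant on $\wD$. For $\phi\in \mathbb{L}^2(\partial\wD)\subset \mathbb{H}^{-1/2}(\partial\wD)$, the function $u=\mathcal{S}^{\alpha,0}_{\wD}[\phi]$ is quasi-periodic harmonic in $\wY\setminus\partial\wD$, continuous across $\partial\wD$, and obeys
\[
\nu_y\cdot\nabla u\big|_{\pm}=\Big(\pm\tfrac{1}{2}\operatorname{Id}+(\mathcal{K}^{-\alpha,0}_{\wD})^{\ast}\Big)[\phi],\quad y\in \partial\wD,
\]
so that $A[\phi]=0$ is equivalent to $\nu_y\cdot\nabla u|_{-}=0$ on every $\partial D_{0,n}$.

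For the inclusion $\ker A\subseteq \{\sum_n c_n\psi_n^{\alpha}\}$, I would take $\phi\in\ker A$ and use harmonicity of $u$ on each connected component $D_{0,n}$ together with the vanishing interior Neumann trace to conclude that $u\equiv c_n$ on $D_{0,n}$ for some constant $c_n$, whence $u|_{\partial D_{0,n}}=c_n$ by continuity of the trace. The restriction of $u$ to $\wY\setminus\overline{\wD}$ is then a quasi-periodic harmonic function with Dirichlet data $\{c_n\}$, and by linearity together with the uniqueness of \eqref{eqn:Harfuncp}, this forces $u=\sum_{n} c_n U_n^{\alpha}$ on $\wY\setminus\overline{\wD}$. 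Differentiating and applying the jump relation with $\nu\cdot\nabla u|_{-}=0$ yields $\phi=\nu\cdot\nabla u|_{+}=\sum_n c_n\psi_n^{\alpha}$. To obtain $\{c_n\}\in \ell^2(\mathbb{Z})$, I would invoke the continuity of $\mathcal{S}^{\alpha,0}_{\wD}\colon \mathbb{H}^{-1/2}(\partial\wD)\to \mathbb{H}^{1/2}(\partial\wD)$: since $u|_{\partial D_{0,n}}\equiv c_n$, the definition of $\mathbb{H}^{1/2}(\partial\wD)$ combined with the uniform lower bound on $\|1\|_{\mathbb{H}^{1/2}(\partial D_{0,n})}$ afforded by \Cref{def:decaycond2} gives $\sum_n |c_n|^2<\infty$.

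For the reverse inclusion I would take $\{c_n\}\in \ell^2(\mathbb{Z})$, set $\phi=\sum_n c_n \psi_n^{\alpha}$, and define $v=\sum_n c_n U_n^{\alpha}$ on $\wY\setminus\overline{\wD}$ together with $v\equiv c_n$ on $D_{0,n}$. Then $v$ is continuous across $\partial\wD$, quasi-periodic harmonic in $\wY\setminus\partial\wD$, and by construction $\nu\cdot\nabla v|_{+}-\nu\cdot\nabla v|_{-}=\phi$ with $\nu\cdot\nabla v|_{-}=0$. Uniqueness of quasi-periodic harmonic functions on the strip $\wY$ identifies $v$ with $\mathcal{S}^{\alpha,0}_{\wD}[\phi]$, so the interior Neumann trace of $\mathcal{S}^{\alpha,0}_{\wD}[\phi]$ vanishes, i.e.\ $A[\phi]=0$.

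The main obstacle will be the joint convergence of $\sum_n c_n U_n^{\alpha}$ in $\wY\setminus\overline{\wD}$ and of $\sum_n c_n\psi_n^{\alpha}$ in $\mathbb{L}^2(\partial\wD)$, together with the uniqueness statement for quasi-periodic harmonic functions on $\wY$. For convergence I would combine uniform bounds on $\|\psi_n^{\alpha}\|_{\mathbb{L}^2(\partial D_{0,m})}$ with the exponential decay already established in \Cref{thm:ExpConv_quasi}, reducing the question to a Young-type convolution inequality on $\ell^2(\mathbb{Z})$. The uniqueness reduces, after Fourier expansion in the $v_{_1}$-direction, to the observation that for $\alpha\neq 0$ every Fourier mode of a quasi-periodic harmonic function on $\wY$ grows or decays exponentially in $x_2$, so that any quasi-periodic harmonic function that is trace-free on $\partial\wD$ and remains bounded in the strip must vanish identically.
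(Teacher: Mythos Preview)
Your proposal is correct and follows essentially the same route as the paper's proof: both directions are handled via the jump relation, translating $A[\phi]=0$ into the vanishing of the interior Neumann trace of $u=\mathcal{S}^{\alpha,0}_{\wD}[\phi]$, deducing that $u$ is the constant $c_n$ on each $D_{0,n}$, and reading off $\{c_n\}\in\ell^2$ from the $\mathbb{H}^{1/2}$-bound on the trace. The paper's argument is terser---it states $\operatorname{span}\{\psi_n^{\alpha}\}\subset\ker A$ directly from the jump relation and passes from ``trace equals $\sum_n c_n\chi_{\partial D_n}$'' to ``$\psi=\sum_n c_n\psi_n^{\alpha}$'' without spelling out the exterior uniqueness step or the convergence of the series; those points, which you correctly flag as the main technical obstacles, are implicitly underwritten by the invertibility of $\mathcal{S}^{\alpha,0}_{\wD}$ established in the appendix rather than by the Fourier-mode and Young-convolution arguments you outline.
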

\begin{proof}
	From the jump relation \eqref{eqn:jumprela}, we have 
	\[ \operatorname{span}\{ \psi_{n}^{\alpha} \}_{n\in\mathbb{Z}} \subset \ker\big(-\tfrac{1}{2}\operatorname{Id} +(\mathcal{K}^{-\alpha,0}_{\wD})^{\ast}\big). \]
	For all $\psi\in \ker\big(-\tfrac{1}{2}\operatorname{Id} +(\mathcal{K}^{-\alpha,0}_{\wD})^{\ast}\big)\subset\mathbb{L}^2(\partial \wD) $, we have 
	\[ \left.\nu_y\cdot \nabla \mathcal{S}^{\alpha,0}_{\wD}[\psi]\right|_{-}(y) = 0,\quad y\in \partial \wD.  \]
	Therefore, if we define $ u^{\alpha}(x) = \mathcal{S}^{\alpha,0}_{\wD}[\psi](x) \in\mathbb{H}^1(\wY\setminus \partial \wD)$, then it satisfies 
	\begin{gather*}
		\left\{\begin{aligned}
			&\Delta u^{\alpha}(x) = 0,\quad x\in \wY\setminus\partial \wD,\\
			&\left.\nu_{y}\cdot \nabla u^{\alpha}(y)\right|_{-} =0,\quad y\in \partial \wD,\\
			&u^{\alpha}(x+mv_{_1}) = \eu^{\iu m \alpha }u^{\alpha}(x),\quad x\in \mathbb{R}^{2}\setminus \partial \wD.
		\end{aligned}\right. 
	\end{gather*}
	It can be verified that the trace of $ \mathcal{S}^{\alpha,0}_{\wD}[\psi] $ on $\wD$ is equal to $\mathcal{S}^{\alpha,0}_{\wD}[\psi]= \sum_{n\in\mathbb{Z}}c_{n}\chi_{_{\partial D_{n}}} \in \mathbb{H}^{\frac{1}{2}}(\partial \wD) $. 
	Therefore, $ \{c_{n}\}_{n\in \mathbb{Z}} \in l^2(\mathbb{Z})$, and it follows that $ \psi = \sum_{n\in\mathbb{Z}}c_n\psi_{n}^{\alpha}\in\mathbb{L}^2(\partial \wD)\subset\mathbb{H}^{-\frac{1}{2}}(\partial \wD) $.
\end{proof}
The following lemma from \cite{Ammari2020a} also holds for the unbounded domain $\wD$.
\begin{lemma} \label{exps}
	For $\alpha\neq 0$, the operator $(\mathcal{K}^{-\alpha,\omega}_{\wD})^\ast:\mathbb{H}^{-\frac12}(\partial \wD) \to \mathbb{H}^{-\frac12}(\partial \wD) $ admits the following asymptotic expansion:
	\[ (\mathcal{K}^{-\alpha,\omega}_{\wD})^\ast = (\mathcal{K}^{-\alpha,0}_{\wD})^{\ast} + \sum_{n\ge 1} \omega^n\mathcal{K}^{-\alpha,0}_{\wD,n}. \]
	Furthermore, the first order operator satisfies 
	\[ \int_{\partial D_n}(\mathcal{K}^{-\alpha,0}_{D,1})^\ast[\phi]\du \sigma = -\int_{D_n}\mathcal{S}^{\alpha,0}_{D}[\phi]\du x,\quad \phi\in\mathbb{H}^{-\frac12}(\partial \wD). \]
\end{lemma}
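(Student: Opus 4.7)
The plan is to adapt the classical small-$\omega$ asymptotic expansion of the Neumann--Poincaré operator for a single bounded inclusion (as in \cite{Ammari2020a}) to the infinite, quasi-periodic setting $\wD=\cup_{n\in\mathbb{Z}}D_{0,n}$. First I would expand the quasi-periodic Green's function $G^{\alpha,\omega}$ in $\omega$ near zero. Because $\alpha\neq 0$ keeps all modes of the quasi-periodic Laplacian bounded away from zero, $G^{\alpha,0}$ is well-defined and $G^{\alpha,\omega}$ depends analytically on $\omega$ (modulo the usual 2D logarithmic factor, which cancels under quasi-periodic summation with $\alpha\neq 0$). Differentiating the defining relation $(\Delta+\omega^2)G^{\alpha,\omega}=\sum_{m}\eu^{\iu m\alpha}\delta_{\cdot-mv_{_1}}$ in $\omega$ at $\omega=0$ yields Poisson-type recursions for the Taylor coefficients $G^{\alpha,0,(n)}$; in particular, the identity $\Delta G^{\alpha,0,(2)}=-G^{\alpha,0}$ will power the integral formula. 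One then defines $\mathcal{K}^{\alpha,0}_{\wD,n}$ to be the boundary operator with kernel $\partial_{\nu_y}G^{\alpha,0,(n)}(x,y)$.

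Next I would promote this to an operator-valued expansion on $\mathbb{H}^{-\frac12}(\partial\wD)$. Using the cellwise decomposition of $\mathbb{H}^{-\frac12}(\partial\wD)$ from \Cref{sec:setting}, the diagonal blocks $\partial D_{0,n}\to\partial D_{0,n}$ are controlled exactly as in the bounded-domain analysis of \cite{Ammari2020a}, since the uniform separation assumption in \Cref{def:decaycond2} makes each cell uniformly regular. The genuinely new step is to estimate the off-diagonal blocks $\partial D_{0,m}\to\partial D_{0,n}$ with $m\neq n$; for these I would exploit the quasi-periodic structure of $G^{\alpha,\omega}$ together with the quantitative decay estimates underpinning \Cref{thm:ExpConv} and \Cref{thm:ExpConv_quasi} to obtain summable bounds on the block operator norms, and hence operator-norm convergence of $\sum_{n\ge 1}\omega^n\mathcal{K}^{\alpha,0}_{\wD,n}$ uniformly for $\omega$ in a small neighborhood of $0$. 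This last step is the principal technical obstacle, as nothing in the bounded-domain proof sees the non-compactness of $\partial\wD$.

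Finally, for the integral identity, set $u_\omega:=\mathcal{S}^{\alpha,\omega}_{\wD}[\phi]$; by construction $\Delta u_\omega+\omega^2 u_\omega=0$ in each $D_n$. Integrating over $D_n$ and applying the divergence theorem together with the interior single-layer jump relation gives
\begin{equation*}
-\omega^2\int_{D_n}u_\omega\,\du x=\int_{\partial D_n}\partial_\nu u_\omega\big|_{-}\,\du\sigma=\int_{\partial D_n}\Big[-\tfrac12\phi+(\mathcal{K}^{-\alpha,\omega}_{\wD})^{\ast}[\phi]\Big]\,\du\sigma.
\end{equation*}
Substituting the expansions of $\mathcal{S}^{\alpha,\omega}_{\wD}$ and $(\mathcal{K}^{-\alpha,\omega}_{\wD})^{\ast}$ and matching coefficients in $\omega$ annihilates the leading order and extracts, at the first non-trivial order, precisely the claimed formula $\int_{\partial D_n}\mathcal{K}^{\alpha,0}_{\wD,1}[\phi]\,\du\sigma=-\int_{D_n}\mathcal{S}^{\alpha,0}_{\wD}[\phi]\,\du x$. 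Identifying the resulting coefficient operator with the $\mathcal{K}^{\alpha,0}_{\wD,1}$ of the statement is then immediate from the symmetry $G^{\alpha,\omega}(x,y)=G^{-\alpha,\omega}(y,x)$, which relates $\mathcal{K}^{\alpha,\omega}_{\wD}$ to $(\mathcal{K}^{-\alpha,\omega}_{\wD})^{\ast}$ and transports the identity from the adjoint side to the side stated in the lemma.
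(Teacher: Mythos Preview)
The paper does not actually prove this lemma; it states it and cites \cite{Ammari2020a}, asserting only that the bounded-domain result ``also holds for the infinite domain $\wD$.'' Your proposal therefore supplies substantially more detail than the paper itself. The outline you give is the natural one and, for the bounded case, is precisely the argument of \cite{Ammari2020a}: expand $G^{\alpha,\omega}$ in $\omega$, read off the kernel of each $\mathcal{K}^{\alpha,0}_{\wD,n}$, and for the integral identity integrate the Helmholtz equation satisfied by $\mathcal{S}^{\alpha,\omega}_{\wD}[\phi]$ over $D_n$, apply the divergence theorem together with the interior jump relation, and match orders in $\omega$. Your final remark on transporting the identity from $(\mathcal{K}^{-\alpha,\omega}_{\wD})^\ast$ to $\mathcal{K}^{\alpha,\omega}_{\wD}$ via the kernel symmetry $G^{\alpha,\omega}(x,y)=G^{-\alpha,\omega}(y,x)$ is also the right observation.

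One small correction to your plan: the off-diagonal block summability you need does not rest on \Cref{thm:ExpConv} or \Cref{thm:ExpConv_quasi}. Those concern the capacitance matrix, are proved later in the paper by an entirely different mechanism (the truncation method and maximum principle), and invoking them here would be circular in spirit if not in logic. What you actually need is simply the exponential decay in $|x_2|$ of $G^{\alpha,0}$ and its $\omega$-derivatives, which is visible directly from the spectral representation \eqref{eqn:GreenFunc} for $\alpha\neq 0$ and is already the estimate used in the proof of \Cref{apthm:boundsingle} to establish boundedness of the zeroth-order operators $\mathcal{S}^{\alpha,0}_{\wD}$ and $\mathcal{K}^{\alpha,0}_{\wD}$ on $\mathbb{H}^{-\frac12}(\partial\wD)$. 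The same kernel bound applies term-by-term to the Taylor coefficients, giving the operator-norm convergence you want without appealing to the capacitance-matrix machinery.
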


We can now derive the asymptotics of the subwavelength frequency that corresponds to the point spectrum.
\begin{theorem}
	For $\alpha\neq 0$, the subwavelength frequency $\omega(\alpha)\sim\mathcal{O}(\sqrt{\delta})$ that corresponds to the point spectrum of \eqref{eqn:ProbFormq} can be approximated by 
	\begin{equation}
		\omega(\alpha)  =
		\sqrt{\delta\lambda^{\alpha}}\mu_{_1} +\mathcal{O}(\delta),
	\end{equation}
	where $\lambda^\alpha$ is the generalized eigenvalue of the following quasi-periodic capacitance matrix that corresponds to the point spectrum:
	\begin{equation}\label{eqn:genEigProb}
		\mathbf{C}^{\alpha}\mathbf{v}^{\alpha} = \lambda^\alpha \mathbf{M}\mathbf{v}^{\alpha},
	\end{equation}
	with $\mathbf{M} = \operatorname{diag}\{ |D_{n}| \}_{n\in\mathbb{Z}}$.
\end{theorem}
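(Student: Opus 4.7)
The plan is to insert the layer-potential representation \eqref{eqn:RepresentationLayer} into the transmission conditions \eqref{eqn:BoundaryMatch}, expand asymptotically in the small parameter $\omega\sim\mathcal{O}(\sqrt{\delta})$, and derive a self-consistent discrete eigenvalue problem by integrating the resulting identities over each boundary component $\partial D_n$.

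First I would perform the leading-order balance. Writing $\phi=\phi_0+\omega\phi_1+\cdots$ and $\psi=\psi_0+\omega\psi_1+\cdots$, and using \Cref{exps} to expand $(\mathcal{K}^{-\alpha,k_j}_{\wD})^{*}$ around $k_j=0$, the zeroth order of the second equation in \eqref{eqn:BoundaryMatch} reads $\bigl(-\tfrac12\operatorname{Id}+(\mathcal{K}^{-\alpha,0}_{\wD})^{*}\bigr)\phi_0=0$. By \Cref{lem:kernel} this forces $\phi_0=\sum_{n\in\mathbb{Z}} c_n\,\psi_n^{\alpha}$ with $\{c_n\}\in\ell^2(\mathbb{Z})$, the $\ell^2$-summability encoding the hypothesis $u^{\alpha}\in\mathbb{L}^2(\wY\setminus\partial\wD)$. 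Since $\mathcal{S}^{\alpha,0}_{\wD}[\psi_n^{\alpha}]\big|_{\partial D_q}=\delta_{nq}$ (from the characterization \eqref{eqn:Harfuncp} of $U_n^{\alpha}$), the zeroth order of the first equation in \eqref{eqn:BoundaryMatch} forces $\mathcal{S}^{\alpha,0}_{\wD}[\psi_0]\equiv c_n$ on $\partial D_n$; the harmonic extension inside $\wD$ is then constant in each $D_n$, so the interior normal derivative vanishes and the jump relation yields $\psi_0=\phi_0$.

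Next I would extract a solvability condition at the leading nontrivial order by integrating the second equation of \eqref{eqn:BoundaryMatch} over each $\partial D_n$. Denoting $w_{\mathrm{in}}(x)=\mathcal{S}^{\alpha,k_1}_{\wD}[\phi](x)$ for $x\in\wD$, the divergence theorem applied to $(\Delta+k_1^2)w_{\mathrm{in}}=0$ on $D_n$ produces
\[ \int_{\partial D_n}\nu_y\cdot\nabla w_{\mathrm{in}}\big|_{-}\,\du\sigma(y) \;=\; -k_1^2\int_{D_n}w_{\mathrm{in}}\,\du x. \]
Combining this with the transmission identity $\nu\cdot\nabla w|_{-}=\delta\,\nu\cdot\nabla w|_{+}$ on $\partial\wD$ and evaluating the exterior contribution from the leading-order expansion $w_{\mathrm{out}}\approx\sum_q c_q U_q^{\alpha}$ via the definition \eqref{eqn:PCapaMat}, I obtain, to leading nontrivial order,
\[ -k_1^2\,c_n\,|D_n| \;=\; -\delta\sum_{q\in\mathbb{Z}}\widehat{C}^{\alpha}_{n,q}\,c_q. \]
With $k_1=\omega/\mu_{_1}$ and $v^{\alpha}=(c_n)_{n\in\mathbb{Z}}$ this is precisely \eqref{eqn:genEigProb} under the identification $\lambda^{\alpha}=\omega^2/(\delta\mu_{_1}^2)$, giving $\omega(\alpha)=\sqrt{\delta\lambda^{\alpha}}\,\mu_{_1}+\mathcal{O}(\delta)$.

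The principal obstacle is making this asymptotic expansion rigorous on the infinite-strip space $\mathbb{H}^{-\frac12}(\partial\wD)$ and quantifying the remainder as $\mathcal{O}(\delta)$. Because $A=-\tfrac12\operatorname{Id}+(\mathcal{K}^{-\alpha,0}_{\wD})^{*}$ has an infinite-dimensional kernel, the inversion must be performed modulo $\ker A$ by pairing with the dual family $\{\chi_{\partial D_n}\}_{n\in\mathbb{Z}}\subset\mathbb{H}^{\frac12}(\partial\wD)$, and uniform resolvent bounds on the complement must be established. Here \Cref{thm:ExpConv_quasi} plays an essential role: the exponential off-diagonal decay of $\widehat{\mathbf{C}}^{\alpha}$ guarantees that the reduced matrix operator is bounded on $\ell^2(\mathbb{Z})$, so the perturbation series closes in $\ell^2$ and the $\mathcal{O}(\delta)$ remainder is genuine. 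The restriction $\alpha\neq 0$ rules out the constant kernel mode that would otherwise produce the spurious branch $\omega=0$, which is not in the point spectrum under the $\mathbb{L}^2$-localization hypothesis.
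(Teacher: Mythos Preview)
Your proposal is correct and follows essentially the same route as the paper: expand the boundary integral system \eqref{eqn:BoundaryMatch} asymptotically, use \Cref{lem:kernel} to identify the leading-order density with an $\ell^2$-combination of the $\psi_n^{\alpha}$, match $\psi_0=\phi_0$ via the first transmission equation and the invertibility of $\mathcal{S}^{\alpha,0}_{\wD}$, and then integrate over each $\partial D_n$ to obtain the discrete generalized eigenvalue problem. The only cosmetic differences are that the paper packages the interior divergence-theorem step through the first-order operator identity in \Cref{exps} rather than writing it out directly, and that the paper attributes the restriction $\alpha\neq 0$ to the failure of invertibility of $\mathcal{S}^{0,0}_{\wD}$ (which is what actually breaks the argument at the step $\psi^{\alpha}=\phi^{\alpha}+\mathcal{O}(\omega^2)$), rather than to a spurious constant mode; your remainder control on the complement of $\ker A$ is exactly what the paper supplies via \Cref{apthm:invertiK}.
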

\begin{proof}
	We seek normalized $\mathbb{H}^{-\frac12}(\partial \wD)$ solutions $(\phi^{\alpha},\psi^{\alpha})$, with $$\Vert \phi^{\alpha} \Vert_{\mathbb{H}^{-\frac{1}{2}}(\partial \wD)}=\Vert \psi^{\alpha} \Vert_{\mathbb{H}^{-\frac{1}{2}}(\partial \wD)}=1.$$ By the asymptotic expansions in \Cref{exps}, we can verify that $\phi^{\alpha},\psi^{\alpha}$ satisfy
	\begin{align}
		\mathcal{S}^{\alpha,0}_{\wD}[\phi^{\alpha}]-\mathcal{S}^{\alpha,0}_{\wD}[\psi^{\alpha}] &= \mathcal{O}(\omega^2),\label{eqn:Singlematch}\\
		\left\{ -\frac{1}{2}\operatorname{Id} + (\mathcal{K}^{-\alpha,0}_{\wD})^\ast + \frac{\omega^2}{\mu_{_1}}(\mathcal{K}^{-\alpha,0}_{\wD,1})^\ast \right\}[\phi^{\alpha}] - \delta\left\{\frac12\operatorname{Id}+(\mathcal{K}^{-\alpha,0}_{D})^{\ast}\right\}[\psi^{\alpha}] &= \mathcal{O}(\omega^4+\delta\omega^2).\label{eqn:NPmatch}
	\end{align}
	Since the single layer operator $\mathcal{S}^{\alpha,0}_{\wD}$ is invertible for each fixed $\alpha$, one has 
	\[ \psi^{\alpha} = \phi^{\alpha}+\mathcal{O}(\omega^2). \]
	Substituting the above approximation into \eqref{eqn:NPmatch}, we have
	\begin{equation}
		\left\{ -\frac{1}{2}\operatorname{Id} + (\mathcal{K}^{-\alpha,0}_{\wD})^\ast + \frac{\omega^2}{\mu_{_1}}\mathcal{K}^{\alpha,0}_{\wD,1} \right\}[\phi^{\alpha}] - \delta\left\{\frac12\operatorname{Id}+(\mathcal{K}^{-\alpha,0}_{D})^{\ast}\right\}[\psi^{\alpha}] = \mathcal{O}(\omega^4+\delta\omega^2).
	\end{equation}
	We write 
	\begin{equation}
		\phi^\alpha = \sum_{n\in \mathbb{Z}}v^{\alpha}_n\psi^{\alpha}_n + \varphi^\alpha,
	\end{equation}
	where the series $\mathbf{v}^\alpha = \{ v^\alpha_n \}_{n\in\mathbb{Z}}\in l^2(\mathbb{Z})$. The residue term $\varphi^\alpha$ lies in the orthogonal complement of $\ker(-\frac{1}{2}\operatorname{Id} + (\mathcal{K}^{-\alpha,0}_{\wD})^\ast)$. Therefore, we have
	\begin{equation}
		-\frac{1}{2}\varphi^\alpha + (\mathcal{K}^{-\alpha,0}_{\wD})^\ast[\varphi^\alpha] = \mathcal{O}(\omega^2+\delta). 
	\end{equation}
	Since the operator $-\frac{1}{2}\operatorname{Id} + (\mathcal{K}^{-\alpha,0}_{\wD})^\ast$ is invertible from \Cref{apthm:invertiK}, it follows that
	\[ \varphi^\alpha = \mathcal{O}(\omega^2+\delta). \]
	Thus, integrating on $\partial D_{n}$, we have 
	\[ -\frac{\omega^2 |D_n|}{\mu^2_{_1}}v^{\alpha}_n + \delta \mathbf{C}^{\alpha} \mathbf{v}^{\alpha} = \mathcal{O}(\omega^4+\omega^2\delta),\quad n\in\mathbb{Z}. \]
	Therefore, we have proved the theorem.  
\end{proof}

\begin{remark}
	Compared to the derivation of the discrete tight-binding model of electronic band structures, we use combinations of single layer potentials $\{\mathcal{S}^{\alpha,0}_{\wD}[\psi_{n}^{\alpha}]\}_{n\in\mathbb{Z}}$ to recover the approximate eigenfunction. Here, we recall that the functions $\psi_{n}^{\alpha}$ are given in \Cref{lem:kernel}. These functions behave similarly to Wannier functions in physics or atomic orbitals in mathematical analysis; see, for example, \cite{Marzari1997,Pelinovsky2008,Pelinovsky2010,Ablowitz_2012a}.  \par 
	
\end{remark}

\begin{remark}
	In the previous sections, we have proved the exponential decay of the off-diagonal elements of the quasi-periodic capacitance matrix, when $\alpha\in [-\pi,\pi)$. However, we are unable to prove the correspondence between subwavelength frequencies and eigenvalues of $\widehat{\mathbf{C}}^{\alpha}$ for $\alpha=0$, since the corresponding single layer potential $\mathcal{S}^{0,0}_{\wD}$ is no longer invertible. Therefore, the capacitance matrix formulation cannot be directly derived. Asymptotic analysis is further required; see previous works \cite{Ammari_2021,Miao2024} on this issue.
\end{remark}
\section{Numerical experiments}\label{sec:numerics}
In this section, we present some numerical experiments to verify the exponential decay of the off-diagonal elements of the capacitance matrix $\widehat{\mathbf{C}}^{\alpha}$. We will also demonstrate the accuracy of the nearest-neighbor approximation of the capacitance matrices and the application to topological interface modes. \par 
We take the square lattice for simplicity, i.e., 
\[ v_{_1} = (1,0),\quad v_{_2} = (0,1). \] 
\subsection{Exponential decay of the off-diagonal elements} 
First, we let the inclusions $\{ D_{m,n}\}_{m,n}$ be the disks with radius $0.35$ at the center of each cell $ Y_{m,n} $. By translation invariance with respect to $v_{_2}$, we only calculate $\{\widehat{C}^{\alpha}_{n,0}\}$ for various $\alpha\in [-\pi,\pi]$. We discretize the single layer operator and numerically calculate 
\[ \mathcal{S}_{\wD}^{\alpha,0}[\phi_{n}^{\alpha}] = \chi_{_{\partial D_{n}}}. \] \par 
To discretize the single layer potential operator, we consider a simple collocation boundary element method for a finite number of inclusions $D_{t}$:
\begin{equation}
	D_{\text{t}} \triangleq \cup_{n=-14}^{14}D_{n}.
\end{equation}
From \cite{Soussi2005}, this supercell approximation is guaranteed to have high accuracy when computing the guided modes. \par  
The diagonal element of the discretized single layer potential operator is evaluated using Gauss quadrature with weight equal to $\log(x)$, while the off-diagonal part can be calculated efficiently by a standard Gauss quadrature. We set $M=200$ to approximate the Green's functions
\begin{gather*}
	\begin{aligned}
		G^{\alpha,0}&= -\sum_{|k|\le M}\frac{\eu^{\iu(2\pi k+\alpha)x_1}\eu^{-|2\pi k+\alpha||x_2|}}{2|2\pi k+\alpha|},\\
		G^{0,0} &= \frac{|x_2|}{2} - \sum_{|k|\le M}\frac{\eu^{\iu2\pi kx_1}\eu^{-|2\pi kx_2|}}{|4\pi k |}.
	\end{aligned}
\end{gather*}

We compute $\{|\widehat{C}^{\alpha}_{i,0}|\}_{i=1}^{5}$ as given in \Cref{fig:ExpConvergence}. It is clear that the off-diagonal part decays exponentially. This motivates us to truncate the capacitance matrix to simplify the calculation.  
\begin{figure}[htbp]
	\centering
	\includegraphics[width = 0.7\textwidth]{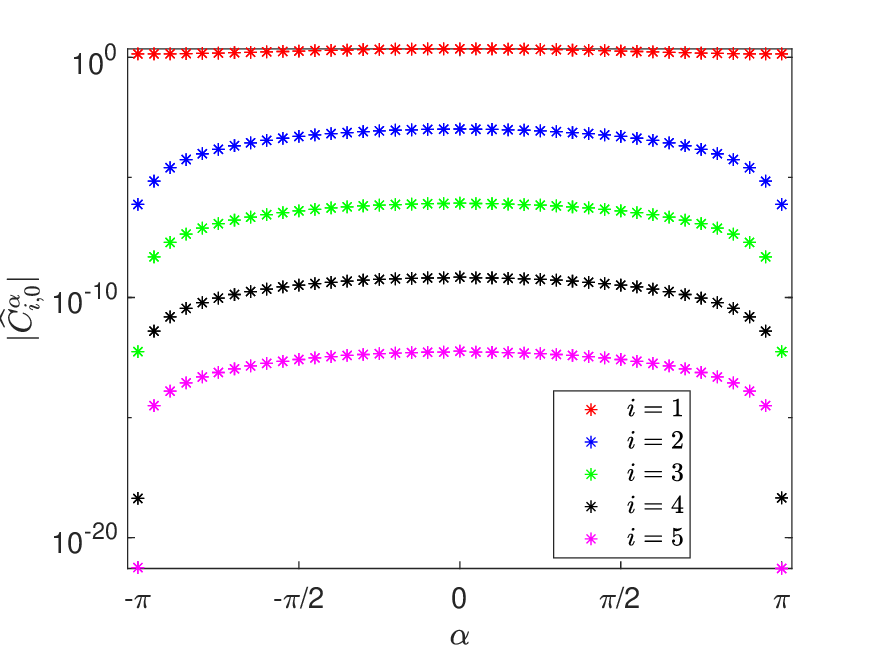}
	\caption{The exponential decay of off-diagonal elements of capacitance matrices $\widehat{\mathbf{C}}^{\alpha}$. Here, we numerically calculate $\{|\widehat{C}^{\alpha}_{i,0}|\}_{i=1}^{5}$. This plot clearly illustrates the exponential decay of $|\widehat{C}^{\alpha}_{i,0}|$ as $i$ grows. }
	\label{fig:ExpConvergence}
\end{figure} 

\subsection{Defect mode and truncation of the capacitance matrix}
In this subsection, we illustrate that, by truncating the capacitance matrix, we can still capture the defect mode. To this end, we let the inclusion $D_0$ in the center be a disk with radius $0.2$. This leads to defect modes, as shown in \cite{Ammari_2022}. 

First we will show the banded approximation of the capacitance matrix, especially the nearest-neighbor approximation. We keep the diagonal and sub-diagonal elements of $\widehat{\mathbf{C}}^{\alpha}$ while setting all other elements to zero. The truncated matrix is denoted by $\widehat{\mathbf{C}}^{\alpha}_t$. We plot the related generalized eigenvalues, as given in \eqref{eqn:genEigProb}. We numerically calculate the eigenvalues of the original capacitance matrix $\widehat{\mathbf{C}}^{\alpha}$ and those of the truncated capacitance matrix $\widehat{\mathbf{C}}_{t}^{\alpha}$ that correspond to the region $D_{t}$. We can see from \Cref{fig:TruncaEig} that for each $\alpha$, the defect eigenvalue of the capacitance matrix can be well approximated by the truncated capacitance matrix. Specifically, the error 
\[ \max_{\alpha}|\lambda^\alpha-\lambda^\alpha_t| = 0.02655, \]
where $\lambda^\alpha$ and $\lambda^\alpha_t$ correspond to the defect eigenvalue at $\alpha$. 
In addition, it gives a good approximation to the continuous spectrum. 

\begin{figure}[htbp]
	\centering
	\includegraphics[width = 0.7\textwidth]{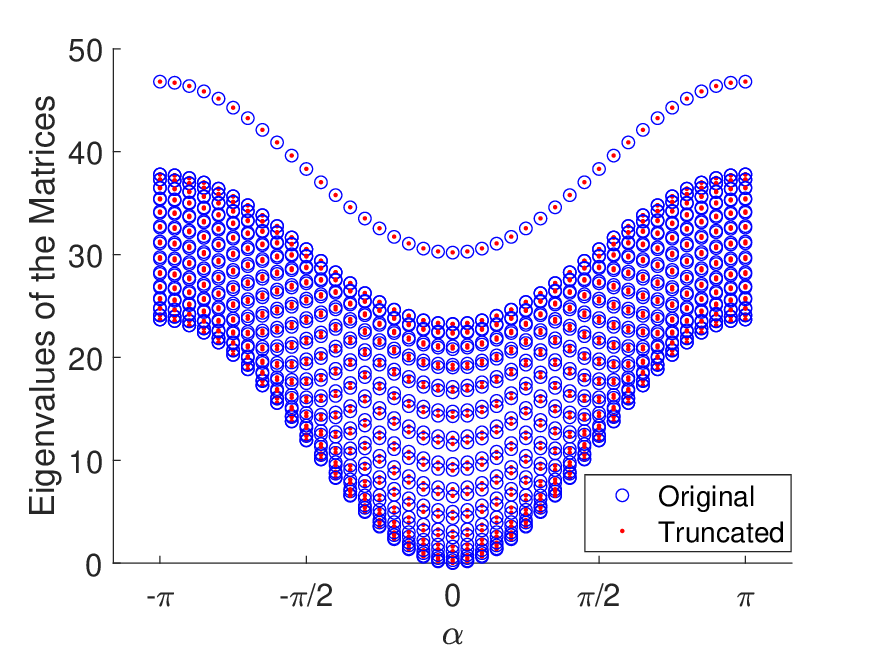}
	\caption{Left panel: Comparison of the eigenvalues of the original capacitance matrix and those of the truncated capacitance matrix. Right panel:  }
	\label{fig:TruncaEig}
\end{figure} 

\subsection{Interaction between defects}
In this subsection, we consider the interaction between inclusions. We let $D_{\pm l}$ be disks with radius $0.2$. It is expected that two defect modes exist. An illustration is shown in \Cref{fig:twoDefects}.
\begin{figure}[htbp]
	\centering
	\includegraphics[width = 0.7\textwidth]{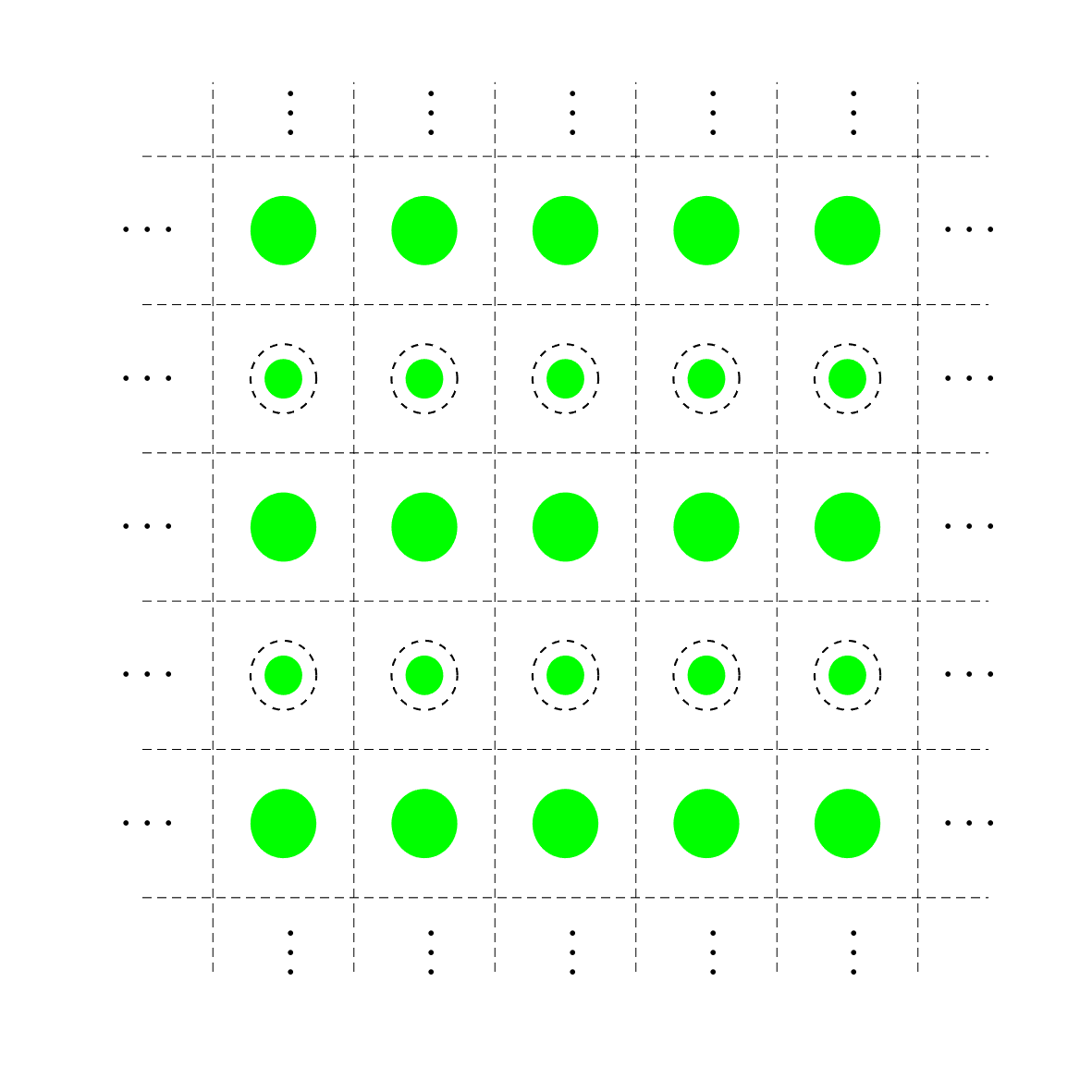}
	\caption{Illustration of a waveguide system with two defects, $l=1$. }
	\label{fig:twoDefects}
\end{figure} 

First, we calculate the corresponding eigenvalue of $\widehat{\mathbf{C}}^{\alpha}$; see \Cref{fig:TwoDefectsEig}. It is clear that as $l$ increases, the spectral gap between the two defect modes decreases.
\begin{figure}[htbp] 
	\centering
	\begin{subfigure}{0.35\linewidth}
		\centering
		\includegraphics[width=\textwidth]{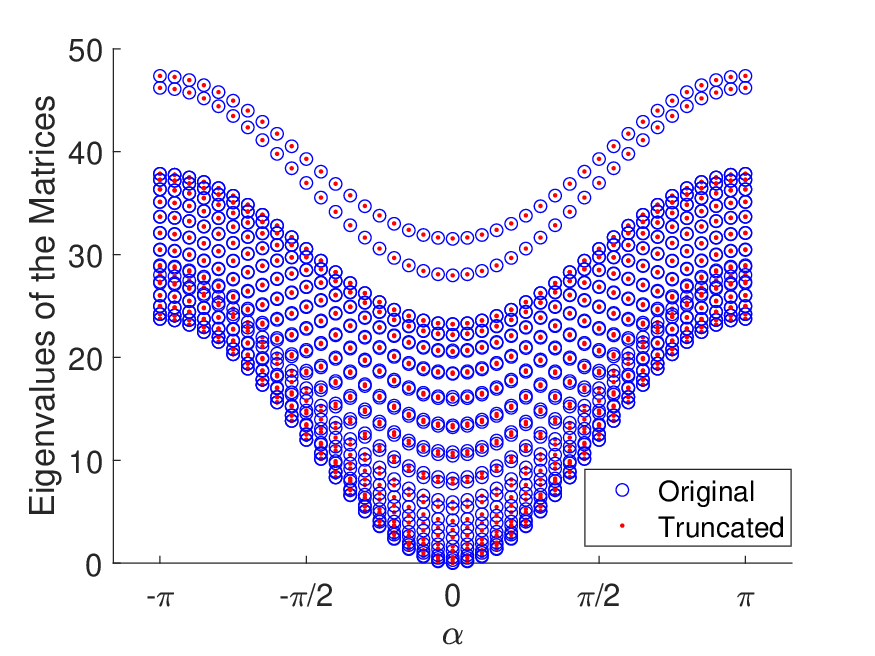}
		\caption{$l=1$.}
	\end{subfigure}
	\hspace{-0.7cm}
	\begin{subfigure}{0.34\linewidth}
		\centering
		\includegraphics[width=\textwidth]{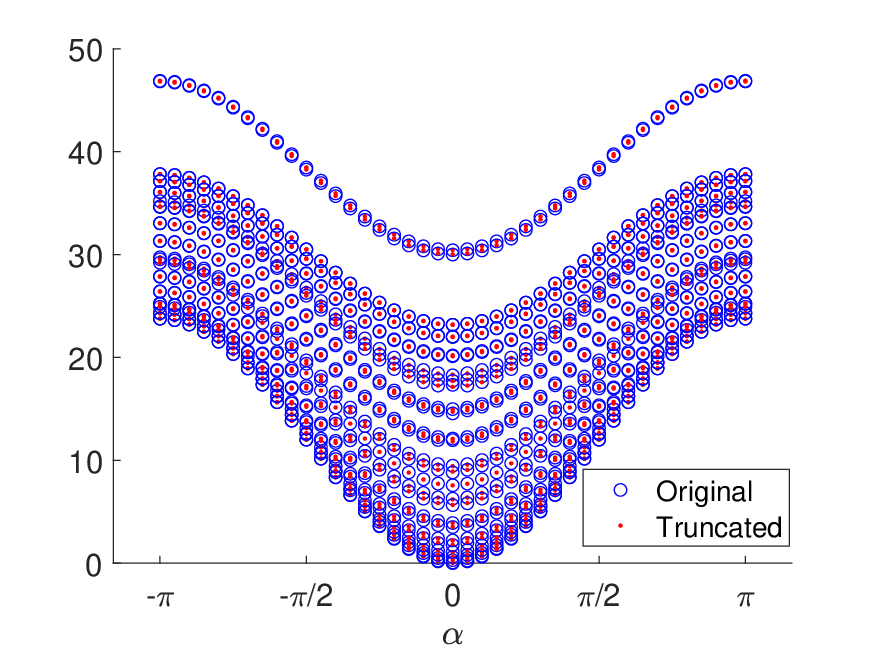}
		\caption{$l=2$.}
	\end{subfigure}
	\hspace{-0.7cm}
	\begin{subfigure}{0.34\linewidth}
		\centering
		\includegraphics[width=\textwidth]{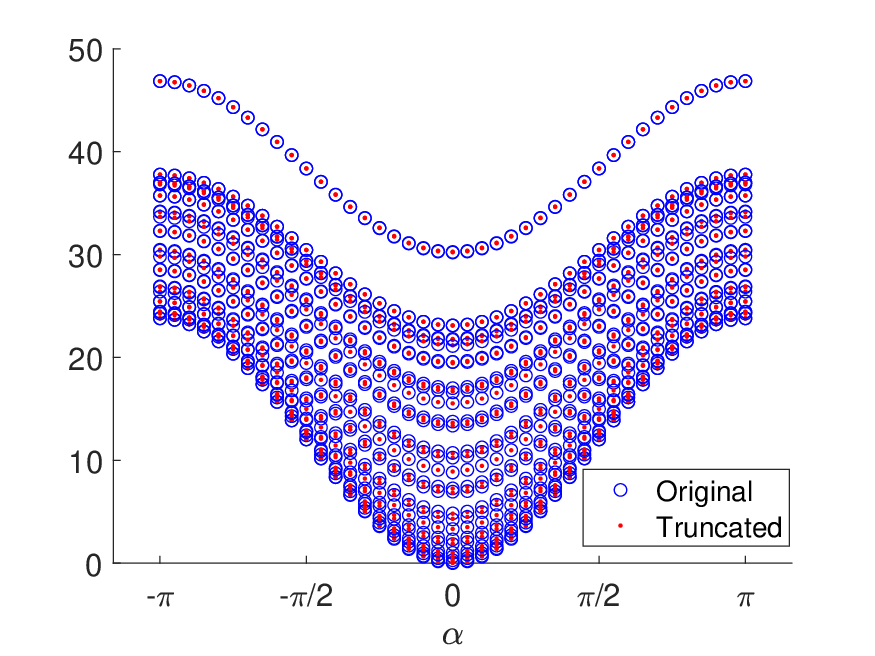}
		\caption{$l=3$.}
	\end{subfigure}
	\caption{Illustration of the interaction between defects. Here, we plot the eigenvalues of $\widehat{\mathbf{C}}^{\alpha}$ for the region $D_t$, letting $D_{\pm l}$ be disks with radius $0.2$. As $l$ increases, the spectral gap between the two defect modes becomes smaller.}
	\label{fig:TwoDefectsEig}
\end{figure}

The spectral gap can be seen more clearly in \Cref{table:compareGap}. In this table, we compute the maximal spectral gap and the maximum relative difference between the defect modes with respect to $\alpha$. Here, the relative difference for each $\alpha$ is defined by 
\[RD(\alpha)\triangleq 1-\frac{\lambda^\alpha_1}{\lambda_2^{\alpha}},\quad \lambda^{\alpha}_1\le \lambda^\alpha_2.\]  
\begin{table}[htbp]
	\centering
	\caption{Comparison of the spectral gap and relative difference between defect eigenvalues}
	\begin{tabular}{cccc}
		\toprule
		&{$l=1$}   &  {$l=3$}   &  {$l=5$} \\ 
		\midrule
		Maximal spectral gap for $\widehat{\mathbf{C}}^{\alpha}$  & 3.55087  & 0.41703  & $5.40939\times 10^{-2}$\\
		Maximal relative difference & 11.26227\% & 1.37087\% &0.17881\%\\
		Maximal spectral gap for $\widehat{\mathbf{C}}_t^{\alpha}$ & 3.59281  & 0.42113  & $5.46282\times 10^{-2}$\\
		Maximal relative difference & 11.38087\% & 1.38304\% &0.18042\%\\
		\bottomrule
	\end{tabular}
	\label{table:compareGap}
\end{table}
The gap decreases exponentially as $l$ grows. \par 
In all the cases above, numerical experiments show that the sub-diagonal truncation is enough to capture the interactions between defects, which is similar to the analysis presented in \cite{Fefferman2025}.

\subsection{Topological interface modes}
In this subsection, we consider the two-dimensional analogue of the Su-Schrieffer-Heeger (SSH) model \cite{Su1979,Ammari2020b,2dssh}. An illustration is given in \Cref{fig:SSHmodel}. \newline
\begin{figure}[htbp]
	\centering
	\includegraphics[width = 0.6\textwidth]{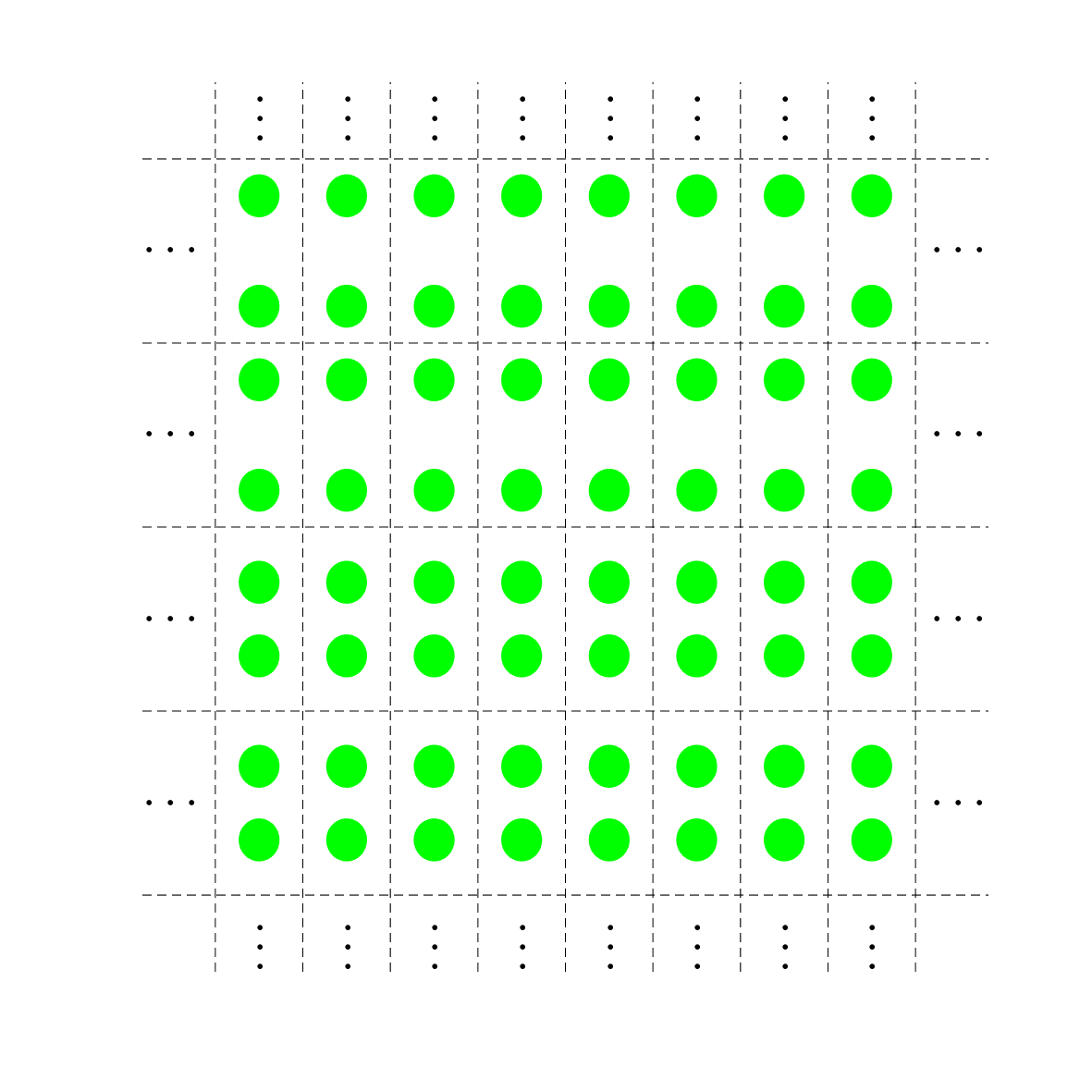}
	\caption{An illustration of the two-dimensional analogue of the SSH model. }
	\label{fig:SSHmodel}
\end{figure}
We assume that the inclusions are identical disks with radius $0.3$. The computational domain is given by 
\[ D_{t} = D_{+,t}\cup D_{-,t}, \]
where the regions $ D_{+,t} $ and $ D_{-,t} $ are given by
\begin{align*}
	D_{-,t} &= \bigcup_{n=-15}^{-1} \Big[B((0,-2n+0.65),0.3)\cup B((0,-2n+1.35),0.3)\Big],\\
	D_{+,t} &= \bigcup_{n=1}^{15} \Big[B((0,2n-1.65),0.3) \cup B((0,2n-0.35),0.3)\Big].
\end{align*}

From \cite{Ammari2020b}, we know that this structure has a nontrivial topological invariant. Therefore, topologically protected interface frequency bands can occur.

In this subsection, we directly compute the eigenvalues corresponding to the truncated capacitance matrix $\widehat{\mathbf{C}}^{\alpha}_t$. In \Cref{fig:TopoEigValue}, we plot the corresponding spectrum of $\widehat{\mathbf{C}}^{\alpha}_{t}$ of the domain $D_t$, with pseudo-modes removed. The corresponding interface mode is marked by larger red dots. We can see that an interface frequency band occurs. Nevertheless, it overlaps with the lowest passing band of the structure. 

\begin{figure}[htbp]
	\centering
	\includegraphics[width = 0.6\textwidth]{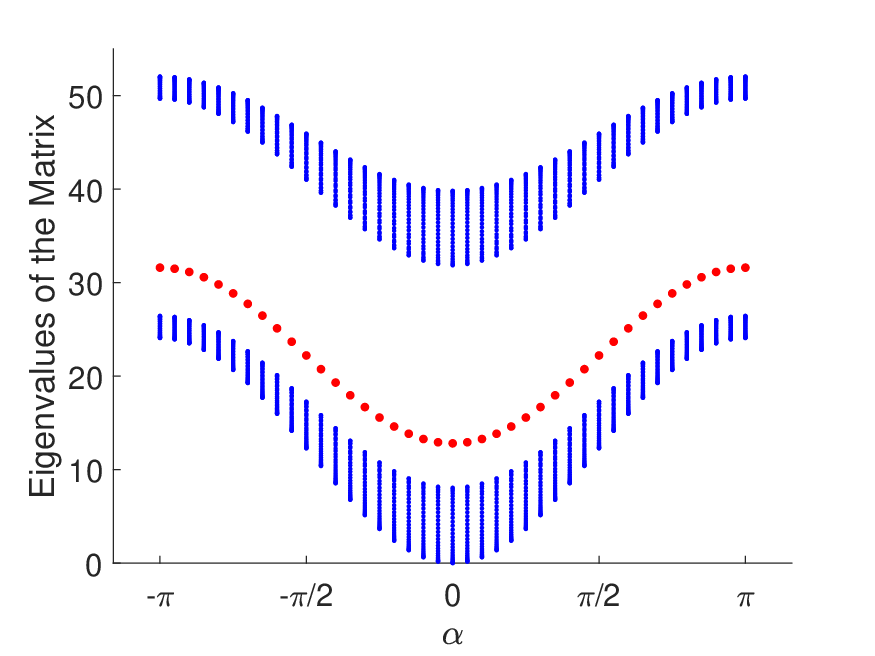}
	\caption{Interface frequency band of the truncated capacitance matrix $\widehat{\mathbf{C}}^{\alpha}_t$ for the two-dimensional analogue of the SSH model. The interface mode (for each quasi-periodicity $\alpha$) is plotted in larger red dots. }
	\label{fig:TopoEigValue}
\end{figure}
These results demonstrate that the interface frequency band can be well captured by the nearest-neighbor approximation of the capacitance matrix.

\section{Concluding remarks}\label{sec:conclude}
In this paper, we have introduced a new, computationally efficient and accurate method for computing subwavelength defect bands in high-contrast subwavelength systems with line defects. Our method is based on a tight-binding approximation of the capacitance matrix associated with the periodic structure containing a non-compact line defect, with quantitative error control. We have demonstrated numerically that our method can be extended to compute topologically protected defect bands corresponding to interface guided modes between two half-crystals with distinct topological indices. We anticipate that our method can be applied to curved line defects and can serve as a building block for optimizing guiding properties in crystals of subwavelength resonators. We leave these extensions to future work.

\appendix

\section{Preliminaries}
Before proceeding with the full analysis, we collect several preliminary definitions and results. We define the quasi-periodic Green's function $ G^{\alpha,\omega} $ of the Helmholtz operator $\Delta + \omega^2$ as
\begin{equation}\label{eqn:quasiGreen}
	G^{\alpha,\omega}(x)\triangleq \sum_{m\in\mathbb{Z}} G^{\omega}(x-(m,0))\eu^{\iu \alpha m},
\end{equation} 
where $G^\omega$ is the Green's function in free space that is given by 
\begin{gather}
	G^{\omega}\triangleq\left\{\begin{aligned}
		&\frac{1}{2\pi}\ln|x|,\quad \omega = 0,\\
		& -\frac{\iu}{4}H^{(1)}_0(\omega|x|),\quad \omega > 0.
	\end{aligned}\right. 
\end{gather}
From Poisson's summation formula we have, for $ |\alpha|>\omega\ge 0 $,
\begin{equation}\label{eqn:GreenFunc}
	G^{\alpha,\omega}(x_1,x_2) = - \sum_{k\in\mathbb{Z}}\frac{\eu^{\iu(2\pi k+\alpha )x_1}\eu^{-\sqrt{|2\pi k+\alpha|^2-\omega^2}|x_2|}}{2\sqrt{|2\pi k +\alpha|^2-\omega^2}}.
\end{equation}
The above series converges uniformly when $ x $ lies in a compact set in $ \mathbb{R}^2 $. Based on this, we define the single-layer potential $ \mathcal{S}^{\alpha,\omega}_{\wD}:\mathbb{H}^{-\frac12}(\partial \wD) \to \mathbb{H}^{\frac12}(\partial \wD) $ and the Neumann--Poincar\'e operator $ (\mathcal{K}^{-\alpha,\omega}_{\wD})^{\ast}:\mathbb{H}^{-\frac12}(\partial \wD) \to \mathbb{H}^{-\frac12}(\partial \wD) $ by
\begin{align}
	\mathcal{S}^{\alpha,\omega}_{\wD}[\varphi] &\triangleq \int_{\partial \wD} G^{\alpha,\omega}(x-y)\varphi(y)\du\sigma(y),\label{eqn:SingleLayer}\\
	(\mathcal{K}^{-\alpha,\omega}_{\wD})^{\ast}[\varphi] &\triangleq \int_{\partial \wD} \nu_y \cdot \nabla G^{\alpha,\omega}(x-y)\varphi(y)\du\sigma(y).
\end{align}
We prove their boundedness in \Cref{apsec:invert}. For a given $ x\in \wY \setminus \partial \wD$, the integral over $ \partial \wD $ can be decomposed as 
\[ 	\mathcal{S}^{\alpha,\omega}_{\wD}[\varphi] = \sum_{n\in\mathbb{Z}} \int_{\partial D_{0,n}}G^{\alpha,\omega}(x-y)\varphi(y)\du \sigma(y). \]
Therefore, it can be proved that $ \mathcal{S}^{\alpha,\omega}_{\wD}: \mathbb{H}^{-\frac{1}{2}}(\partial \wD) \to \mathbb{H}^{\frac{1}{2}}(\partial \wD) $ is a bounded operator when $|\alpha|>\omega\ge 0$. Furthermore, from the decay of the Green's function as $ |x|\to \infty $, one has 
\begin{equation*}
	\mathcal{S}^{\alpha,\omega}_{\wD}[\varphi](x) \in \mathbb{H}^1(\wY\setminus \partial \wD)\cap \mathbb{H}^2_{\mathrm{loc}}(\wY\setminus \partial \wD),\quad x\in \wY\setminus \partial \wD ,\varphi\in \mathbb{L}^2(\partial \wD).
\end{equation*}
Moreover, the single-layer potentials satisfy the standard jump relation
\begin{equation}\label{eqn:jumprela}
	\nu\cdot \nabla \mathcal{S}^{\alpha,\omega}_{\wD}[\phi]|_{\pm} = \pm\frac12\phi + (\mathcal{K}^{-\alpha,\omega}_{\wD})^{\ast}[\phi].
\end{equation}
The invertibility of the single-layer operator is proved in \Cref{apthm:inverti}.\par 
For $\alpha = 0 $ and $\omega=0$, we define the corresponding Green's function $G^{0,0}$ as follows:
\begin{equation}
	G^{0,0}(x_1,x_2) \triangleq \frac{|x_2|}{2} - \sum_{k\in\mathbb{Z}\setminus\{0\}}\frac{\eu^{\iu2\pi kx_1}\eu^{-|2\pi kx_2|}}{|4\pi k |}.
\end{equation}\par 
Now, we invoke the infinitesimal form of Harnack's inequality.
\begin{lemma}\label{lem:Harnackineq}
	Suppose that $ f $ is a positive harmonic function in $B(x_0,r)$ and continuous in $ \overline{B(x_0,r)} $. Then for all $ x\in B(x_0,r/2) $, we have
	\[ |\nabla f|(x) < \frac{4}{r}f(x). \]
\end{lemma}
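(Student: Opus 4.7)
The plan is to derive this infinitesimal Harnack inequality from two classical tools: the mean value property applied to the partial derivatives $\partial_i f$, which are themselves harmonic since partial derivatives commute with the Laplacian, and the divergence theorem, which converts a volume average of $\partial_i f$ into a boundary integral of $f$ weighted by $\nu_i$.

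Given $x \in B(x_0, r/2)$, I would first note that $B(x, r/2) \subset B(x_0, r)$, so both $f$ and each $\partial_i f$ are harmonic on the smaller ball. Applying the mean value property to $\partial_i f$ and then the divergence theorem gives
\[
\partial_i f(x)
= \frac{1}{|B(x,r/2)|}\int_{B(x,r/2)} \partial_i f(y)\,\du y
= \frac{1}{|B(x,r/2)|}\int_{\partial B(x,r/2)} f(y)\,\nu_i(y)\,\du\sigma(y).
\]
Next I would pick the unit vector $e = \nabla f(x)/|\nabla f(x)|$ (when $\nabla f(x) \neq 0$; the case $\nabla f(x) = 0$ is trivial from $f > 0$) so that $|\nabla f(x)| = e \cdot \nabla f(x)$, which can now be expressed as a boundary integral of $f\,(e\cdot \nu)$. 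Combining the positivity of $f$ with $|e\cdot \nu|\le 1$ and finally invoking the mean value property for $f$ itself on the circle $\partial B(x,r/2)$, I expect to reach
\[
|\nabla f(x)|
\le \frac{1}{|B(x,r/2)|}\int_{\partial B(x,r/2)} f(y)\,\du\sigma(y)
= \frac{|\partial B(x,r/2)|}{|B(x,r/2)|}\,f(x)
= \frac{4}{r}f(x),
\]
where the last equality uses the ratio $|\partial B|/|B| = 2/\rho$ for a disk in $\mathbb{R}^2$ together with $\rho = r/2$.

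The only delicate point is upgrading this to the strict inequality. It is enough to observe that $|e\cdot \nu(y)| = 1$ holds only at the two points of $\partial B(x, r/2)$ where $\nu(y) = \pm e$, a set of measure zero. Since $f$ is continuous and strictly positive on the compact circle, the integrand $f(y)\,(e\cdot\nu(y))$ is strictly less than $f(y)$ on a set of positive measure, which makes the intermediate inequality strict and yields $|\nabla f(x)| < (4/r)\,f(x)$ as claimed.
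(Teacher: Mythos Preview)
Your proof is correct but follows a genuinely different route from the paper's. The paper derives the bound by literally differentiating the classical (finite) Harnack inequality: for $x\in B(x_0,r/2)$ it applies the Poisson-kernel two-sided estimate on the ball $B(x,r-|x-x_0|)$, forms the difference quotient $(f(x+\delta z)-f(x))/\delta$, lets $\delta\to 0$ to obtain $|z\cdot\nabla f(x)|\le \tfrac{2}{r-|x-x_0|}f(x)$, and then uses $|x-x_0|<r/2$ to conclude the strict bound $4/r$. Your argument instead combines the mean value property for the harmonic functions $\partial_i f$ with the divergence theorem and the spherical mean value property for $f$, arriving directly at the ratio $|\partial B|/|B|=4/r$. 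Your approach is slightly more self-contained (it does not invoke the Poisson-kernel Harnack inequality) and yields the constant $4/r$ in one shot, while the paper's approach gives the pointwise-sharper intermediate bound $\tfrac{2}{r-|x-x_0|}$ before relaxing it; the strict inequality in the paper comes from $|x-x_0|<r/2$, whereas in your proof it comes from the measure-zero set where $e\cdot\nu=1$.
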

\begin{proof}
	Choose $ x\in B(x_0,r/2) $. Then there exists a sufficiently small $ \sigma>0 $ such that the ball $ B(x,\sigma)\subset B(x_0,r/2) $. For $ \delta>0 $ and $ z\in \partial B(0,1) $ satisfying $ \delta z\in B(0,\sigma) $, we have Harnack's inequality:
	\[ \frac{r - |x-x_0|-|\delta z|}{r -|x-x_0|+|\delta z| } f(x)\le f(x+\delta z )\le\frac{r - |x-x_0|+|\delta z|}{r-|x-x_0|-|\delta z|}f(x). \]
	Subtracting $ f(x) $, we obtain 
	\[ -\frac{2|\delta z|}{r -|x-x_0|+|\delta z| } f(x)\le f(x+\delta z )-f(x)\le\frac{2|\delta z|}{r-|x-x_0|-|\delta z|}f(x). \]
	Letting $ \delta\to 0 $, the above inequalities yield:
	\[ -\frac{2}{r -|x-x_0| } f(x)\le z\cdot \nabla f(x)\le\frac{2}{r-|x-x_0|}f(x).	 \] 
	Taking $ z = \nabla f/|\nabla f| $, we have 
	\[ -\frac{4}{r}f(x)<-\frac{2}{r -|x-x_0| } f(x)\le  |\nabla f(x)|\le\frac{2}{r-|x-x_0|}f(x)<\frac{4}{r}f(x). \]
\end{proof}

\section{Poisson's Kernel for the half space}\label{apsec:Poisson}
In this section, we briefly review some facts about the Poisson kernel of the half space. For simplicity, we take the Poisson kernel in $ \mathbb{R}^{2}_{+} $ as an example. \par 
The Poisson kernel is defined as 
\begin{equation}
	K_2(x,y) = \frac{x_2}{\pi}\frac{1}{|x-y|^2}.
\end{equation}
Given the boundary value problem for the Laplace equation in half space $ \mathbb{R}^{2}_{+} $:
\begin{gather}
	\left\{\begin{aligned}
		\Delta w(x) &= 0,\quad x \in \mathbb{R}^2_+,\\
		w(x_1,0) &= g(x_1),\quad x_1\in \mathbb{R}.
	\end{aligned}\right.
\end{gather}
We consider the following two cases in this appendix:
\begin{itemize}
	\item The function $ w $ satisfies the periodic boundary condition:
	\[ w(x+mv_{_1}) = w(x),\quad x\in \mathbb{R}^2_+,m\in\mathbb{Z}. \]
	The boundary value $g$ is also periodic;
	\item The boundary value $ g(x_1) $ belongs to $ \mathbb{L}^{1}(\mathbb{R})\cap \mathbb{L}^2(\mathbb{R}) $.
\end{itemize}
\begin{proposition}For the boundary value problem on the half space, one has the following results:
	\begin{itemize}
		\item[(1)] If the solution $w$ satisfies a periodic boundary condition, then it is given by
		\begin{equation}\label{apeqn:periodpoisson}
			w(x_1,x_2) = \frac{x_2}{\pi} \int_{-\infty}^{+\infty} \frac{g(y_1)}{ (x_1-y_1)^2 + x_2^2 }dy_1 = \sum_{n\in \mathbb{Z}} \widehat{g}(n)\eu^{-2\pi |n|x_2 }\eu^{\iu 2\pi n x_1 }.
		\end{equation}
		Here, $ \widehat{g}(n) $ denotes the Fourier coefficients, defined by
		\[ \widehat{g}(n) \triangleq\int_{0}^{1}g(x_1)\eu^{-2\pi n x_1}\du x_1. \]
		\item[(2)] If $ g(x_1)\in \mathbb{L}^1(\mathbb{R})\cap\mathbb{L}^2(\mathbb{R}) $, then the solution $ w(x_1,x_2) $ is given by
		\begin{equation}\label{apeqn:freepoisson}
			w(x_1,x_2) = \mathcal{F}^{-1}\left[ \widehat{g}(\xi) \eu^{-2\pi x_2|\xi|} \right],\quad (x_1,x_2)\in \mathbb{R}^2_+. 
		\end{equation}
		Here, $ \widehat{g}(\xi) $ is the Fourier transform of $ g(x_1) $ and $ \mathcal{F}^{-1} $ denotes the inverse Fourier transform. 
		\item[(3)] Furthermore, if $ g(x_1)\in C^{2}(\mathbb{R}) $ is periodic and strictly positive on a bounded open set, then the solution is strictly positive. The same conclusion holds when $ g(x_1)\in \mathbb{L}^{1}(\mathbb{R}) $ and is strictly positive on a bounded open set.
	\end{itemize}
\end{proposition}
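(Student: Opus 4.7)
The plan is to treat the three items by Fourier methods on $\mathbb{R}$ (or the torus), since in both cases the half-space structure forces a one-dimensional ODE in the transverse variable whose bounded solution is a decaying exponential. The positivity claim in (3) will then follow from the strict positivity of the Poisson kernel in the open half-space.

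For item (1), I would start from the Fourier series $g(x_1)=\sum_{n}\widehat{g}(n)\eu^{\iu 2\pi n x_1}$ and separate variables. Writing $w(x_1,x_2)=\sum_{n}a_n(x_2)\eu^{\iu 2\pi n x_1}$, the Laplace equation reduces to $a_n''(x_2)=(2\pi n)^2 a_n(x_2)$ with $a_n(0)=\widehat{g}(n)$; selecting the bounded solution on $\mathbb{R}_+$ forces $a_n(x_2)=\widehat{g}(n)\eu^{-2\pi|n|x_2}$. Termwise the series and its formal Laplacian converge uniformly on $\{x_2\ge \varepsilon\}$ thanks to the exponential factor, justifying the derivation. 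The integral representation is obtained by recognizing the periodic Poisson kernel $\sum_{m\in\mathbb{Z}}K_2(x_1-m,x_2)$: applying Poisson summation (or directly computing the Fourier coefficients of $K_2(\cdot,x_2)$ in $x_1$) yields $\eu^{-2\pi|n|x_2}$, so the two formulas coincide and they both solve the BVP; uniqueness (via the maximum principle for bounded harmonic functions in the periodic strip) forces equality.

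For item (2), since $g\in\mathbb{L}^1(\mathbb{R})\cap\mathbb{L}^2(\mathbb{R})$ we have $\widehat{g}\in\mathbb{L}^\infty\cap\mathbb{L}^2$. Taking the Fourier transform of the PDE in $x_1$ gives $\partial_{x_2}^2\widehat{w}(\xi,x_2)=4\pi^2\xi^2\widehat{w}(\xi,x_2)$; the $\mathbb{L}^2_\xi$-bounded solution with trace $\widehat{g}(\xi)$ at $x_2=0$ is $\widehat{w}(\xi,x_2)=\widehat{g}(\xi)\eu^{-2\pi x_2|\xi|}$, and Plancherel together with the exponential damping makes the inverse Fourier transform well defined and harmonic, giving \eqref{apeqn:freepoisson}. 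A brief check of the trace (via the approximate identity provided by $\mathcal{F}^{-1}[\eu^{-2\pi x_2|\xi|}]$, which is the Poisson kernel) confirms $w(\cdot,x_2)\to g$ in $\mathbb{L}^2$ as $x_2\downarrow 0$.

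For item (3), I would use the integral representations established above. The Poisson kernel $K_2(x_1-y_1,x_2)=\tfrac{x_2}{\pi[(x_1-y_1)^2+x_2^2]}$ is strictly positive for every $x_2>0$ and every $y_1\in\mathbb{R}$; its periodic sum is therefore also strictly positive on $\{x_2>0\}$. Hence, if $g\ge 0$ and $g>0$ on some bounded open set $\Omega\subset\mathbb{R}$ (in the periodic case, on some arc of the period), the convolution integral is bounded below by a strictly positive contribution from $\Omega$, so $w(x_1,x_2)>0$ throughout $\mathbb{R}^2_+$. The main subtlety to watch for lies in item (1): one must verify that the integral formula in \eqref{apeqn:periodpoisson} converges and really coincides with the Fourier series representation despite the slow decay of $K_2$ at infinity, which is where Poisson summation (or an explicit evaluation of $\sum_m K_2(x_1-m,x_2)$ via contour integration) does the heavy lifting; everything else is standard separation of variables and positivity of a convolution kernel.
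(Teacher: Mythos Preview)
The paper does not supply a proof of this proposition: it is stated in Appendix~B as a review of standard facts about the Poisson kernel for the half-space, and is simply quoted where needed (e.g., in the proof of Theorem~\ref{thm:DecayPer}). Your approach---separation of variables/Fourier series for (1), Fourier transform in $x_1$ for (2), and strict positivity of the Poisson kernel for (3)---is exactly the standard derivation one would expect and is correct. One small point worth making explicit in (3): the conclusion requires $g\ge 0$ globally (not merely $g>0$ on an open set), which you correctly assume but the proposition does not state; in the paper's applications the boundary data $\widetilde g_n$ and $g_m$ are always nonnegative, so this implicit hypothesis is harmless.
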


\section{Properties of layer potential operators on $\partial \wD$}\label{apsec:invert}
In this section, we give some properties of the layer-potential operators on the infinite boundary $\partial \wD$. For $\alpha\neq 0$, we will prove the boundedness of the single layer operator $\mathcal{S}^{\alpha,0}_{\wD}:\mathbb{H}^{-\frac{1}{2}}(\partial \wD) \to \mathbb{H}^{\frac{1}{2}}(\partial \wD)$ and the Neumann-Poincar\'e operator $(\mathcal{K}^{\alpha,0}_{\wD})^{\ast}:\mathbb{H}^{-\frac{1}{2}}(\partial \wD) \to \mathbb{H}^{-\frac{1}{2}}(\partial \wD)$. Furthermore, we prove the invertibility of the single-layer potential. Denoting
\begin{equation}\label{apeqn:decomposi}
	\phi = \sum_{n\in \mathbb{Z}}\phi_{n}\chi_{_{\partial D_{n}}},\quad \phi_{n}\in\mathbb{H}^{-\frac{1}{2}}(\partial D_n), 
\end{equation} 
we first separate the single layer potential and the Neumann-Poincar\'e operator on the infinite boundary $\partial \wD$ by 
\begin{align}
	\mathcal{S}^{\alpha,0}_{\wD}[\phi] &= \sum_{n\in \mathbb{Z}}\bigg(\mathcal{S}_{nn}^{\alpha,0}[\phi_n]+\sum_{m\neq n} \mathcal{S}^{\alpha,0}_{mn}[\phi_{n}]\bigg),\\
	(\mathcal{K}^{\alpha,0}_{\wD})^{\ast}[\phi] &= \sum_{n\in \mathbb{Z}}\bigg(\mathcal{K}_{nn}^{\alpha,0}[\phi_n]+\sum_{m\neq n} \mathcal{K}^{\alpha,0}_{mn}[\phi_{n}]\bigg).
\end{align}
Here, the potential operators $\mathcal{S}^{\alpha,0}_{mn}:\mathbb{H}^{-\frac{1}{2}}(\partial D_{n})\to \mathbb{H}^{\frac{1}{2}}(\partial D_{m})$ and $\mathcal{K}^{\alpha,0}_{mn}:\mathbb{H}^{-\frac{1}{2}}(\partial D_{n})\to \mathbb{H}^{-\frac{1}{2}}(\partial D_{m})$ are defined by 
\begin{align}
	\mathcal{S}^{\alpha,0}_{mn}[\phi_{n}]&\triangleq \int_{\partial D_{n}} G^{\alpha,0}(x-y)\phi_n\,d\sigma(y),\quad x\in \partial D_{m},\\
	\mathcal{K}^{\alpha,0}_{mn}[\phi_{n}]&\triangleq \int_{\partial D_{n}}\nu_y\cdot \nabla G^{\alpha,0}(x-y)\phi_n\,d\sigma(y),\quad x\in \partial D_{m}.
\end{align} 
We recall that the Green's function $G^{\alpha,0}$ is given by \eqref{eqn:GreenFunc}. \par 
For simplicity, we assume that each boundary $\partial D_{n}$ is regular enough such that the single layer potential and the Neumann-Poincar\'e operators on each boundary, $\mathcal{S}_{nn}^{\alpha,0}$ and $\mathcal{K}^{\alpha,0}_{nn}$, are uniformly bounded, i.e., 
\begin{gather}
	\left\{\begin{aligned}
		\Vert \mathcal{S}^{\alpha,0}_{nn}[\phi_{n}] \Vert_{\mathbb{H}^{\frac{1}{2}}(\partial D_{n})}\le C \Vert \phi_{n} \Vert_{\mathbb{H}^{-\frac{1}{2}}(\partial D_{n})},\\
		\Vert \mathcal{K}^{\alpha,0}_{nn}[\phi_{n}] \Vert_{\mathbb{H}^{-\frac{1}{2}}(\partial D_{n})}\le C \Vert \phi_{n} \Vert_{\mathbb{H}^{-\frac{1}{2}}(\partial D_{n})}
	\end{aligned}\right., \quad \forall \phi_{n}\in\mathbb{H}^{-\frac{1}{2}}(\partial D_n),\; n\in\mathbb{Z}.
\end{gather}
We first discuss the boundedness of the layer potential operators $\mathcal{S}^{\alpha,0}_{\wD}$ and $(\mathcal{K}^{\alpha,0}_{\wD})^{\ast}$.
\begin{theorem}\label{apthm:boundsingle}
	The single layer potential $\mathcal{S}^{\alpha,0}_{\wD}:\mathbb{H}^{-\frac{1}{2}}(\partial \wD)\to \mathbb{H}^{\frac{1}{2}}(\partial \wD)$ and the Neumann-Poincar\'e operator $(\mathcal{K}^{\alpha,0}_{\wD})^{\ast}:\mathbb{H}^{-\frac{1}{2}}(\partial \wD)\to \mathbb{H}^{-\frac{1}{2}}(\partial \wD)$ are bounded operators. 
\end{theorem}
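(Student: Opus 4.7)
The plan is to exploit the block decomposition in \eqref{apeqn:decomposi} and to reduce the norm estimate on the infinite boundary $\partial \wD$ to a convolution estimate on $\ell^{2}(\mathbb{Z})$ via Young's inequality. The assumed uniform boundedness of the single-cell operators $\mathcal{S}^{\alpha,0}_{nn}$ and $\mathcal{K}^{\alpha,0}_{nn}$ already takes care of the diagonal blocks, so the core task is to show that the off-diagonal blocks enjoy exponential decay of their operator norms in $|m-n|$, uniformly across $m,n$.

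For the off-diagonal estimates I would first use \Cref{def:decaycond2} together with the strict inclusion $D_{m,n} \Subset Y_{m,n}$ to obtain, for $m \neq n$, $x \in \partial D_{0,m}$ and $y \in \partial D_{0,n}$, a uniform separation in the $v_{_2}$-direction of the form $|(x-y)_{2}| \ge a|m-n| - C$. For $\alpha \neq 0$ every term of the Fourier series \eqref{eqn:GreenFunc} decays exponentially in $|x_{2}|$ with rate at least $|\alpha|$, so one reads off the pointwise bound
\begin{equation*}
|\partial^{\beta}_{x}\partial^{\gamma}_{y} G^{\alpha,0}(x-y)| \le C_{\beta,\gamma}\, e^{-c|\alpha||m-n|}, \qquad m \neq n,
\end{equation*}
uniformly on $\partial D_{0,m} \times \partial D_{0,n}$. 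Viewing the off-diagonal kernel as a smooth pairing, $\mathcal{S}^{\alpha,0}_{mn}[\phi_{n}](x) = \langle \phi_{n}, G^{\alpha,0}(x-\cdot)\rangle$, and estimating by $\|\phi_{n}\|_{\mathbb{H}^{-\frac{1}{2}}(\partial D_{0,n})}\|G^{\alpha,0}(x-\cdot)\|_{\mathbb{H}^{\frac{1}{2}}(\partial D_{0,n})}$ (and analogously controlling tangential derivatives on $\partial D_{0,m}$), one concludes
\begin{equation*}
\|\mathcal{S}^{\alpha,0}_{mn}\|_{\mathbb{H}^{-\frac{1}{2}} \to \mathbb{H}^{\frac{1}{2}}} + \|\mathcal{K}^{\alpha,0}_{mn}\|_{\mathbb{H}^{-\frac{1}{2}} \to \mathbb{H}^{-\frac{1}{2}}} \le C e^{-c|\alpha||m-n|}, \qquad m \neq n.
\end{equation*}

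Combined with the uniform diagonal bound, this yields $\|\mathcal{S}^{\alpha,0}_{mn}\|_{op} + \|\mathcal{K}^{\alpha,0}_{mn}\|_{op} \le a_{m-n}$ for some $a \in \ell^{1}(\mathbb{Z})$. For $\phi = \sum_{n}\phi_{n}\chi_{\partial D_{0,n}} \in \mathbb{H}^{-\frac{1}{2}}(\partial \wD)$ the block-triangle inequality then gives
\begin{equation*}
\|\mathcal{S}^{\alpha,0}_{\wD}[\phi]\|_{\mathbb{H}^{\frac{1}{2}}(\partial \wD)}^{2} = \sum_{m}\Big\|\sum_{n}\mathcal{S}^{\alpha,0}_{mn}[\phi_{n}]\Big\|_{\mathbb{H}^{\frac{1}{2}}(\partial D_{0,m})}^{2} \le \sum_{m}\Big(\sum_{n} a_{m-n}\|\phi_{n}\|_{\mathbb{H}^{-\frac{1}{2}}(\partial D_{0,n})}\Big)^{2},
\end{equation*}
and Young's convolution inequality $\|a \ast b\|_{\ell^{2}} \le \|a\|_{\ell^{1}}\|b\|_{\ell^{2}}$ closes the argument. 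The proof for $\mathcal{K}^{\alpha,0}_{\wD}$ is identical, with the target space replaced by $\mathbb{H}^{-\frac{1}{2}}$.

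The main obstacle is securing the off-diagonal kernel estimate with constants independent of $m$ and $n$: this requires that trace and Sobolev embedding constants on each $\partial D_{0,n}$ be uniform in $n$, which should follow from the uniform geometric data in \Cref{def:decaycond2} (the radii $r_{m,n} \ge c$ and the separation $r>0$) together with the assumed smoothness of the boundaries. A secondary subtlety is the borderline case $\alpha = 0$: the representation \eqref{eqn:GreenFunc} degenerates at the $k=0$ mode and must be replaced by $G^{0,0}$, whose term $|x_{2}|/2$ destroys the exponential decay in $|m-n|$. In that case boundedness is only expected after restricting to a suitable closed subspace of densities, but this does not affect the $\alpha \neq 0$ regime used in \Cref{sec:Subwavelength_capa}.
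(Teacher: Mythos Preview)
Your proposal is correct and follows essentially the same approach as the paper: block decomposition, uniform diagonal bound by assumption, exponential decay of the off-diagonal blocks from the $x_{2}$-decay of the Green function \eqref{eqn:GreenFunc}, and then summation. You are simply more explicit than the paper about the final step (Young's inequality in $\ell^{2}$) and about the $\alpha=0$ degeneration, both of which the paper leaves implicit.
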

\begin{proof}
	First, we consider $\mathcal{S}^{\alpha,0}_{\wD}[\phi]$ on each $\partial D_m$. We have
	\[ \mathcal{S}^{\alpha,0}_{\wD}[\phi]\Big|_{\partial D_{m}} = \mathcal{S}^{\alpha,0}_{mm}[\phi_{m}]+ \sum_{n\neq m}\mathcal{S}^{\alpha,0}_{mn}[\phi_{n}]. \]
	By the exponential decay of the Green's function \eqref{eqn:GreenFunc} with respect to $x_2$, we obtain that
	\[ \Vert \mathcal{S}^{\alpha,0}_{mn}[\phi_{n}] \Vert_{\mathbb{H}^{\frac{1}{2}}(\partial D_m)}\le C\eu^{-|m-n|} \Vert \phi_{n} \Vert_{\mathbb{H}^{-\frac{1}{2}}(\partial D_n)}.\] 
	Collecting the above arguments and the assumption gives the boundedness of the single layer operator. Similar procedures can be carried out to prove the boundedness of the Neumann-Poincar\'e operator $(\mathcal{K}^{\alpha,0}_{\wD})^{\ast}:\mathbb{H}^{-\frac{1}{2}}(\partial \wD)\to \mathbb{H}^{-\frac{1}{2}}(\partial \wD)$. 
\end{proof}
From \Cref{apsec:extensiontrace} and \Cref{apsec:poincare}, we can also prove the following lemma.
\begin{lemma}\label{aplem:boundedH1}
	For $\alpha\neq 0$, the single layer operator $\mathcal{S}^{\alpha,0}_{\wD}[\phi]$ defines a harmonic function on $\wY\setminus \partial \wD$. Moreover, it satisfies 
	\begin{equation}
		\Vert \mathcal{S}^{\alpha,0}_{\wD}[\phi] \Vert_{\mathbb{H}^{1}(\wD)} + \Vert \mathcal{S}^{\alpha,0}_{\wD}[\phi] \Vert_{\mathbb{H}^{1}(\wY\setminus \wD)} \le C\Vert  \phi\Vert_{\mathbb{H}^{-\frac{1}{2}}(\partial \wD)}.
	\end{equation}
\end{lemma}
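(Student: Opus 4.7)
The plan is to proceed in three stages: prove harmonicity, bound the $\mathbb{H}^1$ norm on the interior region $\wD$, and bound it on the exterior region $\wY\setminus\wD$. Throughout, I work under $\alpha\neq 0$ so that the representation \eqref{eqn:GreenFunc} yields exponential decay of the Green function as $|x_2|\to\infty$. Harmonicity is immediate: the kernel $G^{\alpha,0}(x-y)$ is smooth and harmonic in $x$ away from the lattice-shifted sources $y+\Lambda$, so differentiating under the integral gives $\Delta\mathcal{S}^{\alpha,0}_{\wD}[\phi](x)=0$ for $x\in\wY\setminus\partial\wD$. From \Cref{apthm:boundsingle} one already has $\Vert\mathcal{S}^{\alpha,0}_{\wD}[\phi]\Vert_{\mathbb{H}^{1/2}(\partial\wD)}\le C\Vert\phi\Vert_{\mathbb{H}^{-1/2}(\partial\wD)}$, and combining with the jump relation \eqref{eqn:jumprela} this also bounds the one-sided normal derivative, $\Vert\partial_\nu\mathcal{S}^{\alpha,0}_{\wD}[\phi]|_\pm\Vert_{\mathbb{H}^{-1/2}(\partial\wD)}\le C\Vert\phi\Vert_{\mathbb{H}^{-1/2}(\partial\wD)}$.

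For the $\mathbb{H}^1(\wD)$ bound I would decompose $\wD=\bigcup_{n\in\mathbb{Z}}D_{0,n}$ and treat each connected component separately. The function $u:=\mathcal{S}^{\alpha,0}_{\wD}[\phi]$ is harmonic in the bounded smooth domain $D_{0,n}$ with Dirichlet trace in $\mathbb{H}^{1/2}(\partial D_{0,n})$, and the standard elliptic regularity estimate $\Vert u\Vert_{\mathbb{H}^1(D_{0,n})}\le C\Vert u\Vert_{\mathbb{H}^{1/2}(\partial D_{0,n})}$ holds with a constant uniform in $n$ thanks to the geometric uniformity provided by \Cref{def:decaycond2}. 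Summing over $n$ and using the definition of the Sobolev norm on $\partial\wD$ gives
\[
\Vert\mathcal{S}^{\alpha,0}_{\wD}[\phi]\Vert^2_{\mathbb{H}^1(\wD)}\le C\sum_{n\in\mathbb{Z}}\Vert u\Vert^2_{\mathbb{H}^{1/2}(\partial D_{0,n})}=C\Vert u\Vert^2_{\mathbb{H}^{1/2}(\partial\wD)}\le C\Vert\phi\Vert^2_{\mathbb{H}^{-1/2}(\partial\wD)}.
\]

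For the exterior estimate, set $v=\mathcal{S}^{\alpha,0}_{\wD}[\phi]$ and exploit its $\alpha$-quasi-periodicity. I would truncate the strip to $\Omega_R=(\wY\setminus\wD)\cap\{|x_2|<R\}$ and apply Green's identity,
\[
\int_{\Omega_R}|\nabla v|^2\,\du x=-\int_{\partial\wD\cap\Omega_R}\overline{v}\,\partial_\nu v|_+\,\du\sigma+\mathcal{B}_{\partial\wY}(R)+\mathcal{B}_{\pm R}(R).
\]
The $\partial\wY$ contribution cancels identically: at corresponding points $y\in\wI_{2,0}$ and $y+v_{_1}\in\wI_{2,1}$, the quasi-periodicity $v(\cdot+v_{_1})=\eu^{\iu\alpha}v$ together with the opposite outward normals flips the sign of $\overline{v}\,\partial_\nu v$, while $|\eu^{\iu\alpha}|^2=1$. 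Both $v$ and $\nabla v$ inherit the exponential decay $\eu^{-|\alpha||x_2|}$ from \eqref{eqn:GreenFunc}, so the horizontal cut terms $\mathcal{B}_{\pm R}(R)$ tend to $0$ as $R\to\infty$. Duality then yields
\[
\Vert\nabla v\Vert^2_{\mathbb{L}^2(\wY\setminus\wD)}\le C\Vert v\Vert_{\mathbb{H}^{1/2}(\partial\wD)}\Vert\partial_\nu v|_+\Vert_{\mathbb{H}^{-1/2}(\partial\wD)}\le C\Vert\phi\Vert^2_{\mathbb{H}^{-1/2}(\partial\wD)}.
\]
The $\mathbb{L}^2$ part of the norm is controlled cell-by-cell in the same spirit: the exponential decay of $v$ makes the series $\sum_n\Vert v\Vert^2_{\mathbb{L}^2(Y_{0,n}\setminus D_{0,n})}$ summable, and a local Poincar\'e–trace estimate (uniform in $n$ by \Cref{def:decaycond2}) gives $\Vert v\Vert_{\mathbb{L}^2(\wY\setminus\wD)}\le C\Vert\phi\Vert_{\mathbb{H}^{-1/2}(\partial\wD)}$.

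The main obstacle lies in justifying the limit $R\to\infty$ in the integration by parts. One must simultaneously verify (i) absolute convergence of the surface integral on $\partial\wD$, which follows from the cell-wise decomposition of $\phi$ already employed in the proof of \Cref{apthm:boundsingle}, and (ii) vanishing of the horizontal cut-off terms $\mathcal{B}_{\pm R}(R)$; both rely on $\alpha\neq 0$ through the exponential damping of the $k=0$ mode in \eqref{eqn:GreenFunc}. A secondary point of care is that all constants in the cell-wise elliptic and trace estimates are genuinely uniform in $n$; this reduces to the uniform inradius lower bound $r_{m,n}\ge c>0$ together with the boundary separation $r>0$ in \Cref{def:decaycond2}.
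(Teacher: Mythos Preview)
Your argument is correct. Harmonicity and the interior estimate match the paper exactly: the paper records the cell-wise Dirichlet bound $\Vert w_{\mathrm{int}}\Vert_{\mathbb{H}^{1}(D_{0,n})}\le C\Vert g\Vert_{\mathbb{H}^{1/2}(\partial D_{0,n})}$ with a uniform constant and sums over $n$, precisely as you do.

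For the exterior, the paper takes a slightly different route. Rather than integrating by parts on a truncated strip, it first proves a global Poincar\'e inequality on $\wY\setminus\wD$ for functions vanishing on $\partial\wD$ (\Cref{aplem:poincareineq}), which hinges on the uniform bound $\operatorname{dist}(x,\partial\wD)\le C_d$ in every cell. This yields well-posedness of the quasi-periodic exterior Dirichlet problem with the estimate $\Vert w^\alpha_{\mathrm{ext}}\Vert_{\mathbb{H}^1(\wY\setminus\wD)}\le C\Vert g\Vert_{\mathbb{H}^{1/2}(\partial\wD)}$; since $\mathcal{S}^{\alpha,0}_{\wD}[\phi]$ solves this problem with $g=\mathcal{S}^{\alpha,0}_{\wD}[\phi]|_{\partial\wD}$, the $\mathbb{H}^1$ bound follows immediately from the trace bound of \Cref{apthm:boundsingle}. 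Your Green's-identity argument gives the gradient bound more directly and makes the role of $\alpha\neq 0$ very explicit (it is what kills both the $k=0$ mode and the far-field terms $\mathcal{B}_{\pm R}$). Your $\mathbb{L}^2$ step, however, is essentially the same Poincar\'e mechanism in disguise: the ``local Poincar\'e--trace estimate'' you invoke, summed over cells, is exactly what one gets from \Cref{aplem:poincareineq} after subtracting an $\mathbb{H}^1$ extension of the boundary trace. The appeal to exponential decay of $v$ for summability of $\sum_n\Vert v\Vert^2_{\mathbb{L}^2(Y_{0,n}\setminus D_{0,n})}$ is correct but redundant once that Poincar\'e-type inequality is in hand; the decay is genuinely needed only for the vanishing of $\mathcal{B}_{\pm R}$.
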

Now, we proceed to prove the invertibility of the single layer operator. Define the bilinear form $a(\cdot,\cdot)$ by
\[ a(\phi,\psi)\triangleq -\langle \phi,\mathcal{S}^{\alpha,0}_{\wD}[\psi] \rangle. \]
We first establish the coercivity of this bilinear form 
\begin{lemma}\label{aplem:coer}
	The bilinear form $a(\cdot,\cdot)$ is Hermitian and satisfies the coercivity estimate
	\begin{equation}\label{apeqn:coercivity}
		a(\phi,\phi)\ge c\Vert \phi  \Vert^2_{\mathbb{H}^{-\frac{1}{2}}(\partial \wD)} ,\quad \phi\in\mathbb{H}^{-\frac{1}{2}}(\partial \wD),
	\end{equation}
	where $c$ is a positive real constant.
\end{lemma}

\begin{proof}
	From the definition of the Green's function, one can directly verify that the bilinear form is Hermitian:   
	\begin{align*}
		a(\phi,\psi) &= \int_{\partial \wD}\int_{\partial \wD} \overline{\phi(x)}G^{\alpha,0}(x-y)\psi(y)\du \sigma(y)\du\sigma(x)\\
		&=\int_{\partial \wD}\int_{\partial \wD} \overline{G^{\alpha}(y-x)\phi(x)}\psi(y)\du \sigma(y)\du\sigma(x) = \overline{a(\psi,\phi)}.
	\end{align*}
	To prove coercivity, consider the following duality pairing: 
	\[
	\langle \phi,f\rangle,\quad \phi\in \mathbb{H}^{-\frac{1}{2}}(\partial \wD),\quad  f\in \mathbb{H}^{\frac{1}{2}}(\partial \wD).
	\]
	By the extension theorem in \Cref{apsec:extensiontrace}, we can extend the function $f\in \mathbb{H}^{\frac{1}{2}}(\partial \wD)$ to a function $\widetilde{f}\in \mathbb{H}^{1}_0(V)$, supported within the open set $V\setminus \wD$, which satisfies the following estimate
	\[
	\Vert \widetilde{f} \Vert_{\mathbb{H}^{1}(V\setminus \partial \wD)} \le C\Vert f \Vert_{\mathbb{H}^{\frac{1}{2}}(\partial \wD)}.
	\]
	Here $V\subset \wY$ is the open set given in \Cref{appthm:Extend}. 
	Applying Green's identity and the jump relations, we have
	\[
	\langle \phi,f\rangle = \int_{V\setminus \partial \wD} \overline{\nabla \mathcal{S}^{\alpha,0}_{\wD}[\phi] } \cdot \nabla \widetilde{f}dx.
	\]
	By the boundedness of the single layer operator and the extension operator, we obtain 
	\begin{equation}
		\langle \phi,f\rangle \le C \Vert \nabla \mathcal{S}^{\alpha,0}_{\wD}[\phi] 
		\Vert_{\mathbb{L}^2(\wY\setminus \partial \wD)}\Vert f
		\Vert_{\mathbb{H}^{\frac{1}{2}}(\partial \wD)} = C\sqrt{a(\phi,\phi)}\Vert f
		\Vert_{\mathbb{H}^{\frac{1}{2}}(\partial \wD)}.
	\end{equation}
	
	Taking the supremum over all $f\in \mathbb{H}^{\frac{1}{2}}(\partial \wD)$ with $\Vert f \Vert_{\mathbb{H}^{\frac{1}{2}}(\partial \wD)}=1 $, we arrive at 
	\[
	\frac{1}{C}\Vert \phi \Vert_{\mathbb{H}^{-\frac{1}{2}}(\partial \wD)}^2 \le a(\phi,\phi),
	\]
	which implies the desired estimate \eqref{apeqn:coercivity}.
\end{proof}
By the Lax-Milgram theorem, we directly deduce the following result.
\begin{theorem}\label{apthm:inverti}
	The single layer potential $ \mathcal{S}_{\wD}^{\alpha,0}:\mathbb{H}^{-\frac{1}{2}}(\partial \wD)\to \mathbb{H}^{\frac{1}{2}}(\partial \wD) $ is invertible for $ \alpha\neq 0 $.
\end{theorem}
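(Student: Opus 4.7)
The plan is to prove invertibility of $\mathcal{S}^{\alpha,0}_{\wD}$ via Lax--Milgram applied to the sesquilinear form
\[
a(\phi,\psi) \triangleq -\langle \phi,\, \mathcal{S}^{\alpha,0}_{\wD}[\psi]\rangle
\]
on $\mathbb{H}^{-\frac12}(\partial\wD)\times\mathbb{H}^{-\frac12}(\partial\wD)$, exploiting in an essential way the exponential decay of the quasi-periodic Green function $G^{\alpha,0}$ in $|x_2|$ visible from \eqref{eqn:GreenFunc} whenever $\alpha\neq 0$.

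First I would derive an energy identity. For $\phi\in\mathbb{H}^{-\frac12}(\partial\wD)$ set $u=\mathcal{S}^{\alpha,0}_{\wD}[\phi]$; by the $\mathbb{H}^1$-bound recorded in the preceding lemma, $u\in\mathbb{H}^1(\wD)\cap\mathbb{H}^1(\wY\setminus\overline{\wD})$, is harmonic on each component and $\alpha$-quasi-periodic in $x_1$. Applying Green's identity on the truncated strip $\wY\cap\{|x_2|<N\}$, invoking the jump relation \eqref{eqn:jumprela} on $\partial\wD$ and quasi-periodicity to cancel the boundary contributions on $\wI_{2,0}\cup\wI_{2,1}$, one obtains
\[
\int_{\wY\cap\{|x_2|<N\}}|\nabla u|^2\,\du x = -\langle \phi,\, \mathcal{S}^{\alpha,0}_{\wD}[\phi]\rangle + R_N,
\]
where $R_N$ collects the surface integrals on $\{x_2=\pm N\}$. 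The explicit Fourier expansion \eqref{eqn:GreenFunc} forces $u$ and $\nabla u$ to decay exponentially in $|x_2|$ far from $\partial\wD$, so $R_N\to 0$ and the identity $\int_{\wY}|\nabla u|^2\,\du x = -\langle \phi,\mathcal{S}^{\alpha,0}_{\wD}[\phi]\rangle$ holds.

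Coercivity then follows from a quasi-periodic Poincar\'e inequality. Decomposing $u(x_1,x_2)=\sum_{k\in\mathbb{Z}}\hat u_k(x_2)\eu^{\iu(\alpha+2\pi k)x_1}$, the condition $\alpha\in[-\pi,\pi)\setminus\{0\}$ yields $c_\alpha\triangleq\inf_{k\in\mathbb{Z}}|\alpha+2\pi k|>0$, whence $\|u\|_{\mathbb{L}^2(\wY)}\le c_\alpha^{-1}\|\nabla u\|_{\mathbb{L}^2(\wY)}$. Combined with the trace bound $\|\phi\|_{\mathbb{H}^{-\frac12}(\partial\wD)}=\|[\partial_\nu u]\|_{\mathbb{H}^{-\frac12}(\partial\wD)}\le C\|u\|_{\mathbb{H}^1(\wY\setminus\partial\wD)}$, this produces
\[
|a(\phi,\phi)|=\int_{\wY}|\nabla u|^2\,\du x \ge c\,\|\phi\|_{\mathbb{H}^{-\frac12}(\partial\wD)}^2.
\]
Together with the boundedness from \Cref{apthm:boundsingle} and the symmetry $G^{\alpha,0}(x-y)=G^{-\alpha,0}(y-x)$ (which identifies the equation $\mathcal{S}^{\alpha,0}_{\wD}[\phi]=f$ with the variational problem for $a$), Lax--Milgram provides a unique $\phi\in\mathbb{H}^{-\frac12}(\partial\wD)$ for every $f\in\mathbb{H}^{\frac12}(\partial\wD)$. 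Injectivity is automatic: $a(\phi,\phi)=0$ forces $\nabla u\equiv 0$ on $\wY$, hence $u$ is piecewise constant, and the exponential decay at infinity forces that constant to vanish; the jump relation \eqref{eqn:jumprela} then gives $\phi=0$.

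The main obstacle is rigorously justifying that $R_N\to 0$: mere membership in $\mathbb{H}^1(\wY)$ does not a priori yield pointwise decay of $\nabla u$ on $\{x_2=\pm N\}$, so one must exploit the explicit Fourier series \eqref{eqn:GreenFunc} to obtain a uniform exponential bound on $u$ and $\nabla u$ for $|x_2|\gg 1$. This is precisely where $\alpha\ne 0$ enters: each Fourier mode decays with rate $|2\pi k+\alpha|\ge c_\alpha>0$, whereas at $\alpha=0$ the $k=0$ mode degenerates and generates the linearly growing $|x_2|/2$ term in $G^{0,0}$, which is consistent with the failure of invertibility at $\alpha=0$ flagged in the subsequent remark.
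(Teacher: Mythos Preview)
Your approach is correct and takes a genuinely different route from the paper's own proof. The paper proceeds in two steps: injectivity is shown by observing that $\mathcal{S}^{\alpha,0}_{\wD}[\psi]\equiv 0$ on $\partial\wD$ forces $u=\mathcal{S}^{\alpha,0}_{\wD}[\psi]$ to vanish (via uniqueness for the interior and exterior Dirichlet problems), hence both one-sided normal derivatives vanish and the jump relation gives $\psi=0$. Surjectivity is \emph{constructive}: given $f\in\mathbb{H}^{\frac12}(\partial\wD)$, the paper solves an auxiliary Dirichlet problem on a neighbourhood $V\setminus\wD$, extracts from it a density $g\in\mathbb{H}^{-\frac12}(\partial\wD)$ through a duality argument, and then corrects by an element of the span of the kernel functions $\psi_n^\alpha$ with explicit coefficients $c_n^\alpha\in\ell^2$. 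Only \emph{after} invertibility is established does the paper deduce, as \Cref{apcoro:coercive}, that $-\langle\phi,\mathcal{S}^{\alpha,0}_{\wD}[\phi]\rangle$ is an equivalent inner product, using spectral theory for the self-adjoint operator $\mathbb{S}$ it induces.

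Your Lax--Milgram argument reverses this logic: you prove coercivity of $a(\phi,\phi)=\int_{\wY}|\nabla u|^2$ directly, via the energy identity and the Fourier-mode Poincar\'e inequality on the strip (valid precisely because $\inf_k|\alpha+2\pi k|>0$ when $\alpha\neq 0$), and invertibility then drops out. This is more streamlined and yields \Cref{apcoro:coercive} simultaneously rather than as a corollary; it also makes the failure at $\alpha=0$ completely transparent through the degeneration of the Poincar\'e constant. The paper's construction, on the other hand, produces an explicit preimage involving the $\psi_n^\alpha$, which ties naturally into the kernel description of \Cref{lem:kernel} used later in the subwavelength analysis. Your identification of the main technical point (controlling $R_N$ via the exponential decay of $G^{\alpha,0}$ in $|x_2|$; truncating along cell boundaries so as not to slice through inclusions) is accurate.
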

From \Cref{aplem:coer} and the boundedness of the single layer potential, we notice that $a(\cdot,\cdot)$ defines an equivalent inner product on $\mathbb{H}^{-\frac12}(\partial \wD)$.
\begin{definition}
	The equivalent inner product on $\mathbb{H}^{-\frac12}(\partial \wD)$ is given by 
	\[
	( \phi,\psi)_\ast\triangleq a(\phi,\psi).
	\] 
\end{definition}

\section{Invertibility of the operator $-\frac{1}{2}\operatorname{Id} + (\mathcal{K}^{-\alpha,0}_{\wD})^\ast$}\label{apsec:Invert_NP}
In this section, we prove the invertibility of the operator $-\frac{1}{2}\operatorname{Id} + (\mathcal{K}^{-\alpha,0}_{\wD})^\ast$ on a subspace of $\mathbb{H}^{-\frac{1}{2}}(\partial \wD)$. Since $\wD$ is unbounded, the Neumann-Poincar\'e operator is no longer compact. Therefore,  we cannot adopt the standard Fredholm theory. We first give some definitions. 
\begin{definition}
	Assuming the decomposition of $\phi\in\mathbb{H}^{-\frac{1}{2}}(\partial \wD)$, its average on each inclusion $\partial D_n$ is given by duality as
	\begin{equation}
		\langle  \phi  \rangle_n \triangleq \langle \phi_n,\chi_{_{\partial D_{n} }} \rangle.
	\end{equation}
	Furthermore, we define the following closed subspaces of $ \mathbb{H}^{-\frac12}(\partial \wD)$
	\begin{align}
		&\mathcal{D} \triangleq  \ker \Big\{-\frac{1}{2}\operatorname{Id} + (\mathcal{K}^{-\alpha,0}_{\wD})^\ast\Big\},\\
		&\mathcal X\triangleq\Big\{\phi\in \mathbb H^{-\frac12}(\partial \wD) : \langle \phi\rangle_n=0,\ \forall n\in\mathbb{Z}\Big\}.
	\end{align}
\end{definition}
We have the following result concerning the mapping property of the Neumann-Poincar\'e operator. 
\begin{lemma}\label{aplem:mapping}
	For every $\phi\in \mathbb{H}^{-\frac{1}{2}}(\partial \wD)$ and $n\in\mathbb{Z}$,
	\[
	\big\langle (\mathcal{K}^{-\alpha,0}_{\wD})^\ast[ \phi] \big\rangle_n \ = \frac12\langle \phi\rangle_n.
	\]
\end{lemma}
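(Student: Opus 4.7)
The plan is to prove the identity by applying the standard jump relations for the single-layer potential and invoking the harmonicity of $\mathcal{S}^{\alpha,0}_{\wD}[\phi]$ inside each individual inclusion $D_n$. Set $u \triangleq \mathcal{S}^{\alpha,0}_{\wD}[\phi]$. The results in Appendix~C give $u \in \mathbb{H}^1_{\mathrm{loc}}(\wY \setminus \partial \wD)$, and inside each cell $Y_{0,n}$ the function $u$ is harmonic on both sides of $\partial D_n$: the quasi-periodic Green function $G^{\alpha,0}$ is a classical Green function of $\Delta$ whose distributional sources are concentrated on the lattice $\Lambda$, and by Assumption~1 each $D_n$ lies strictly inside $Y_{0,n}$, so $u$ is harmonic throughout $D_n$ as well as in a neighborhood of $\partial D_n$ from the outside.

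Next, I invoke the jump relation \eqref{eqn:jumprela}, which reads
\[
\partial_\nu u|_\pm = \pm \tfrac{1}{2}\phi + (\mathcal{K}^{-\alpha,0}_{\wD})^{\ast}[\phi]\quad\text{on }\partial \wD.
\]
Taking the difference and the half-sum of the two traces gives, on each component $\partial D_n$,
\[
\partial_\nu u|_+ - \partial_\nu u|_- = \phi_n,\qquad
\tfrac12\bigl(\partial_\nu u|_+ + \partial_\nu u|_-\bigr) = (\mathcal{K}^{-\alpha,0}_{\wD})^{\ast}[\phi]\big|_{\partial D_n}.
\]
Pairing each identity with the test function $\chi_{_{\partial D_n}} \in \mathbb{H}^{1/2}(\partial D_n)$ and recalling the definition $\langle\psi\rangle_n = \langle \psi_n, \chi_{_{\partial D_n}}\rangle$, I obtain
\[
\langle \partial_\nu u|_+, 1\rangle_{\partial D_n} - \langle \partial_\nu u|_-, 1\rangle_{\partial D_n} = \langle \phi\rangle_n,\qquad
\tfrac12\bigl(\langle \partial_\nu u|_+,1\rangle_{\partial D_n} + \langle \partial_\nu u|_-,1\rangle_{\partial D_n}\bigr) = \langle (\mathcal{K}^{-\alpha,0}_{\wD})^{\ast}[\phi]\rangle_n.
\]

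Finally I apply the divergence theorem (in its weak $\mathbb{H}^1$--$\mathbb{H}^{-1/2}$ form) to $u$ on the bounded Lipschitz domain $D_n$. Since $u$ is harmonic in $D_n$,
\[
\langle \partial_\nu u|_-, 1\rangle_{\partial D_n} \;=\; \int_{D_n} \Delta u \, \du x \;=\; 0.
\]
Inserting this into the two displayed equations yields $\langle \phi\rangle_n = \langle \partial_\nu u|_+, 1\rangle_{\partial D_n}$ and $2\langle (\mathcal{K}^{-\alpha,0}_{\wD})^{\ast}[\phi]\rangle_n = \langle \partial_\nu u|_+, 1\rangle_{\partial D_n}$, so $\langle (\mathcal{K}^{-\alpha,0}_{\wD})^{\ast}[\phi]\rangle_n = \tfrac12 \langle \phi\rangle_n$, as claimed.

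The only subtlety — and what I expect to be the main technical check rather than a real obstacle — is the rigorous justification of the pairing $\langle \partial_\nu u|_\pm, 1\rangle_{\partial D_n}$ for $\phi \in \mathbb{H}^{-1/2}(\partial \wD)$: one needs that the single-layer potential has an interior/exterior conormal derivative in $\mathbb{H}^{-1/2}(\partial D_n)$ in the Green's-formula sense, which follows from $u\in \mathbb{H}^1(D_n)$, $\Delta u = 0$ in $D_n$, and the corresponding exterior local $\mathbb{H}^1$ regularity guaranteed by \Cref{apthm:boundsingle}. With this in hand, the density/approximation by smooth $\phi_n$ and passage to the limit is routine.
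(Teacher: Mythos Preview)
Your proof is correct and follows essentially the same approach as the paper's: set $u=\mathcal{S}^{\alpha,0}_{\wD}[\phi]$, use the jump relations, and observe via Green's identity that $\langle \partial_\nu u|_-,1\rangle_{\partial D_n}=0$ by interior harmonicity. The only difference is cosmetic: the paper works directly with the minus-side identity $-\tfrac12\phi+(\mathcal{K}^{-\alpha,0}_{\wD})^\ast[\phi]=\partial_\nu u|_-$ and pairs it with $\chi_{\partial D_n}$, whereas you take the extra (unnecessary but harmless) step of combining both $\pm$ traces.
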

\begin{proof}
	Let $w^\alpha=\mathcal{S}^{\alpha,0}_{\wD}[\phi]$. From the jump relation \eqref{eqn:jumprela}, we have
	\[ \langle -\tfrac12\phi+ (\mathcal{K}^{-\alpha,0}_{\wD})^\ast[ \phi]  , \chi_{_{\partial D_{n} }}\rangle = \langle \nu\cdot \nabla w^\alpha|_{-}  , \chi_{_{\partial D_{n} }}\rangle. \]
	Here, $\nu\cdot \nabla w^\alpha|_{-}$ denotes the interior limit of the normal derivative of $w^{\alpha}$. By Green's identity,  we have 
	\[ \langle \nu\cdot \nabla w^\alpha|_{-}  , \chi_{_{\partial D_{n} }}\rangle = \int_{D_{n}}\overline{\nabla w^\alpha}\cdot\nabla U^{\alpha}_n \du x = 0. \]
	The second equality holds since the harmonic function $U^{\alpha}_n$ is constant in $D_{n}$. Therefore,
	\[  \big\langle (\mathcal{K}^{-\alpha,0}_{\wD})^\ast[ \phi] \big\rangle_n \ = \frac12\langle \phi\rangle_n. \]
\end{proof}
We now consider the quotient space $\mathbb{H}^{-\frac12}(\partial \wD)/\mathcal{D}$. It can be proved directly by noting that for all $\varphi\in\mathbb{H}^{-\frac{1}{2}}(\partial \wD)$, one has 
\[ (\varphi,\psi_{n}^{\alpha})_{\ast} = -\langle \varphi,\mathcal{S}^{\alpha,0}_{\wD}[\psi_{n}^{\alpha}]  \rangle = -\langle \varphi,\chi_{_{\partial D_{n}}}  \rangle = -\langle \varphi\rangle_{_n}. \]
\begin{lemma}
	The quotient space $ \mathbb{H}^{-\frac12}(\partial \wD)/\mathcal{D}$ is isometrically equivalent to the space $\mathcal{X}$: 
	\[ \mathbb{H}^{-\frac12}(\partial \wD)/\mathcal{D}\simeq \mathcal{X}. \] 
\end{lemma}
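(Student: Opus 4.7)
The plan is to realize the quotient $\mathbb{H}^{-\frac12}(\partial\wD)/\mathcal{D}$ as the orthogonal complement of $\mathcal{D}$ with respect to the equivalent inner product $(\cdot,\cdot)_\ast$ established in \Cref{apcoro:coercive}, and then identify that orthogonal complement with $\mathcal{X}$. First I would observe that $\mathcal{D}$ is closed in $\mathbb{H}^{-\frac12}(\partial\wD)$, being the kernel of the bounded operator $-\tfrac12\operatorname{Id}+(\mathcal{K}^{-\alpha,0}_{\wD})^\ast$ (see \Cref{apthm:boundsingle}). Combined with the equivalence of $(\cdot,\cdot)_\ast$ to the native norm, standard Hilbert space theory gives the orthogonal decomposition
\[
\mathbb{H}^{-\frac12}(\partial\wD)=\mathcal{D}\oplus \mathcal{D}^{\perp},
\]
where $\mathcal{D}^\perp$ denotes the orthogonal complement with respect to $(\cdot,\cdot)_\ast$, and consequently $\mathbb{H}^{-\frac12}(\partial\wD)/\mathcal{D}$ is isometrically isomorphic to $\mathcal{D}^{\perp}$. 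It will therefore suffice to prove $\mathcal{D}^{\perp}=\mathcal{X}$.

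The central computation I would perform is, for $\psi\in\mathbb{H}^{-\frac12}(\partial\wD)$ and any generator $\psi_n^\alpha$ of $\mathcal{D}$ supplied by \Cref{lem:kernel},
\[
(\psi,\psi_n^\alpha)_\ast=-\langle\psi,\mathcal{S}^{\alpha,0}_{\wD}[\psi_n^\alpha]\rangle,
\]
using the symmetry of $\mathcal{S}^{\alpha,0}_{\wD}$. The plan is then to establish the boundary identity
\[
\mathcal{S}^{\alpha,0}_{\wD}[\psi_n^\alpha]\big|_{\partial D_{0,m}}=\delta_{nm},\quad m\in\mathbb{Z}.
\]
Once this is in hand the pairing collapses to $(\psi,\psi_n^\alpha)_\ast=-\langle\psi\rangle_n$, so $\psi\in\mathcal{D}^{\perp}$ if and only if $\langle\psi\rangle_n=0$ for every $n\in\mathbb{Z}$, which is precisely the definition of $\mathcal{X}$. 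This identification $\mathcal{D}^{\perp}=\mathcal{X}$ will close the argument.

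The hardest part will be the boundary identity $\mathcal{S}^{\alpha,0}_{\wD}[\psi_n^\alpha]\big|_{\partial D_{0,m}}=\delta_{nm}$. My approach is to compare $w:=\mathcal{S}^{\alpha,0}_{\wD}[\psi_n^\alpha]$ with a canonical constant-inside extension of $U_n^\alpha$: I would extend $U_n^\alpha$ to all of $\wY$ by setting it equal to $\delta_{nm}$ on each $D_{0,m}$. The extended function is continuous across $\partial\wD$, harmonic in $\wY\setminus\partial\wD$, quasi-periodic in $v_{_1}$, and exhibits the normal-derivative jump $\nu\cdot\nabla U_n^\alpha|_+-0=\psi_n^\alpha$. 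The function $w$ enjoys the same list of properties: continuity across $\partial\wD$ by standard single-layer theory, harmonicity in both phases, quasi-periodicity, and the same jump $\psi_n^\alpha$, where the vanishing of its interior normal derivative follows from the defining property $\psi_n^\alpha\in\mathcal{D}$ together with the jump relation \eqref{eqn:jumprela}. The difference $v:=U_n^\alpha-w$ is then harmonic on all of $\wY$ (including across $\partial\wD$), quasi-periodic with quasi-momentum $\alpha\neq 0$, and bounded as $|x_2|\to\infty$. Expanding $v(x_1,x_2)=\sum_{k\in\mathbb{Z}}a_k(x_2)\eu^{\iu(2\pi k+\alpha)x_1}$ in Fourier series in $x_1$ reduces harmonicity to $a_k''(x_2)=(2\pi k+\alpha)^2 a_k(x_2)$; since $2\pi k+\alpha\neq 0$ for every $k\in\mathbb{Z}$, boundedness forces $a_k\equiv 0$, whence $v\equiv 0$. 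This delivers $w|_{\partial D_{0,m}}=U_n^\alpha|_{\partial D_{0,m}}=\delta_{nm}$ and completes the plan.
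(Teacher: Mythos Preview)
Your argument is correct, and since the paper offers no proof beyond the sentence ``It can be proved directly that the following result holds,'' there is nothing to compare against; your route via the $(\cdot,\cdot)_\ast$--orthogonal complement is almost certainly the one the authors had in mind, given that this inner product is introduced immediately before the lemma.

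Two minor remarks. First, the Liouville step can be shortened: rather than proving $\mathcal{S}^{\alpha,0}_{\wD}[\psi_n^\alpha]\big|_{\partial\wD}=\chi_{\partial D_{0,n}}$ by hand, you can invoke the invertibility of $\mathcal{S}^{\alpha,0}_{\wD}$ (\Cref{apthm:inverti}) together with the observation in the proof of \Cref{lem:kernel} that $\mathcal{S}^{\alpha,0}_{\wD}$ maps $\mathcal{D}$ bijectively onto the piecewise-constant functions $\{\sum_n c_n\chi_{\partial D_{0,n}}:(c_n)\in\ell^2\}$. Then $(\psi,d)_\ast=-\langle\psi,\mathcal{S}^{\alpha,0}_{\wD}[d]\rangle=-\sum_n \overline{c_n(d)}\,\langle\psi\rangle_n$, and orthogonality to all of $\mathcal{D}$ is immediately equivalent to $\langle\psi\rangle_n=0$ for every $n$. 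Second, if you keep the Liouville argument as written, replace ``bounded as $|x_2|\to\infty$'' by the cleaner statement that both $U_n^\alpha$ (extended by constants inside) and $w=\mathcal{S}^{\alpha,0}_{\wD}[\psi_n^\alpha]$ lie in $\mathbb{H}^1(\wY\setminus\partial\wD)$ by \eqref{appeqn:ExteriorEstimate} and the lemma following \Cref{apthm:boundsingle}; hence $v\in\mathbb{L}^2(\wY)$, which is what the Fourier-mode ODE argument actually uses.
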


We turn to proving the coercivity of the operator $ -\frac{1}{2}\operatorname{Id} + (\mathcal{K}^{-\alpha,0}_{\wD})^\ast$. 
\begin{theorem}
	The operator $A \triangleq-\frac{1}{2}\operatorname{Id} + (\mathcal{K}^{-\alpha,0}_{\wD})^\ast$ is coercive on $\mathcal{X}$. Specifically, there exists $c>0$ such that 
	\begin{equation}
		(A\psi,\psi)_\ast \ge c \Vert \psi \Vert^2_{\mathbb{H}^{-\frac{1}{2}}(\partial \wD)},\quad \psi\in\mathcal X.
	\end{equation} 
\end{theorem}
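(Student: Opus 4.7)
The strategy is to reinterpret $(A\psi,\psi)_\ast$ as a Dirichlet-energy integral and then compare it to the full energy $(\psi,\psi)_\ast$, which by \Cref{apcoro:coercive} is already equivalent to $\|\psi\|^2_{\mathbb{H}^{-1/2}(\partial\wD)}$. Set $w\triangleq \mathcal{S}^{\alpha,0}_{\wD}[\psi]$; the jump relations give $\nu\cdot\nabla w|_-=A\psi$ and $\nu\cdot\nabla w|_+=A\psi+\psi$. Combining the self-adjointness of $\mathcal{S}^{\alpha,0}_{\wD}$ (Appendix~\ref{apsec:invert}) with Green's identity applied componentwise in each bounded inclusion $D_n$ (summability in $n$ following from the uniform boundedness of the cellwise layer operators in \Cref{apthm:boundsingle}) and in the periodic strip $\wY\setminus\overline{\wD}$ (where $\partial\wY$ contributions cancel by the quasi-periodicity of $w$), I read off the energy identities
\[
(A\psi,\psi)_\ast \;=\; \int_{\wY\setminus\overline{\wD}}|\nabla w|^2\,dx,\qquad (\psi,\psi)_\ast \;=\; \int_{\wY\setminus\partial\wD}|\nabla w|^2\,dx.
\]
Coercivity of $A$ on $\mathcal{X}$ is thus equivalent to the uniform Poincar\'e-type comparison
\[
\int_{\wY\setminus\overline{\wD}}|\nabla w|^2\,dx \;\ge\; c\int_{\wD}|\nabla w|^2\,dx
\]
for all $\psi\in\mathcal{X}$.

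The next step is to establish this comparison cellwise. By \Cref{def:decaycond2} the geometry of each $Y_{0,n}$ is uniform in $n$: every $D_n$ contains an inner ball of radius $\ge c>0$ and is separated from $\partial Y_{0,n}$ by a distance $\ge r$. On each cell, $w$ is harmonic in $D_n$ with finite Dirichlet energy in $Y_{0,n}\setminus D_n$. A harmonic-extension argument combined with the trace inequality yields an $n$-independent constant bounding $\int_{D_n}|\nabla w|^2$ by the seminorm of $w|_{\partial D_n}$ in $\mathbb{H}^{1/2}(\partial D_n)$, which in turn is controlled by $\int_{Y_{0,n}\setminus D_n}|\nabla w|^2$ provided the constant-on-$D_n$ mode of $w$ is suppressed. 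The restriction to $\mathcal{X}$ is precisely what provides this suppression: by \Cref{lem:kernel} and \Cref{aplem:mapping}, kernel elements $\psi\in\mathcal{D}$ correspond to layer potentials that are constant on each $D_n$, and the identification $\mathbb{H}^{-1/2}(\partial\wD)/\mathcal{D}\simeq\mathcal{X}$ quotients out exactly that direction in a quantitatively stable way. Summing the cellwise estimates over $n$ yields the global inequality.

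The principal obstacle is the non-compactness of $\wD$ in the $v_{_2}$-direction, which prevents the usual Fredholm-compactness route and forces every estimate to be proved quantitatively. The hard part is producing cellwise trace and extension estimates that are uniform in $n$, for which \Cref{def:decaycond2} is indispensable; one must also verify that the stable decomposition $\mathbb{H}^{-1/2}(\partial\wD) = \mathcal{D}\oplus\mathcal{X}$ actually quantifies the kernel-transversality needed for a positive lower bound. Two further ingredients enter implicitly: the invertibility of $\mathcal{S}^{\alpha,0}_{\wD}$ from \Cref{apthm:inverti}, which closes the energy identities and makes the quotient behave well; and the exponential transverse decay of $G^{\alpha,0}$ available only for $\alpha\neq 0$, which ensures that the cellwise sums converge uniformly. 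Both fail at $\alpha=0$, consistent with the breakdown of the capacitance-matrix formulation noted in the remarks.
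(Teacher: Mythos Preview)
Your energy identity for $(A\psi,\psi)_\ast$ is on the wrong side of $\partial\wD$. You correctly record $\nu\cdot\nabla w|_-=A\psi$, but pairing this \emph{interior} normal derivative with $w=\mathcal{S}^{\alpha,0}_{\wD}[\psi]$ and applying Green's identity in $\wD$ (not in $\wY\setminus\overline{\wD}$) gives
\[
(A\psi,\psi)_\ast \;=\; -\langle A\psi,\,w\rangle \;=\; -\bigl\langle \nu\cdot\nabla w|_-,\,w\bigr\rangle \;=\; \int_{\wD}|\nabla w|^2\,dx
\]
(up to the sign convention built into $(\cdot,\cdot)_\ast$), which is exactly the identity the paper uses. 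A quick sanity check confirms this: for $\psi\in\mathcal{D}=\ker A$ one has $\nu\cdot\nabla w|_-=0$, so $w$ is constant on each $D_n$ and the \emph{interior} Dirichlet energy vanishes, whereas the exterior energy $\int_{\wY\setminus\overline{\wD}}|\nabla w|^2$ does not (think of $w=U_n^\alpha$). Your formula would force the exterior energy to vanish on the kernel, which is false.

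Because of this swap, your proposed comparison
\[
\int_{\wY\setminus\overline{\wD}}|\nabla w|^2 \;\ge\; c\int_{\wD}|\nabla w|^2
\]
is the wrong direction. What coercivity actually requires is the reverse inequality $\int_{\wD}|\nabla w|^2\ge c\int_{\wY\setminus\overline{\wD}}|\nabla w|^2$ on $\mathcal{X}$, and your cellwise mechanism (bound interior harmonic energy by the $\mathbb{H}^{1/2}$ trace, then by the exterior energy) proves precisely the inequality you do \emph{not} need. The needed direction is genuinely harder because the exterior region $\wY\setminus\overline{\wD}$ is connected across cells, so a purely cellwise argument does not close; this is why the paper abandons a direct estimate and instead runs a contradiction argument via a weakly convergent minimizing sequence, using $\int_{\wD}|\nabla w_n|^2\to 0$ to force the weak limit into $\mathcal{D}\cap\mathcal{X}=\{0\}$.
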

\begin{proof}
	We argue by contradiction. Suppose that there exists a sequence $\{ \psi_{l} \}_{l\in\mathbb{Z}}$ such that 
	\[ (A\psi_l,\psi_{l})_{\ast}\to 0,\quad \Vert\psi_l \Vert_{\mathbb{H}^{-\frac{1}{2}}(\partial \wD)}=1. \]
	Subtracting the constant part in each inclusion $D_{0,n}$, one has, by Poincare's inequality,
	\begin{equation}\label{apeqn:sequencecontrol}
		\Vert \mathcal{S}^{\alpha,0}_{\wD}[\psi_{l}] - c_{n}^{(l)}\chi_{_{D_{0,n}}} \Vert_{\mathbb{L}^{2}(D_{0,n})} \le C \Vert \nabla \mathcal{S}^{\alpha,0}_{\wD}[\psi_{l}] \Vert_{\mathbb{L}^{2}(D_{0,n})}. 
	\end{equation}
	Here, the constants $\{c_{n}^{(l)}\}_{n\in\mathbb{Z}}$ are defined by 
	\[ c_{n}^{(l)} \triangleq \frac{1}{|D_{0,n}|}\int_{D_{0,n}} \mathcal{S}^{\alpha,0}_{\wD}[\psi_{l}]\du x. \]
	It can be verified from \Cref{aplem:boundedH1} that $\{c_{n}^{(l)}\}_{n\in\mathbb{Z}} \in l^2(\mathbb{Z})$. \par 
	From Green's identity, the right hand side of \eqref{apeqn:sequencecontrol} tends to zero as $n\to \infty$. Therefore, by the trace theorem(\Cref{apthm:trace}), we have  
	\[ \Big\Vert \mathcal{S}^{\alpha,0}_{\wD}[\psi_{l}] - \sum_{n\in\mathbb{Z}}c_{n}^{(l)}\chi_{_{\partial D_{0,n}}} \Big\Vert_{\mathbb{H}^{\frac{1}{2}}(\partial \wD)} \to 0.\]
	This implies that
	\[ \operatorname{dist}(\psi_{l},\mathcal{D}) \to 0,\]
	which contradicts the fact that $\{\psi_{l}\}_{l\in\mathbb{Z}}\in\mathcal{X}$, since $\mathcal{D}$ is its orthogonal complement.
\end{proof}
Therefore, we have the following invertibility theorem of $A$.
\begin{theorem}\label{apthm:invertiK}
	The operator $A = -\frac{1}{2}\operatorname{Id} + (\mathcal{K}^{-\alpha,0}_{\wD})^\ast:\mathcal{X}\to \operatorname{Im}A$ is invertible. Moreover, $\operatorname{Im}A \simeq \mathcal X$.
\end{theorem}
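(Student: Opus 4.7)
My plan is to combine the mapping property from Lemma~\ref{aplem:mapping}, the coercivity established in the previous theorem, and the Lax--Milgram theorem on the Hilbert space $(\mathcal{X},(\cdot,\cdot)_\ast)$.

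\textbf{Step 1 (Range is contained in $\mathcal{X}$).} I would first observe that $A$ actually maps the full space $\mathbb{H}^{-\frac12}(\partial\wD)$ into $\mathcal{X}$. Indeed, for any $\phi\in\mathbb{H}^{-\frac12}(\partial\wD)$ and $n\in\mathbb{Z}$, Lemma~\ref{aplem:mapping} gives
\[
\langle A\phi\rangle_n = -\tfrac12\langle\phi\rangle_n + \langle(\mathcal{K}^{-\alpha,0}_{\wD})^\ast[\phi]\rangle_n = -\tfrac12\langle\phi\rangle_n + \tfrac12\langle\phi\rangle_n = 0.
\]
In particular, the restriction $A|_{\mathcal X}$ is a bounded operator from $\mathcal{X}$ to $\mathcal{X}$. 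This immediately shows $\operatorname{Im} A \subseteq \mathcal{X}$.

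\textbf{Step 2 (Injectivity on $\mathcal X$).} Injectivity on $\mathcal{X}$ is a direct consequence of the previous coercivity theorem: if $\psi\in\mathcal{X}$ satisfies $A\psi=0$, then
\[
0 = (A\psi,\psi)_\ast \ge c\,\Vert\psi\Vert^2_{\mathbb{H}^{-\frac12}(\partial\wD)},
\]
forcing $\psi=0$.

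\textbf{Step 3 (Surjectivity onto $\mathcal X$ via Lax--Milgram).} The key step is to realize $A|_{\mathcal X}$ as the operator associated to a coercive sesquilinear form on the Hilbert space $(\mathcal{X},(\cdot,\cdot)_\ast)$. Define
\[
a(\psi,\phi) \triangleq (A\psi,\phi)_\ast, \qquad \psi,\phi\in\mathcal{X}.
\]
Boundedness of $a$ follows from the boundedness of $A$ (Theorem~\ref{apthm:boundsingle}) together with the equivalence of $(\cdot,\cdot)_\ast$ with the standard $\mathbb{H}^{-\frac12}$ inner product (Corollary~\ref{apcoro:coercive}). Coercivity of $a$ on $\mathcal{X}$ is exactly what the previous theorem asserts. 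Now, given an arbitrary $g\in\mathcal{X}$, the linear functional $\phi\mapsto(g,\phi)_\ast$ belongs to $\mathcal X^\ast$, so Lax--Milgram produces a unique $\psi\in\mathcal X$ with $(A\psi,\phi)_\ast=(g,\phi)_\ast$ for every $\phi\in\mathcal X$. Since both $A\psi$ and $g$ lie in $\mathcal X$ by Step~1, and $(\cdot,\cdot)_\ast$ is a genuine inner product on $\mathcal X$, this identity forces $A\psi=g$. Hence $A|_{\mathcal X}:\mathcal X\to\mathcal X$ is surjective, so in fact $\operatorname{Im} A = \mathcal X$, and together with Step~2, $A|_{\mathcal X}$ is a bijection; the open mapping theorem then yields a bounded inverse, giving the isomorphism $\operatorname{Im} A \simeq \mathcal X$.

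\textbf{Main obstacle.} The proof itself is short because the two essential ingredients — the identification of the range in $\mathcal X$ and the coercivity on $\mathcal X$ — have already been handled. The only subtle point, and the one I would check carefully, is the compatibility of the three inner products/norms: one needs the sesquilinear form $a$ to remain bounded and coercive with respect to the $(\cdot,\cdot)_\ast$-norm (not just the $\mathbb{H}^{-\frac12}$-norm) in order to legitimately apply Lax--Milgram on the Hilbert space $(\mathcal X,(\cdot,\cdot)_\ast)$. Both facts reduce to the norm equivalence stated in Corollary~\ref{apcoro:coercive}, but spelling this out is the step where a careless estimate could hide a genuine gap.
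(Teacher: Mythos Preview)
Your proof is correct and follows essentially the same route as the paper: use Lemma~\ref{aplem:mapping} to get $\operatorname{Im}A\subset\mathcal X$, then invoke coercivity plus Lax--Milgram on $(\mathcal X,(\cdot,\cdot)_\ast)$ to obtain surjectivity and invertibility. The only cosmetic difference is that the paper also records self-adjointness of $A$ with respect to $(\cdot,\cdot)_\ast$ via Plemelj's identity, whereas your argument correctly observes that Lax--Milgram does not require symmetry; your explicit attention to the norm equivalence from Corollary~\ref{apcoro:coercive} is in fact more careful than the paper's brief treatment.
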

\begin{proof}
	Since the operator $A$ is bounded and coercive on $\mathcal{X}$, it follows from Lax--Milgram's theorem that the operator $A$ is invertible. By Plemelj's identity, we can prove the self-adjointness of the operator with respect to the inner product $(\cdot,\cdot)_{\ast}$. \par 
	Now we prove that $\operatorname{Im}A = \mathcal X$. From \Cref{aplem:mapping}, it follows that $\operatorname{Im}A\subset \mathcal{X}$. From the coercivity of $A$ on $\mathcal{X}$, it follows that for all $g\in \mathcal X$, there exists $\psi\in\mathcal{X}$ such that $A\psi = g$. This proves the desired result.
\end{proof}

\section{Extension and trace theorem}\label{apsec:extensiontrace}
The extension theorem holds for the region $ \wD $, since each inclusion is compact. We refer to \cite{Evans2010} for a proof.
\begin{theorem}\label{appthm:Extend}
	For the region $ \wD = \cup_{n\in\mathbb{Z}} D_{0,n} $, there exists a linear operator 
	\[ E:\mathbb{H}^{1}(\wD)\to \mathbb{H}^{1}_0(V) ,\]
	such that 
	\[ u\mapsto Eu,\quad Eu(x) \equiv u(x),\quad x\in \wD,  \]
	and 
	\[ \operatorname{supp}(Eu) \subset V. \]
	Furthermore, the extended function $ Eu $ satisfies
	\[ \Vert Eu \Vert_{\mathbb{H}^{1}(V)} = \Vert Eu \Vert_{\mathbb{H}^{1}(\mathbb{R}^2)} \le C\Vert u \Vert_{\mathbb{H}^{1}(\wD)},\quad \forall u\in \mathbb{H}^1(\wD).  \]
	Here, the open set $ V $ can be chosen so that
	\[ D_{0,n} \Subset V_{n}\triangleq V\cap Y_{0,n} \Subset Y_{0,n}. \]
\end{theorem}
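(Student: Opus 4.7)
The plan is to construct the global extension by applying the classical extension theorem for smooth bounded domains (e.g.\ Evans, Ch.~5.4) to each individual inclusion $D_{0,n}$ and then gluing the pieces together using a family of cutoff functions with pairwise disjoint supports. Concretely, I would first fix $V_n$ to be an open set such that $D_{0,n}\Subset V_n\Subset Y_{0,n}$; the natural choice is the $r/2$-neighborhood of $D_{0,n}$, which is well-defined and stays strictly inside $Y_{0,n}$ by the buffer condition in Assumption~\ref{def:decaycond2}(1). Set $V=\cup_{n\in\mathbb{Z}}V_n$; by construction the $V_n$ are pairwise disjoint. Next I would pick cutoff functions $\eta_n\in C_c^\infty(V_n)$ with $\eta_n\equiv 1$ on $D_{0,n}$, chosen so that $\|\eta_n\|_{C^1}$ is bounded uniformly in $n$ (which is possible because the minimum separation $r$ is uniform in $n$).

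Then, for each $n$, apply the classical extension theorem to produce a bounded linear operator $E_n:\mathbb{H}^1(D_{0,n})\to\mathbb{H}^1(\mathbb{R}^2)$. Define
\[
Eu(x)\;\triangleq\;\sum_{n\in\mathbb{Z}}\eta_n(x)\,E_n\bigl(u|_{D_{0,n}}\bigr)(x).
\]
Because the supports of $\eta_n E_n(u|_{D_{0,n}})$ are contained in $V_n$ and the $V_n$ are pairwise disjoint, there is no overlap and $Eu$ is well-defined, satisfies $Eu\equiv u$ on $\wD$, and has $\operatorname{supp}Eu\subset V$ with zero boundary values (hence $Eu\in\mathbb{H}^1_0(V)$). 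Disjointness of the supports also gives
\[
\|Eu\|_{\mathbb{H}^1(V)}^2\;=\;\sum_{n\in\mathbb{Z}}\|\eta_n E_n(u|_{D_{0,n}})\|_{\mathbb{H}^1(V_n)}^2
\;\le\;C\sum_{n\in\mathbb{Z}}\|u\|_{\mathbb{H}^1(D_{0,n})}^2
\;=\;C\|u\|_{\mathbb{H}^1(\wD)}^2,
\]
which is the desired bound.

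The main obstacle is the need for a constant $C$ in the extension estimate that is \emph{uniform} in $n$; the classical theorem produces a constant that, in principle, depends on the $C^1$ geometry of $\partial D_{0,n}$ and on the cutoff $\eta_n$. Uniformity of the cutoff constant is handled by the uniform buffer radius $r$ from Assumption~\ref{def:decaycond2}(1), so that $\|\eta_n\|_{C^1}$ can be bounded independently of $n$. Uniformity of the constants in $E_n$ is ensured by the uniform geometric control provided by Assumption~\ref{def:decaycond2}: the inscribed ball radius $r_{m,n}\ge c$ gives a uniform interior ball condition, the buffer radius $r$ keeps $\partial D_{0,n}$ uniformly away from $\partial Y_{0,n}$, and each $D_{0,n}$ is trapped in a cell of bounded diameter $|v_1|+|v_2|$. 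Together these supply a uniform localization scale on which the standard straighten-the-boundary-and-reflect construction can be carried out with constants depending only on the common geometric parameters $(r,c,|v_1|,|v_2|)$, yielding $\sup_n\|E_n\|<\infty$ and completing the proof.
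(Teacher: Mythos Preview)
Your approach is correct and is precisely the natural way to adapt the classical extension theorem to this setting. The paper itself does not give a proof: it simply states ``The extension theorem holds for the region $\wD$, since each inclusion is compact. We refer to \cite{Evans2010} for a proof.'' Your construction---apply Evans' extension on each $D_{0,n}$, multiply by a cutoff supported in $V_n$, and sum over the disjoint pieces---is exactly how one makes this citation rigorous.

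One point deserves a caveat. Your justification that $\sup_n\|E_n\|<\infty$ follows from Assumption~\ref{def:decaycond2} is not quite complete: the uniform inscribed ball $r_{m,n}\ge c$ and the uniform buffer $r$ from the cell boundary do \emph{not} by themselves bound the Lipschitz (or $C^1$) character of $\partial D_{0,n}$, which is what the constant in Evans' construction actually depends on. A family of smooth domains can satisfy both conditions while having arbitrarily large boundary curvature. The paper is no more careful on this point; indeed, in Appendix~C it explicitly adds the blanket hypothesis ``we assume that each boundary $\partial D_n$ is regular enough such that the single layer potential and the Neumann--Poincar\'e operators on each boundary \ldots\ are uniformly bounded.'' So the honest statement is that uniformity of the extension constants requires a uniform $C^1$ (or Lipschitz) bound on the family $\{\partial D_{0,n}\}$, which the paper tacitly assumes throughout rather than derives from Assumptions~1--2. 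With that extra standing hypothesis your argument goes through cleanly.
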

Similarly, we have the following trace theorem.
\begin{theorem}\label{apthm:trace}
	For the region $ \wD $, there exists a bounded linear operator 
	\[ \operatorname{Tr}:\mathbb{H}^1(\wD) \to \mathbb{H}^{\frac{1}{2}}(\partial \wD),  \]
	such that, for all $ w\in \mathbb{H}^{1}(\wD)\cap C(\overline{\wD}) $, we have
	\[ w\mapsto \operatorname{Tr}w(y) = \left.w\right|_{\partial \wD}(y),\quad y\in \partial \wD. \]
\end{theorem}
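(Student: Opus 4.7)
The plan is to build the trace cell-by-cell and then assemble the pieces via the natural $\ell^2$-direct-sum structure of $\mathbb{H}^{1/2}(\partial\wD)$ defined in the paper. First, I would invoke the extension theorem (\Cref{appthm:Extend}) to replace any $w\in\mathbb{H}^1(\wD)$ by $Ew\in\mathbb{H}^1_0(V)$ with $\Vert Ew\Vert_{\mathbb{H}^1(V)}\le C_1\Vert w\Vert_{\mathbb{H}^1(\wD)}$, where $V=\cup_{n\in\mathbb{Z}}V_n$ and each $V_n$ satisfies $D_{0,n}\Subset V_n\Subset Y_{0,n}$. Because the $V_n$ are pairwise disjoint open sets, the restrictions $Ew|_{V_n}$ decouple and the analysis localizes to each individual cell; this step is what lets us bypass the unboundedness of $\wD$.

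On each bounded Lipschitz domain $V_n$, the classical trace theorem yields a bounded operator $\operatorname{Tr}_n\colon \mathbb{H}^1(V_n)\to\mathbb{H}^{1/2}(\partial D_{0,n})$, and I would define
\[
(\operatorname{Tr} w)\big|_{\partial D_{0,n}}\triangleq \operatorname{Tr}_n\!\bigl(Ew|_{V_n}\bigr),\qquad n\in\mathbb{Z}.
\]
Using $v_{_1}$-periodicity of $D_{\fu}$ together with the uniform geometric bounds in \Cref{def:decaycond2} (uniform separation $r$ from $\partial Y_{0,n}$ and uniform interior ball of radius at least $c$), one can extract a single constant $C_2$ with $\sup_n\Vert\operatorname{Tr}_n\Vert\le C_2$. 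Summing squares according to the paper's definition of $\Vert\cdot\Vert_{\mathbb{H}^{1/2}(\partial\wD)}$ then gives
\[
\Vert\operatorname{Tr} w\Vert^2_{\mathbb{H}^{1/2}(\partial\wD)} = \sum_{n\in\mathbb{Z}}\Vert\operatorname{Tr}_n(Ew|_{V_n})\Vert^2_{\mathbb{H}^{1/2}(\partial D_{0,n})} \le C_2^2\sum_{n\in\mathbb{Z}}\Vert Ew\Vert^2_{\mathbb{H}^1(V_n)} \le C_1^2 C_2^2\Vert w\Vert^2_{\mathbb{H}^1(\wD)},
\]
which yields the desired boundedness. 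The pointwise identity $\operatorname{Tr} w=w|_{\partial\wD}$ for $w\in\mathbb{H}^1(\wD)\cap C(\overline{\wD})$ follows by arranging the extension $Ew$ to be continuous near each $\overline{D_{0,n}}$ and then invoking the classical agreement of the local trace with the pointwise restriction on every cell.

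The main obstacle is establishing the uniformity of the local trace norms $\{\Vert\operatorname{Tr}_n\Vert\}_{n\in\mathbb{Z}}$. Periodicity in $v_{_1}$ collapses the family to a single shape, but in $v_{_2}$ the inclusions $\{D_{0,n}\}_n$ are only constrained by smoothness and the ball conditions, not by a common reference shape. The clean way to close this is a bilipschitz reduction: using the uniform radii $r$ and $c$ from \Cref{def:decaycond2}, each pair $(V_n,D_{0,n})$ admits a bilipschitz map, with constants bounded independently of $n$, into a common model neighborhood on which the standard trace inequality applies; the resulting single constant is then transported back to every cell, yielding the required uniform $C_2$.
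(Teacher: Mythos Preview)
The paper does not actually prove this statement: it merely writes ``Similarly, we have the following trace theorem'' immediately after the extension theorem (\Cref{appthm:Extend}), treating the result as a routine consequence of the classical trace theorem applied cell by cell. Your argument is therefore more detailed than what the paper offers, and its overall structure---localize to each $D_{0,n}$, apply the standard trace inequality, and reassemble via the $\ell^2$-direct-sum definition of $\mathbb{H}^{1/2}(\partial\wD)$---is exactly the intended one.

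Two comments are worth making. First, the detour through the extension operator $E$ is unnecessary: since $\wD=\bigsqcup_{n\in\mathbb{Z}}D_{0,n}$ is already a disjoint union of open sets, $\mathbb{H}^1(\wD)$ splits as the $\ell^2$-direct sum $\bigoplus_n \mathbb{H}^1(D_{0,n})$, and you may apply the local trace $\mathbb{H}^1(D_{0,n})\to\mathbb{H}^{1/2}(\partial D_{0,n})$ directly without first extending to $V_n$. Second, you correctly isolate the only substantive point, namely the uniformity $\sup_n\Vert\operatorname{Tr}_n\Vert<\infty$. However, your proposed bilipschitz reduction does not follow from the stated hypotheses: the uniform separation $r$ and uniform interior ball radius $c$ in \Cref{def:decaycond2} do not by themselves bound the Lipschitz character (or curvature) of $\partial D_{0,n}$ uniformly in $n$, so a uniform bilipschitz equivalence to a fixed model is not guaranteed. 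The paper silently assumes this kind of uniformity throughout (the same issue arises for the extension constant in \Cref{appthm:Extend}, for the constant in \eqref{appeqn:estimateelliptic}, and for the single-layer bounds in \Cref{apsec:invert}); in the applications the paper actually treats---line defects modifying only finitely many rows---the family $\{D_{0,n}\}_{n\in\mathbb{Z}}$ assumes only finitely many shapes up to translation, so the uniform bound is automatic. You should simply state this as an additional standing hypothesis rather than attempt to derive it from \Cref{def:decaycond2}.
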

Consider the interior boundary value problem for the Laplace equation 
\begin{gather}
	\left\{\begin{aligned}
		\Delta w_{\mathrm{int}}(x) &= 0,\quad x \in  \wD,\\
		w_{\mathrm{int}}(y) &= g(y),\quad y\in \partial \wD,
	\end{aligned}\right.
\end{gather}
where $ g\in \mathbb{H}^{\frac{1}{2}}(\partial \wD)$. For each inclusion $ D_{0,n}\subset \wD $, it can be shown that the weak solution satisfies
\begin{equation}\label{appeqn:estimateelliptic}
	\Vert w_{\mathrm{int}}\Vert_{\mathbb{H}^{1}(D_{0,n})} \le C \Vert g \Vert_{\mathbb{H}^{\frac{1}{2}}(\partial D_{0,n})},\quad n\in \mathbb{Z}.
\end{equation}
Here, $C$ is a positive constant independent of $n$. 
It immediately follows that 
\begin{equation}\label{appeqn:InteriorEstimate}
	\Vert w_{\mathrm{int}}\Vert_{\mathbb{H}^{1}(\wD)} \le C \Vert g \Vert_{\mathbb{H}^{\frac{1}{2}}(\partial\wD)}.
\end{equation}
Combining \Cref{appthm:Extend} and \eqref{appeqn:estimateelliptic}, we obtain the following result.
\begin{theorem}
	For the region $ \wD $, there exists a bounded linear operator \[ \mathcal{E}:\mathbb{H}^{\frac{1}{2}}(\partial \wD)\to \mathbb{H}^{1}_{0}(V), \] 
	such that 
	\[ g\mapsto \mathcal{E}g ,\quad (\operatorname{Tr}\mathcal{E})g(y) \equiv g(y),\quad y\in \partial \wD,\]
	where the set $ V $ is defined in \Cref{appthm:Extend}. Moreover, the extended function $ \mathcal{E}g $ satisfies
	\begin{equation}\label{appeqn:boundedextend}
		\Vert \mathcal{E} g \Vert_{\mathbb{H}^{1}(V)} \le C\Vert g \Vert_{\mathbb{H}^{\frac{1}{2}}}.
	\end{equation}
\end{theorem}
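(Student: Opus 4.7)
The plan is to construct $\mathcal{E}$ as the composition of two operators already available from the preceding results: the solution operator for the interior Dirichlet problem on $\wD$, and the extension operator $E:\mathbb{H}^1(\wD)\to\mathbb{H}^1_0(V)$ from \Cref{appthm:Extend}. Concretely, given $g\in\mathbb{H}^{1/2}(\partial\wD)$, I would first define $w_{\mathrm{int}}\in\mathbb{H}^1(\wD)$ to be the unique weak solution of the interior Dirichlet problem $\Delta w_{\mathrm{int}}=0$ in $\wD$ with trace $w_{\mathrm{int}}|_{\partial\wD}=g$, then set $\mathcal{E}g\triangleq E w_{\mathrm{int}}\in\mathbb{H}^1_0(V)$. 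Since the trace of $E w_{\mathrm{int}}$ on $\partial\wD$ coincides with that of $w_{\mathrm{int}}$, namely $g$, the identity $(\operatorname{Tr}\mathcal{E})g=g$ is automatic.

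For the norm estimate, I would argue in two steps. First, on each individual inclusion $D_{0,n}$, the standard a priori estimate for weak solutions of the Dirichlet problem yields \eqref{appeqn:estimateelliptic} with a constant $C$ that, by \Cref{def:decaycond2} and the fact that the geometries $D_{0,n}$ are uniformly regular (they are smooth, contain a uniformly sized ball, and are uniformly separated from $\partial Y_{0,n}$), can be chosen independent of $n$. Squaring and summing over $n\in\mathbb{Z}$ then produces the global estimate \eqref{appeqn:InteriorEstimate} on $\wD$. Second, applying the bound from \Cref{appthm:Extend} to $w_{\mathrm{int}}$ gives
\[
\Vert \mathcal{E} g\Vert_{\mathbb{H}^1(V)}=\Vert E w_{\mathrm{int}}\Vert_{\mathbb{H}^1(V)}\le C\Vert w_{\mathrm{int}}\Vert_{\mathbb{H}^1(\wD)}\le C\Vert g\Vert_{\mathbb{H}^{1/2}(\partial\wD)},
\]
which is exactly \eqref{appeqn:boundedextend}. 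Linearity and boundedness of $\mathcal{E}$ follow from the linearity and boundedness of the Dirichlet solution map and of $E$.

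The only real obstacle is the uniformity of the constant in the elliptic estimate \eqref{appeqn:estimateelliptic} over the infinite family $\{D_{0,n}\}_{n\in\mathbb{Z}}$; without this uniformity, summation over $n$ would fail. I would address it by invoking \Cref{def:decaycond2}, which guarantees a uniform interior ball of radius bounded below and a uniform distance to $\partial Y_{0,n}$, so that one can fix a finite collection of reference domains and transfer the estimate by bi-Lipschitz equivalence, yielding a single constant $C$ valid for all $n$. Once this uniform bound is in hand, the assembly into a global bound on $\wD$ and the application of the extension theorem are entirely routine, and no further compactness or Fourier argument is needed.
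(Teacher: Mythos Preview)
Your proposal is correct and matches the paper's approach exactly: the paper simply states that the result follows by combining \Cref{appthm:Extend} with the interior elliptic estimate \eqref{appeqn:estimateelliptic}, and your construction $\mathcal{E}g = E w_{\mathrm{int}}$ is precisely this composition. Your discussion of the uniformity of the constant in \eqref{appeqn:estimateelliptic} over $n$ is actually more careful than the paper, which asserts this uniformity without further comment.
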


\section{Poincar\'e's inequality}\label{apsec:poincare}
In this section, we prove the following Poincar\'e inequality on $ \wY\setminus \wD $.
\begin{lemma}\label{aplem:poincareineq}
	For $ w\in \mathbb{H}^1_0(\wY\setminus \wD) $, the following inequality holds:
	\[ \Vert w \Vert_{\mathbb{L}^2(\wY\setminus \wD)} \le C \Vert \nabla w \Vert_{\mathbb{L}^2(\wY\setminus \wD)}.   \]
\end{lemma}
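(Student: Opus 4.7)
The strategy is to prove a cell-by-cell Poincar\'e inequality with a constant that is uniform in $n$, and then sum the contributions.

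First, since $\widetilde{Y}\setminus\widetilde{D}$ is the disjoint union (up to a measure-zero set) of the regions $Y_{0,n}\setminus D_{0,n}$, $n\in\mathbb{Z}$, the global norms split as
\[ \Vert w \Vert_{\mathbb{L}^2(\widetilde{Y}\setminus\widetilde{D})}^2 = \sum_{n\in\mathbb{Z}} \Vert w \Vert_{\mathbb{L}^2(Y_{0,n}\setminus D_{0,n})}^2, \quad \Vert \nabla w \Vert_{\mathbb{L}^2(\widetilde{Y}\setminus\widetilde{D})}^2 = \sum_{n\in\mathbb{Z}} \Vert \nabla w \Vert_{\mathbb{L}^2(Y_{0,n}\setminus D_{0,n})}^2. \]
Since $w\in\mathbb{H}^1_0(\widetilde{Y}\setminus\widetilde{D})$ has vanishing trace on $\partial\widetilde{D}$, each restriction $w|_{Y_{0,n}\setminus D_{0,n}}$ satisfies $w|_{\partial D_{0,n}}=0$ in the trace sense. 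The task therefore reduces to establishing
\[ \Vert u \Vert_{\mathbb{L}^2(Y_{0,n}\setminus D_{0,n})} \le C\, \Vert \nabla u \Vert_{\mathbb{L}^2(Y_{0,n}\setminus D_{0,n})} \]
for every $u\in\mathbb{H}^1(Y_{0,n}\setminus D_{0,n})$ with $u|_{\partial D_{0,n}}=0$, with $C$ independent of $n$.

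Next, I would exploit translation invariance and \Cref{def:decaycond2}. Because the centres satisfy $x_{0,n}-nv_{_2}=x_{0,0}$ and $r_{0,n}\ge c$, translating by $-nv_{_2}$ maps $D_{0,n}$ onto a set $B_n\Subset Y_{0,0}$ that contains the \emph{fixed} ball $B(x_{0,0},c)$. Extending the translated function $\widetilde{u}=u(\,\cdot\,+nv_{_2})$ by zero across $\partial B_n$ produces $\widehat{u}\in\mathbb{H}^1(Y_{0,0})$; the extension is genuinely $\mathbb{H}^1$ because the trace of $\widetilde{u}$ vanishes on $\partial B_n$, and evidently $\Vert\widehat{u}\Vert_{\mathbb{L}^2(Y_{0,0})}=\Vert u\Vert_{\mathbb{L}^2(Y_{0,n}\setminus D_{0,n})}$ and $\Vert\nabla\widehat{u}\Vert_{\mathbb{L}^2(Y_{0,0})}=\Vert\nabla u\Vert_{\mathbb{L}^2(Y_{0,n}\setminus D_{0,n})}$.

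The key ingredient is then a fixed-domain Poincar\'e inequality on $Y_{0,0}$: every $v\in\mathbb{H}^1(Y_{0,0})$ that vanishes almost everywhere on a set $E\subset Y_{0,0}$ with $|E|\ge m>0$ satisfies $\Vert v\Vert_{\mathbb{L}^2(Y_{0,0})}\le C(m)\Vert\nabla v\Vert_{\mathbb{L}^2(Y_{0,0})}$. Writing $v=(v-\overline{v})+\overline{v}$ with $\overline{v}=|Y_{0,0}|^{-1}\int_{Y_{0,0}}v$, the classical Poincar\'e--Wirtinger inequality gives $\Vert v-\overline{v}\Vert_{\mathbb{L}^2}\le C_1\Vert\nabla v\Vert_{\mathbb{L}^2}$, and the vanishing of $v$ on $E$ yields
\[ m\,|\overline{v}|^2 \;=\; \int_E |v-\overline{v}|^2 \;\le\; C_1^2 \Vert \nabla v \Vert_{\mathbb{L}^2}^2. \]
Combining these bounds gives $\Vert v\Vert_{\mathbb{L}^2}^2 \le 2C_1^2\bigl(1+|Y_{0,0}|/m\bigr)\Vert\nabla v\Vert_{\mathbb{L}^2}^2$. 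Applying this to $v=\widehat{u}$ and $E=B(x_{0,0},c)$, so that $m=\pi c^2$, yields a constant depending only on $c$ and $|Y_{0,0}|$, hence independent of $n$. Summing the cell estimates produces the desired global inequality. The main obstacle is precisely the uniformity of the Poincar\'e constant across cells: it would fail if the inclusions were allowed to shrink or migrate toward $\partial Y_{0,n}$, but \Cref{def:decaycond2}, which supplies a uniform interior ball at the fixed centre $x_{m,n}$, is exactly what rules this out, since only the measure of the zero set enters the Poincar\'e--Wirtinger step and not the fine geometry of $\partial D_{0,n}$.
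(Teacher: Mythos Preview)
Your argument is correct and takes a genuinely different route from the paper's own proof. The paper argues directly via a one-dimensional line integral: for $w\in C^\infty_c(\widetilde Y\setminus\widetilde D)$ and any $x$, it picks a nearest point $y\in\partial\widetilde D$ (whose distance is bounded by a uniform constant $C_d$ since every cell contains an inclusion), writes $w(x)=\int_y^x l_x\cdot\nabla w\,dl$, applies Cauchy--Schwarz on the segment, and then integrates over $\widetilde Y\setminus\widetilde D$ with a Fubini-type step to obtain the bound with constant $C_d^2$.

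By contrast, you localize cell by cell, extend by zero across $\partial D_{0,n}$ into the inclusion, translate to the fixed parallelogram $Y_{0,0}$, and invoke a Poincar\'e--Wirtinger inequality for $\mathbb{H}^1$ functions vanishing on a set of measure at least $\pi c^2$. This is slightly longer but has two advantages: it yields an explicit constant depending only on $|Y_{0,0}|$ and the uniform inner-ball radius $c$ from \Cref{def:decaycond2}, and it uses only the vanishing of the trace on $\partial\widetilde D$ (not on $\partial\widetilde Y$), so it in fact proves a marginally stronger statement. It also sidesteps the somewhat informal change of order of integration in the paper's ray argument. The paper's proof, on the other hand, is shorter and does not actually require the inner-ball part of \Cref{def:decaycond2}; non-emptiness of each $D_{0,n}$ already guarantees the uniform distance bound $C_d$.
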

\begin{proof}
	For all $ x\in Y_{0,n}\setminus D_{0,n} \subset \wY\setminus \wD $, it holds that 
	\[ \operatorname{dist}(x,\partial \wD) = \operatorname{dist}(x,\bigcup_{j=-1,0,1}\partial \wD_{0,n+j}) \le C_{d} <+\infty.  \]
	Therefore, there exists a point $ y\in \partial \wD $ such that $ \operatorname{dist}(x,\partial \wD) = \operatorname{dist}(x,y) $. \par 
	For $ w\in C^{\infty}_{c}(\wY\setminus \wD) $, it follows that
	\[ w(x)=\int_{y}^{x} l_x\cdot \nabla w \,dl .  \]
	Here, $ l_x = (x-y)/(\Vert x-y \Vert) $. It follows that 
	\[ |w(x)|^2 \le  C_d \int_{y}^{x} |l_x\cdot \nabla w|^2 \,dl. \]
	Integrating both sides over $ \wY\setminus \wD $, we obtain
	\begin{align*}
		\int_{\wY\setminus \wD}|w(x)|^2\du x &\le  C_d \int_{y}^{x} \int_{\wY\setminus \wD}| \nabla w|^2\du x \,dl\\
		&\le C_d^2 \int_{\wY\setminus \wD}| \nabla w|^2\du x.
	\end{align*}
	This completes the proof. 
	
\end{proof}

Now, we investigate the exterior boundary value problem with quasi-periodic boundary condition, given Dirichlet data $ g\in \mathbb{H}^{\frac{1}{2}}(\partial \wD) $:
\begin{gather}
	\left\{\begin{aligned}
		&\Delta w^{\alpha}_{\mathrm{ext}}(x) = 0,\quad x \in \wY\setminus \wD,\\
		&w^\alpha_{\mathrm{ext}}(y) = g(y),\quad y\in \partial \wD,\\
		&w^{\alpha}_{\mathrm{ext}}(x+mv_{_1}) = \eu^{\iu m \alpha}w^{\alpha}_{\mathrm{ext}}(x),\quad x\in \wY\setminus \wD.
	\end{aligned}\right.
\end{gather}
By Poincar\'e's inequality, the existence of a weak solution follows. 
Moreover, this solution satisfies 
\begin{equation}\label{appeqn:ExteriorEstimate}
	\Vert w^{\alpha}_{\mathrm{ext}}\Vert_{\mathbb{H}^{1}(\wY\setminus \wD)} \le C \Vert g \Vert_{\mathbb{H}^{\frac{1}{2}}(\partial \wD)}.
\end{equation}
Combining \eqref{appeqn:InteriorEstimate}, we conclude that the harmonic function on $ \wY\setminus \partial \wD$ given by 
\begin{gather}
	w \triangleq \left\{\begin{aligned}
		&w_{\mathrm{int}} ,\quad x \in  \wD,\\
		&w_{\mathrm{ext}}^{\alpha} ,\quad x\in  \wY\setminus \wD,
	\end{aligned}\right.
\end{gather}
satisfies 
\begin{equation}\label{apeqn:estimaete_bdry}
	\Vert  w \Vert_{\mathbb{H}^{1}(\wY\setminus \partial \wD)} \le C \Vert g \Vert_{\mathbb{H}^{\frac{1}{2}}(\partial \wD)}.  
\end{equation}
In this article, we also invoke the following form of Poincar\'e's inequality.
\begin{theorem}
	For any bounded region $\Omega$, the following estimate holds for $u\in\mathbb{H}^{1}(\Omega)$:
	\[ \Big\Vert u-\frac{1}{|\Omega|}\int_{\omega} u\du x \Big\Vert_{\mathbb{L}^2(\Omega)}\le C \Vert \nabla u \Vert_{\mathbb{L}^2(\Omega)}. \]
\end{theorem}

\section*{Acknowledgment}
This work is partially supported by the National Key R\&D Program of China grant number 2021YFA0719200 and 2024YFA1016000. This work is also supported by the Fundamental Research Funds for the Central Universities grant number 226-2025-00192. It was completed while the first author was visiting the Hong Kong Institute for Advanced Study as a Senior Fellow.

\bibliographystyle{amsplain}
\bibliography{references3}

@article{TB0,
author = {Goringe, C.M. and Bowler, B.D. and Hernández, E},
title = {Tight-binding modelling of materials},
journal = {Reports on Progress in Physics},
year = {1997},
volume = {60},
number = {12},
pages = {1447},
}

@article{ammari2023convergence,
  title={Convergence rates for defect modes in large finite resonator arrays},
  author={Ammari, Habib and Davies, Bryn and Hiltunen, Erik Orvehed},
  journal={SIAM Journal on Mathematical Analysis},
  volume={55},
  number={6},
  pages={7616--7634},
  year={2023},
  publisher={SIAM}
}

@article{TB1,
      title={Tight-binding photonics}, 
      author={Jing Li and Aodong Li and Yutao Chen and Tao Xiao and Renwen Huang and Xiaolu Zhuo and Jun Guan and Zhen Gao and Peng Zhan and Minghui Lu and Biye Xie},
      year={2025},
      journal={arXiv preprint, arXiv:2508.04465},
}

@article{TB2,
  title = {Tight-Binding Parametrization for Photonic Band Gap Materials},
  author = {Lidorikis, E. and Sigalas, M. M. and Economou, E. N. and Soukoulis, C. M.},
  journal = {Phys. Rev. Lett.},
  volume = {81},
  issue = {7},
  pages = {1405--1408},
  year = {1998},
}

@article{TB3,
    author = {Xu, Hao and He, Qiong and Xiao, Shiyi and Xi, Bin and Hao, Jiaming and Zhou, Lei},
    title = {Tight-binding analysis of coupling effects in metamaterials},
    journal = {Journal of Applied Physics},
    volume = {109},
    number = {2},
    pages = {023103},
    year = {2011},
}

@article{TB4,
author={Cheben, Pavel and  Halir, Robert and Schmid, Jens H. and  Atwater, Harry A. and Smith, David R.},
title={Subwavelength integrated photonics},
year={2018},
journal={Nature},
pages={565--572},
volume={560},
}

@Article{phononic1,
  author  = {Liu, Z. and Zhang, X. and Mao, Y. and Zhu, Y.Y. and Yang, Z. and Chan, C.T. and Sheng, P.},
  journal = {Science},
  title   = {Locally resonant sonic materials},
  year    = {2000},
  number  = {5485},
  pages   = {1734--1736},
  volume  = {289},
}

@article {figotin,
    AUTHOR = {Figotin, Alexander and Klein, Abel},
     TITLE = {Localized classical waves created by defects},
   JOURNAL = {J. Statist. Phys.},
  FJOURNAL = {Journal of Statistical Physics},
    VOLUME = {86},
      YEAR = {1997},
    NUMBER = {1-2},
     PAGES = {165--177},
       DOI = {10.1007/BF02180202},
       URL = {https://doi.org/10.1007/BF02180202},
}

@Article{brown4,
  author   = {Brown, B. M. and Hoang, V. and Plum, M. and Radosz, M. and Wood, I.},
  journal  = {J. Lond. Math. Soc. (2)},
  title    = {Gap localization of {TE}-modes by arbitrarily weak defects},
  year     = {2017},
  number   = {3},
  pages    = {942--962},
  volume   = {95},
  doi      = {10.1112/jlms.12046},
  fjournal = {Journal of the London Mathematical Society. Second Series},
  url      = {https://doi.org/10.1112/jlms.12046},
}

@Article{defect1,
  author   = {Sorger, Volker J. and Ye, Ziliang and Oulton, Rupert F. and Wang, Yuan and Bartal, Guy and Yin, Xiaobo and Zhang, Xiang},
  journal  = {Nat. Commun.},
  title    = {Experimental demonstration of low-loss optical waveguiding at deep sub-wavelength scales},
  year     = {2011},
  pages    = {331},
  volume   = {2},
  fjournal = {Nature Communications},
}

@Article{defect2,
  author   = {Fabrice Lemoult and Nadège Kaina and Mathias Fink and Geoffroy Lerosey},
  journal  = {Nat. Phys.},
  title    = {Wave propagation control at the deep subwavelength scale in metamaterials},
  year     = {2013},
  pages    = {55--60},
  volume   = {9},
  fjournal = {Nature Physics},
}

@article{defect3,
	AUTHOR = {Lemoult, Fabrice and Kaina, Nadège and Fink, Mathias and Lerosey, Geoffroy},
	TITLE = {Soda Cans Metamaterial: A Subwavelength-Scaled Phononic Crystal},
	JOURNAL = {Crystals},
	VOLUME = {6},
	YEAR = {2016},
	NUMBER = {7},
	URL = {http://www.mdpi.com/2073-4352/6/7/82},
	DOI = {10.3390/cryst6070082}
}

@article {santosa,
    AUTHOR = {Ammari, Habib and Santosa, Fadil},
     TITLE = {Guided waves in a photonic bandgap structure with a line
              defect},
   JOURNAL = {SIAM J. Appl. Math.},
  FJOURNAL = {SIAM Journal on Applied Mathematics},
    VOLUME = {64},
      YEAR = {2004},
    NUMBER = {6},
     PAGES = {2018--2033},
       DOI = {10.1137/S0036139902404025},
       URL = {https://doi.org/10.1137/S0036139902404025},
}

@Article{brown1,
  author   = {Brown, B. M. and Hoang, V. and Plum, M. and Wood, I.},
  journal  = {Proc. R. Soc. A},
  title    = {On the spectrum of waveguides in planar photonic bandgap structures},
  year     = {2015},
  number   = {2176},
  pages    = {20140673, 20},
  volume   = {471},
  doi      = {10.1098/rspa.2014.0673},
  fjournal = {Proceedings A},
  url      = {https://doi.org/10.1098/rspa.2014.0673},
}

@article {brown2,
    AUTHOR = {Brown, B. M. and Hoang, V. and Plum, M. and Wood, I.},
     TITLE = {Spectrum created by line defects in periodic structures},
   JOURNAL = {Math. Nachr.},
  FJOURNAL = {Mathematische Nachrichten},
    VOLUME = {287},
      YEAR = {2014},
    NUMBER = {17-18},
     PAGES = {1972--1985},
       DOI = {10.1002/mana.201300165},
       URL = {https://doi.org/10.1002/mana.201300165},
}

@article {brown3,
    AUTHOR = {Hoang, Vu and Radosz, Maria},
     TITLE = {Absence of bound states for waveguides in two-dimensional
              periodic structures},
   JOURNAL = {J. Math. Phys.},
  FJOURNAL = {Journal of Mathematical Physics},
    VOLUME = {55},
      YEAR = {2014},
    NUMBER = {3},
     PAGES = {033506, 20},
       DOI = {10.1063/1.4868480},
       URL = {https://doi.org/10.1063/1.4868480},
}

@article {fliss1,
    AUTHOR = {Delourme, B\'erang\`ere and Fliss, Sonia},
     TITLE = {Guided modes in a hexagonal periodic graph like domain},
   JOURNAL = {Multiscale Model. Simul.},
  FJOURNAL = {Multiscale Modeling \& Simulation. A SIAM Interdisciplinary
              Journal},
    VOLUME = {22},
      YEAR = {2024},
    NUMBER = {3},
     PAGES = {1196--1245},
       DOI = {10.1137/23M1600177},
       URL = {https://doi.org/10.1137/23M1600177},
}

@article {fliss2,
    AUTHOR = {Amenoagbadji, Pierre and Fliss, Sonia and Joly, Patrick},
     TITLE = {Time-harmonic wave propagation in junctions of two periodic
              half-spaces},
   JOURNAL = {Pure Appl. Anal.},
  FJOURNAL = {Pure and Applied Analysis},
    VOLUME = {7},
      YEAR = {2025},
    NUMBER = {2},
     PAGES = {299--357},
       DOI = {10.2140/paa.2025.7.299},
       URL = {https://doi.org/10.2140/paa.2025.7.299},
}

@article {fliss3,
    AUTHOR = {Delourme, B\'erang\`ere and Joly, Patrick and Vasilevskaya,
              Elizaveta},
     TITLE = {Existence of guided waves due to a lineic perturbation of a
              3{D} periodic medium},
   JOURNAL = {Appl. Math. Lett.},
  FJOURNAL = {Applied Mathematics Letters. An International Journal of Rapid
              Publication},
    VOLUME = {69},
      YEAR = {2017},
     PAGES = {146--152},
       DOI = {10.1016/j.aml.2016.11.017},
       URL = {https://doi.org/10.1016/j.aml.2016.11.017},
}

@InCollection{ong1,
  author    = {Kuchment, Peter and Ong, Beng-Seong},
  booktitle = {Operator theory and its applications},
  publisher = {Amer. Math. Soc.},
  title     = {On guided electromagnetic waves in photonic crystal waveguides},
  year      = {2010},
  address   = {Providence, Rhode Island},
  pages     = {99--108},
  series    = {Amer. Math. Soc. Transl. Ser. 2},
  volume    = {231},
  doi       = {10.1090/trans2/231/09},
  url       = {https://doi.org/10.1090/trans2/231/09},
}

@InCollection{ong2,
  author    = {Kuchment, Peter and Ong, BengSeong},
  booktitle = {Waves in periodic and random media ({S}outh {H}adley, {MA}, 2002)},
  publisher = {Amer. Math. Soc.},
  title     = {On guided waves in photonic crystal waveguides},
  year      = {2003},
  address   = {Providence, Rhode Island},
  pages     = {105--115},
  series    = {Contemp. Math.},
  volume    = {339},
  doi       = {10.1090/conm/339/06102},
  url       = {https://doi.org/10.1090/conm/339/06102},
}

@Article{paa,
  author   = {Ammari, Habib and Davies, Bryn and Hiltunen, Erik Orvehed},
  journal  = {Pure Appl. Anal.},
  title    = {Functional analytic methods for discrete approximations of subwavelength resonator systems},
  year     = {2024},
  number   = {3},
  pages    = {873--939},
  volume   = {6},
  doi      = {10.2140/paa.2024.6.873},
  fjournal = {Pure and Applied Analysis},
  url      = {https://doi.org/10.2140/paa.2024.6.873},
}

@book{cbms, 
title={Mathematical theories for metamaterials:  From condensed matter theory to subwavelength physics}, 
series={NSF--CBMS Regional Conf. Ser.}, 
author = {Ammari, Habib and Davies, Bryn and Hiltunen, Erik Orvehed},
publisher={American Mathematical Society},
address   = {Providence, Rhode Island},
year={2026},
}

@Article{2dssh,
  author   = {Coutant, Antonin and Achilleos, Vassos and Richoux, Olivier and Theocharis, Georgios and Pagneux, Vincent},
  journal  = {J. Appl. Phys.},
  title    = {Topological two-dimensional {S}u–{S}chrieffer–{H}eeger analog acoustic networks: Total reflection at corners and corner induced modes},
  year     = {2021},
  number   = {12},
  pages    = {125108},
  volume   = {129},
  abstract = {In this work, we investigate some aspects of an acoustic analog of the two-dimensional Su–Schrieffer–Heeger model. The system is composed of alternating cross-sectional tubes connected in a square network, which in the limit of narrow tubes is described by a discrete model coinciding with the two-dimensional Su–Schrieffer–Heeger model. This model is known to host topological edge waves, and we develop a scattering theory to analyze how these waves scatter on edge structure changes. We show that these edge waves undergo a perfect reflection when scattering on a corner, incidentally leading to a new way of constructing corner modes. It is shown that reflection is high for a broad class of edge changes such as steps or defects. We then study the consequences of this high reflectivity on finite networks. Globally, it appears that each straight part of the edges, separated by corners or defects, hosts localized edge modes isolated from their neighborhood.},
  fjournal = {Journal of Applied Physics},
}

@Article{effective-source,
  author  = {Wilcox, S. and Botten, L. C. and McPhedran, R. C. and Poulton, C. G. and de Sterke, C. M.},
  journal = {Phys. Rev. E},
  title   = {Modeling of defect modes in photonic crystals using the fictitious source superposition method},
  year    = {2005},
  pages   = {056606},
  volume  = {71},
}

@Article{jems2021,
  author   = {Ammari, Habib and Hiltunen, Erik Orvehed and Yu, Sanghyeon},
  journal  = {J. Eur. Math. Soc. (JEMS)},
  title    = {Subwavelength guided modes for acoustic waves in bubbly crystals with a line defect},
  year     = {2022},
  number   = {7},
  pages    = {2279--2313},
  volume   = {24},
  doi      = {10.4171/jems/1126},
  fjournal = {Journal of the European Mathematical Society (JEMS)},
}

@article {integrated-photonics,
    AUTHOR = {Cheben, Pavel and Halir, Robert and Schmid, Jens H. and Atwater, Harry A.
and Smith, David R.},
     TITLE = {Subwavelength integrated photonics},
   JOURNAL  = {Nature},
    VOLUME = {560},
      YEAR = {2018},
     PAGES = {565--572},

}

@misc{ammari2025analysisnonlinearresonancesresonator,
      title={Analysis of nonlinear resonances in resonator crystals: Tight-binding approximation and existence of subwavelength soliton-like solutions}, 
      author={Habib Ammari and Jiayu Qiu},
      year={2025},
      eprint={2509.04184},
      archivePrefix={arXiv},
      url={https://arxiv.org/abs/2509.04184}, 
}

@Article{Ammari2023,
  author   = {Ammari, Habib and Davies, Bryn and Hiltunen, Erik Orvehed},
  journal  = {Commun. Math. Phys.},
  title    = {{A}nderson localization in the subwavelength regime},
  year     = {2024},
  number   = {1},
  pages    = {Paper No. 1, 20},
  volume   = {405},
  fjournal = {Communications in Mathematical Physics},
}

@Article{Ablowitz_2012a,
  author    = {Mark J. Ablowitz and Christopher W. Curtis and Yi Zhu},
  journal   = {Stud. Appl. Math.},
  title     = {On Tight-Binding Approximations in Optical Lattices},
  year      = {2012},
  month     = jul,
  number    = {4},
  pages     = {362--388},
  volume    = {129},
  fjournal  = {Studies in Applied Mathematics},
  publisher = {Wiley},
}

@Article{Ammari_2017,
  author    = {Habib Ammari and Brian Fitzpatrick and Hyundae Lee and Sanghyeon Yu and Hai Zhang},
  journal   = {J. Differ. Equations},
  title     = {Subwavelength phononic bandgap opening in bubbly media},
  year      = {2017},
  month     = nov,
  number    = {9},
  pages     = {5610--5629},
  volume    = {263},
  fjournal  = {Journal of Differential Equations},
  publisher = {Elsevier {BV}},
}

@Article{Ammari_2021,
  author    = {Habib Ammari and Bryn Davies and Erik Orvehed Hiltunen and Hyundae Lee and Sanghyeon Yu},
  journal   = {J. Math. Phys.},
  title     = {Bound states in the continuum and {F}ano resonances in subwavelength resonator arrays},
  year      = {2021},
  month     = oct,
  number    = {10},
  pages     = {101506},
  volume    = {62},
  fjournal  = {Journal of Mathematical Physics},
  publisher = {{AIP} Publishing},
}

@Article{Ammari_2022,
  author    = {Habib Ammari and Jinghao Cao},
  journal   = {Proc. R. Soc. A},
  title     = {Unidirectional edge modes in~time-modulated metamaterials},
  year      = {2022},
  month     = sep,
  number    = {2265},
  volume    = {478},
  publisher = {The Royal Society},
}

@Article{Ammari2020a,
  author    = {Habib Ammari and Bryn Davies and Erik Orvehed Hiltunen and Hyundae Lee and Sanghyeon Yu},
  journal   = {{SIAM} J. Math. Anal.},
  title     = {Exceptional Points in Parity--Time-Symmetric Subwavelength Metamaterials},
  year      = {2022},
  month     = nov,
  number    = {6},
  pages     = {6223--6253},
  volume    = {54},
  publisher = {Society for Industrial {\&} Applied Mathematics ({SIAM})},
}

@Article{Ammari2020b,
  author    = {Habib Ammari and Bryn Davies and Erik Orvehed Hiltunen and Sanghyeon Yu},
  journal   = {J. Math. Pures Appl.},
  title     = {Topologically protected edge modes in one-dimensional chains of subwavelength resonators},
  year      = {2020},
  month     = dec,
  pages     = {17--49},
  volume    = {144},
  publisher = {Elsevier {BV}},
}

@Book{Ammari2018,
  author    = {Ammari, Habib and Fitzpatrick, Brian and Kang, Hyeonbae and Ruiz, Matias and Yu, Sanghyeon and Zhang, Hai},
  publisher = {American Mathematical Society},
  title     = {Mathematical and Computational Methods in Photonics and Phononics},
  year      = {2018},
  address   = {Providence, Rhode Island},
  isbn      = {9781470449094},
  doi       = {10.1090/surv/235},
  issn      = {2331-7159},
  journal   = {Mathematical Surveys and Monographs},
}

@Book{Evans2010,
  author    = {Evans, Lawrence},
  publisher = {American Mathematical Society},
  title     = {Partial Differential Equations},
  year      = {2010},
  address   = {Providence, Rhode Island},
  isbn      = {9781470411442},
  doi       = {10.1090/gsm/019},
  issn      = {1065-7339},
  journal   = {Graduate Studies in Mathematics},
}

@Article{Su1979,
  author    = {Su, W. P. and Schrieffer, J. R. and Heeger, A. J.},
  journal   = {Phys. Rev. Lett.},
  title     = {Solitons in Polyacetylene},
  year      = {1979},
  issn      = {0031-9007},
  month     = jun,
  number    = {25},
  pages     = {1698--1701},
  volume    = {42},
  doi       = {10.1103/physrevlett.42.1698},
  fjournal  = {Physical Review Letters},
  publisher = {American Physical Society (APS)},
}

@Article{Marzari1997,
  author    = {Marzari, Nicola and Vanderbilt, David},
  journal   = {Phys. Rev. B},
  title     = {Maximally localized generalized {Wannier} functions for composite energy bands},
  year      = {1997},
  issn      = {1095-3795},
  month     = nov,
  number    = {20},
  pages     = {12847--12865},
  volume    = {56},
  doi       = {10.1103/physrevb.56.12847},
  fjournal  = {Physical Review B},
  publisher = {American Physical Society (APS)},
}

@Article{Pelinovsky2010,
  author    = {Pelinovsky, Dmitry and Schneider, Guido},
  journal   = {J. Differ. Equations},
  title     = {Bounds on the tight-binding approximation for the {Gross–Pitaevskii} equation with a periodic potential},
  year      = {2010},
  issn      = {0022-0396},
  month     = feb,
  number    = {4},
  pages     = {837--849},
  volume    = {248},
  doi       = {10.1016/j.jde.2009.11.014},
  fjournal  = {Journal of Differential Equations},
  publisher = {Elsevier BV},
}

@Article{Pelinovsky2008,
  author    = {Pelinovsky, Dmitry and Schneider, Guido and MacKay, Robert S.},
  journal   = {Commun. Math. Phys.},
  title     = {Justification of the Lattice Equation for a Nonlinear Elliptic Problem with a Periodic Potential},
  year      = {2008},
  issn      = {1432-0916},
  month     = oct,
  number    = {3},
  pages     = {803--831},
  volume    = {284},
  doi       = {10.1007/s00220-008-0640-0},
  fjournal  = {Communications in Mathematical Physics},
  publisher = {Springer Science and Business Media LLC},
}

@Article{Miao2024,
  author    = {Miao, Borui and Zhu, Yi},
  journal   = {SIAM J. Appl. Math.},
  title     = {Generalized Honeycomb-Structured Materials in the Subwavelength Regime},
  year      = {2024},
  issn      = {1095-712X},
  month     = may,
  number    = {3},
  pages     = {1116--1139},
  volume    = {84},
  doi       = {10.1137/23m1559786},
  fjournal  = {SIAM Journal on Applied Mathematics},
  publisher = {Society for Industrial & Applied Mathematics (SIAM)},
}

@Article{Fefferman2025,
  author   = {Charles L. Fefferman and Jacob Shapiro and Michael I. Weinstein},
  journal  = {Proc. Natl. Acad. Sci.},
  title    = {Quantum tunneling and its absence in deep wells and strong magnetic fields},
  year     = {2025},
  number   = {8},
  pages    = {e2420062122},
  volume   = {122},
  doi      = {10.1073/pnas.2420062122},
  eprint   = {https://www.pnas.org/doi/pdf/10.1073/pnas.2420062122},
  fjournal = {Proceedings of the National Academy of Sciences},
  url      = {https://www.pnas.org/doi/abs/10.1073/pnas.2420062122},
}

@Article{Soussi2005,
  author    = {Soussi, Sofiane},
  journal   = {SIAM Journal on Numerical Analysis},
  title     = {Convergence of the Supercell Method for Defect Modes Calculations in Photonic Crystals},
  year      = {2005},
  issn      = {1095-7170},
  month     = jan,
  number    = {3},
  pages     = {1175--1201},
  volume    = {43},
  doi       = {10.1137/040616875},
  publisher = {Society for Industrial & Applied Mathematics (SIAM)},
}

@Article{Guo2021,
  author    = {Guo, Hailong and Yang, Xu and Zhu, Yi},
  journal   = {Journal of Scientific Computing},
  title     = {Unfitted Nitsche’s Method for Computing Wave Modes in Topological Materials},
  year      = {2021},
  issn      = {1573-7691},
  month     = jun,
  number    = {1},
  volume    = {88},
  doi       = {10.1007/s10915-021-01540-w},
  publisher = {Springer Science and Business Media LLC},
}

@Article{Guo2021a,
  author    = {Guo, Hailong and Yang, Xu and Zhu, Yi},
  journal   = {Computer Methods in Applied Mechanics and Engineering},
  title     = {Unfitted Nitsche’s method for computing band structures of phononic crystals with periodic inclusions},
  year      = {2021},
  issn      = {0045-7825},
  month     = jul,
  pages     = {113743},
  volume    = {380},
  doi       = {10.1016/j.cma.2021.113743},
  publisher = {Elsevier BV},
}

@Article{Barnett2010,
  author    = {Barnett, Alex and Greengard, Leslie},
  journal   = {Journal of Computational Physics},
  title     = {A new integral representation for quasi-periodic fields and its application to two-dimensional band structure calculations},
  year      = {2010},
  issn      = {0021-9991},
  month     = sep,
  number    = {19},
  pages     = {6898--6914},
  volume    = {229},
  doi       = {10.1016/j.jcp.2010.05.029},
  publisher = {Elsevier BV},
}

@Article{Jin2022,
  author    = {Jin, Shi},
  journal   = {Acta Numerica},
  title     = {Asymptotic-preserving schemes for multiscale physical problems},
  year      = {2022},
  issn      = {1474-0508},
  month     = may,
  pages     = {415--489},
  volume    = {31},
  doi       = {10.1017/s0962492922000010},
  publisher = {Cambridge University Press (CUP)},
}

\end{document}